\renewcommand{\arraystretch}{1.2}
\newcommand{\dd}[2]{\frac{\mathrm{d} #1}{\mathrm{d} #2}}
\newcommand{\p}{\partial}
\newcommand{\const}{\mathop{\rm const}\nolimits}
\newcommand{\rank}{\mathop{\rm rank}\nolimits}
\newcommand{\todo}[1][\null]{\ensuremath{\clubsuit}}
\newcommand{\checked}[1][\null]{\ensuremath{\boldsymbol{\surd}}}
\newtheorem{theorem}{Theorem}
\newtheorem{lemma}{Lemma}
\newtheorem{corollary}{Corollary}
\newtheorem{proposition}{Proposition}
{\theoremstyle{definition}
\newtheorem{definition}{Definition}
\newtheorem{example}{Example}
\newtheorem*{example*}{Example}
\newtheorem{remark}{Remark}
\newtheorem*{remark*}{Remark}
}
\newcommand{\lsemioplus}{\mathbin{\mbox{$\lefteqn{\hspace{.77ex}\rule{.4pt}{1.2ex}}{\in}$}}}
\newcommand{\DD}{\mathrm{D}}
\newcommand{\vv}{\mathbf{v}}
\newcommand{\ww}{\mathbf{w}}
\newcommand{\ve}{\varepsilon}
\newcommand{\Ad}{\mathrm{Ad}}
\newcommand{\DDD}{\mathcal{D}}
\newcommand{\GG}{\mathcal{G}}
\begin{document}

\par\noindent {\LARGE\bf
Enhanced preliminary group classification\\  of a class of generalized diffusion equations
\par}
{\vspace{4mm}\par\noindent {\bf Elsa Dos Santos Cardoso--Bihlo$^\dag$, Alexander Bihlo$^\dag$ and Roman O. Popovych$^\dag\, ^\ddag$
} \par\vspace{2mm}\par}

{\vspace{2mm}\par\noindent {\it
$^{\dag}$~Faculty of Mathematics, University of Vienna, Nordbergstra{\ss}e 15, A-1090 Vienna, Austria\\
}}
{\noindent \vspace{2mm}{\it
$\phantom{^\dag}$~\textup{E-mail}: elsa.cardoso@univie.ac.at, alexander.bihlo@univie.ac.at
}\par}

{\vspace{2mm}\par\noindent {\it
$^\ddag$~Institute of Mathematics of NAS of Ukraine, 3 Tereshchenkivska Str., 01601 Kyiv, Ukraine\\
}}
{\noindent \vspace{2mm}{\it
$\phantom{^\dag}$~\textup{E-mail}: rop@imath.kiev.ua
}\par}

\vspace{2mm}\par\noindent\hspace*{8mm}\parbox{140mm}{\small
The method of preliminary group classification is rigorously defined, enhanced and related to the theory of group classification of differential equations. Typical weaknesses in papers on this method are discussed and strategies to overcome them are presented. The preliminary group classification of the class of generalized diffusion equations of the form $u_t=f(x,u)u_x^2+g(x,u)u_{xx}$ is carried out. This includes a justification for applying this method to the given class, the simultaneous computation of the equivalence algebra and equivalence (pseudo)group, as well as the classification of inequivalent appropriate subalgebras of the whole infinite-dimensional equivalence algebra. The extensions of the kernel algebra, which are induced by such subalgebras, are exhaustively described. These results improve those recently published in \textit{Commun.\ Nonlinear Sci.\ Numer.\ Simul.}

}\par\vspace{2mm}

\section{Introduction}

Group classification of differential equations is an efficient tool for investigating symmetry properties of classes of differential equations. These are differential equations that include arbitrary constants or functions of the independent and dependent variables as well as of derivatives of the dependent variables up to a certain order. It is known for a long time that depending on the value of these arbitrary elements the resulting differential equations from the given class can have different Lie invariance groups. The first examples of group classification were presented by Sophus Lie for the class of second order linear partial differential equations~\cite{lie81Ay} and the class of second order ordinary differential equations~\cite{lie91Ay}. Later, Ovsiannikov began the rigorous development of the theory of group classification~\cite{ovsi82Ay}. In short, the solution of the group classification problem consists in finding the kernel of Lie invariance groups (i.e.\ those Lie symmetries that are admitted for all values of the arbitrary elements) and all inequivalent extensions of Lie invariance groups with respect to the kernel group. 
The equivalence involved means the similarity of equations up to transformations from a certain equivalence group (e.g.\ usual, generalized or conditional equivalence), see~\cite{popo10Ay} for more detailed information.

For classes of differential equations being of simple structure (e.g., ones parameterized only by constants or functions of the same single argument), the corresponding group classification problems can be completely solved via compatibility analysis and explicit integration of the determining equations for Lie symmetries depending on values of the arbitrary elements and up to the equivalence chosen. Complete group classification can also be carried out for classes of differential equations possessing the normalization property. The algebraic method of classification effectively works for such classes. See the next section and also~\cite{popo10Ay,vane09Ay} for a more comprehensive review on different methods of group classifications.

In the situation where the class depends in a more complicated way on its arbitrary elements, it may happen that both the determining equations are too difficult to be directly solved and the application of the algebraic method does not give the exhaustive solution. In this case, however, at least a partial solution of the group classification problem, known as \emph{preliminary group classification}, is possible. The basic idea of preliminary group classification is to study only those extensions of the kernel group that are induced by the transformations from the corresponding equivalence group. The problem of finding inequivalent cases of such Lie symmetry extensions then reduces to the classification of inequivalent subgroups (resp.\ algebras) of the equivalence group (resp.\ algebra). This approach was first described in~\cite{akha91Ay} and became prominent due to the paper~\cite{ibra91Ay}. 

Despite the approach of preliminary group classification is rather common, it is not well developed up to now. The basic mechanisms were formulated in~\cite{ibra91Ay} as two propositions for the specific class of nonlinear wave equations of the form $v_{tt}=f(x,v_x)v_{xx}+g(x,v_x)$ and were later adopted in other papers for respective classes of equations. In the present paper, we state a stronger version of these propositions for general classes of differential equations. Another weakness commonly observed is that, when the equivalence algebra~$\mathfrak g^\sim$ of the class of equations under consideration is infinite dimensional, only Lie symmetry extensions induced by subalgebras of a finite-dimensional subalgebra~$\mathfrak g^\sim_0$ of~$\mathfrak g^\sim$ are investigated, without giving any sound justification for the choice of~$\mathfrak g^\sim_0$. In fact, this restriction is needless as it is possible to classify subalgebras of infinite-dimensional algebras in much the same way as subalgebras of finite-dimensional algebras
\cite{basa01Ay,bihl10Ay,bihl09Ay,bihl10Dy,fush94Ay,lahn06Ay,maga93Ay,popo10Cy,popo10Ay,zhda99Ay}. 
It can even be simpler to classify low-dimensional subalgebras of the whole infinite-dimensional equivalence algebra~$\mathfrak g^\sim$ as the adjoint action related to~$\mathfrak g^\sim$ is more powerful and allows for greater simplification than the adjoint action corresponding to the finite-dimensional subalgebra~$\mathfrak g^\sim_0$. One more common weakness in papers on the subject is that usually only extensions induced by one-dimensional subalgebras of  equivalence algebras are studied. Moreover, these one-dimensional subalgebras (of a finite-dimensional subalgebra~$\mathfrak g^\sim_0$ of~$\mathfrak g^\sim$) are classified only with respect to the restricted equivalence relation which is generated by the adjoint representation of~$\mathfrak g^\sim_0$. This leads to an overly large number of inequivalent subalgebras compared to the list of one-dimensional subalgebras that would be obtainable if the classification was done using the adjoint representation of the entire~$\mathfrak g^\sim$.

In the present paper, we comprehensively carry out preliminary group classification for the class of $(1+1)$-dimensional second order quasilinear evolution equations of the general form
\begin{gather}\label{eq:GenDiffEqs}
 \Delta = u_t - f(x,u)u_x^2 - g(x,u) u_{xx} = 0,
\end{gather}
where~$f$ and~$g$ are arbitrary smooth functions of~$x$ and~$u$, and $g\ne0$.
The class~\eqref{eq:GenDiffEqs} was considered in the recent paper~\cite{nadj10Ay} but results obtained therein are not correct.
It is reviewed above that there are a number of typical weaknesses in papers on preliminary group classification, and results of~\cite{nadj10Ay} properly illustrate these weaknesses, cf.\ the first paragraphs of Sections~\ref{sec:EquivalenceAlgebra} and~\ref{sec:ClassificationOfSubalgebras} and Remark~\ref{rem:OnNadj10Ay}. This is why we aim to accurately present the revised preliminary group classification of the class~\eqref{eq:GenDiffEqs} and to give all calculations in considerable detail.

The class~\eqref{eq:GenDiffEqs} was considered in~\cite{nadj10Ay} as a class of generalized Burgers equations as it includes the \emph{potential Burgers equation} as a particular element for the choice~$f=g=1$. This class also contains $(1+1)$-dimensional linear evolution equations, which correspond to the values~$f=0$ and~$g$ not depending on~$u$. As a prominent example for a linear differential equation, one can recover the linear heat equation by choosing $f=0$ and $g=1$. 
An important subclass of the class~\eqref{eq:GenDiffEqs} is the class of $(1+1)$-dimensional nonlinear diffusion equations of the general form $u_t=(F(u)u_x)_x$, where $F\ne0$. It is singled from the class~\eqref{eq:GenDiffEqs} by the constraints $g_x=0$ and $f=g_u$. 
Moreover, any equation of the form~\eqref{eq:GenDiffEqs} with $f_x=g_x=0$ is reduced to a diffusion equation by a simple point transformation acting only on the dependent variable~$u$.  
The solution of the group classification problem for this class by Ovsiannikov \cite{ovsi59Ay} (see also \cite{akha91Ay,ovsi82Ay})
gave rise to the development of modern group analysis. 

The class~\eqref{eq:GenDiffEqs} is included in the wider class of equations $u_t = F(t,x,u,u_x)u_{xx}+G(t,x,u,u_x)$, for which the complete group classification was carried out in~\cite{basa01Ay} by a method similar to that applied in the present paper. The fact that this class is normalized (cf.\ Section~\ref{sec:EnhancedMethodOfPreliminaryGroupClassification}) played a crucial role in the entire consideration in~\cite{basa01Ay}. However, as this class is essentially wider than the class~\eqref{eq:GenDiffEqs}, the corresponding equivalence algebras are rather different. This is why the results of~\cite{basa01Ay} cannot be directly used for deriving the group classification of the class~\eqref{eq:GenDiffEqs}.

It should also be stressed that it is not natural to exclude linear differential equations from the present consideration. In fact, there are equations in the class~\eqref{eq:GenDiffEqs} which are linearized by point transformations from the equivalence group $G^\sim$ of this class. The most prominent example of such a transformation in the above class is the transformation of the potential Burgers equation to the linear heat equation by means of the point transformation $\tilde u=e^u$~\cite[p.~122]{olve86Ay}. That is, $u$ is a solution of the potential Burgers equation whenever $\tilde u$ is a solution of the linear heat equation. In the course of preliminary group classification of the class~\eqref{eq:GenDiffEqs} we encounter other examples of linearizable equations. Furthermore, the equivalence algebra~$\mathfrak g^\sim_0$  of the subclass of~\eqref{eq:GenDiffEqs}, which is compliment to the subclass of linear equations and, therefore, singled out by the constraint $f^2+g_u\!{}^2\ne0$, is much narrower than the equivalence algebra~$\mathfrak g^\sim$ of the entire class~\eqref{eq:GenDiffEqs}. 
More precisely, the algebra~$\mathfrak g^\sim_0$ is singled out as a subalgebra of~$\mathfrak g^\sim$ by the constraint~$h_{uu}=0$, cf.\ Theorem~\ref{thm:EquivalenceAlgebra}.

The further organization of this paper is the following. The subsequent Section~\ref{sec:EnhancedMethodOfPreliminaryGroupClassification} discusses the theory of preliminary group classification. We generalize and extend assertions presented in~\cite{ibra91Ay} and formulate them rigorously using the modern language of group analysis. In Section~\ref{sec:DeterminingEquationsOfLieSymmetries} we derive the determining equations for Lie point symmetries of equations from the class~\eqref{eq:GenDiffEqs} and find the corresponding kernel of Lie invariance algebras. The equivalence algebra~$\mathfrak g^\sim$ and the equivalence group~$G^\sim$ of the class~\eqref{eq:GenDiffEqs} is computed in Sections~\ref{sec:EquivalenceAlgebra} and~\ref{sec:EquivalenceGroup}, respectively.  Throughout the paper, by the equivalence group we mean the Lie pseudo-group of point equivalence transformations (i.e., local equivalence diffeomorphisms), cf.\ \cite{olve09By} and references therein for theory of pseudo-groups. The reason for carrying out preliminary group classification is elucidated. In Section~\ref{sec:ClassificationOfSubalgebras}, we classify inequivalent one- and two-dimensional subalgebras of the essential subalgebra of~$\mathfrak g^\sim$. The corresponding inequivalent cases of symmetries extensions of the kernel algebra are presented in Section~\ref{sec:PreliminaryGroupClassification} and supplemented with three- and four-dimensional extensions via the classification of all appropriate subalgebras of~$\mathfrak g^\sim$. The paper concludes with a short summary and further comments in Section~\ref{sec:Conclusion}.

\section{Enhanced method of preliminary group classification}\label{sec:EnhancedMethodOfPreliminaryGroupClassification}

By now, the method of preliminary group classification was neither explained for general classes of differential equations nor properly related to the general group classification problem. This should be done first in this section before we study the preliminary group classification of~\eqref{eq:GenDiffEqs}. For this aim, we need a few notions of the theory of group classifications, which can be found in the recent paper~\cite{popo10Ay}.

The most essential notion concerns the formal definition of \emph{classes of differential equations}. In general, a class (of systems) of differential equations is given by a system of $l$ differential equations of the form $L(x,u_{(p)},\theta(x,u_{(p)}))=0$ in $m$ dependent variables $u=(u^1,\dots,u^m)$ and $n$ independent variables $x=(x_1,\dots,x_n)$, where $u_{(p)}$ denotes the set of $u$'s and all their derivatives up to order $p$. The differential functions $\theta(x,u_{(p)})=(\theta^1(x,u_{(p)}),\dots,\theta^k(x,u_{(p)}))$ denote a tuple of $k$ arbitrary elements that parameterize the given class of differential equations. The tuple $\theta$ is usually constrained to satisfy a system~$\mathcal S$ of auxiliary conditions, $S(x,u_{(p)},\theta_{(q)}(x,u_{(p)}))=0$, in which $x$ and $u_{(p)}$ are regarded as independent variables. The set of solutions of this auxiliary system will also be denoted by~$\mathcal S$. In addition, this set can be further constrained by satisfying one or more nonvanishing conditions $\Sigma(x,u_{(p)},\theta_{(q)}(x,u_{(p)}))\ne0$. Putting together all these notions, we denote the class of differential equations with the arbitrary element running through the set $\mathcal S$ by $\mathcal L|_{\mathcal S}$. The single elements of this class are denoted by $\mathcal L_\theta$, respectively.

Specifically, for the class~\eqref{eq:GenDiffEqs} we have $\theta=(f,g)$, and the arbitrary elements~$f$ and~$g$ depend only on~$x$ and~$u$. 
Therefore, the associated auxiliary system~$\mathcal S$ is formed by the equations
\begin{gather*}
f_t=f_{u_t}=f_{u_x}=f_{u_{tt}}=f_{u_{tx}}=f_{u_{xx}}=0,\\
g_t=g_{u_t}=g_{u_x}=g_{u_{tt}}=g_{u_{tx}}=g_{u_{xx}}=0.
\end{gather*}
The auxiliary conditions $f_t=0$ and $g_t=0$ play a special role. All the other auxiliary conditions can be taken into account implicitly. 
The nonvanishing condition associated with the class~\eqref{eq:GenDiffEqs} is $g\ne0$, i.e., we have $\Sigma=g$. 
The condition $g\ne0$ should be explicitly included in the definition of the class~\eqref{eq:GenDiffEqs} since equations of the same form with $g=0$  
are of another (first) order, possess completely different transformational properties and are not related to equations with $g\ne0$ by point or other reasonable transformations.

Having properly defined classes of differential equations, it remains to introduce the notion of admissible transformations and normalized classes of differential equations in order to explain the general strategy of (preliminary) group classification.

\begin{definition}
The set of admissible transformations in the class $\mathcal L|_{\mathcal S}$ is given by $\mathrm T(\mathcal L|_{\mathcal S})=\{(\theta,\tilde \theta, \varphi)\ |\ \theta,\tilde \theta\in\mathcal S, \varphi \in\mathrm T(\theta,\tilde \theta)\}$, where $\mathrm T(\theta,\tilde \theta)$ denotes the set of point transformations that map the system $\mathcal L_\theta$ to the system $\mathcal L_{\tilde \theta}$.
\end{definition}

The set of admissible transformations can be used for many issues related to the problem of group classification. It can be considered as an extension or generalization of the equivalence group of a class of differential equations. Indeed, the usual equivalence group~$G^\sim$ of a class $\mathcal L|_{\mathcal S}$ is naturally embedded in the set of admissible transformations. In particular, it is given by the admissible transformations $(\theta,\Phi\theta,\Phi|_{(x,u)})$, where $\Phi$ is an equivalence transformation, i.e.\ $\forall \theta\in\mathcal S\colon \Phi\theta\in\mathcal S$. In this last tuple, $\Phi|_{(x,u)}\in\mathrm{T}(\theta,\Phi\theta)$ denotes the projection of $\Phi$ to the space of variables $x,u$.
The maximal point symmetry group $G_\theta$ of the system $\mathcal L_\theta$ coincides with the set of admissible transformations from $\mathcal L_\theta$ to itself, i.e., $G_\theta=\mathrm T(\theta,\theta)$.

Important properties of classes of differential equations, relevant for the problem of group classification, are given by different kinds of normalization 
with respect to point (resp.\ contact) transformations~\cite{popo06Ay,popo10Ay}.  

\begin{definition}
The class of differential equations $\mathcal L|_{\mathcal S}$ is \emph{normalized} in the usual sense if any admissible transformation is induced by a transformation of the (usual) equivalence group, i.e.\ $\forall (\theta,\tilde \theta,\varphi)$, $\exists \Phi\in G^\sim\colon \tilde \theta = \Phi\theta$ and $\varphi=\Phi|_{(x,u)}$.
\end{definition}

Denote by $\mathfrak g_\theta$ the maximal Lie invariance algebra of the equation $\mathcal L_\theta$.
Using the above notations it is possible to obtain the general picture of the group classification problem. The first step in order to carry out group classification is the determination of the \emph{kernel}~$\mathfrak g^\cap$ (i.e., intersection) of maximal Lie invariance algebras of systems from the class~$\mathcal L|_{\mathcal S}$. The kernel is found by deriving the determining equations of Lie symmetries and splitting with respect to both derivatives with respect to $u$ and the arbitrary elements~$\theta$. This gives those part of the maximal Lie invariance algebra~$\mathfrak g_\theta$ that is admitted for any value of~$\theta$. The subsequent step consists of determining the equivalence group~$G^\sim$ (resp.\ the equivalence algebra~$\mathfrak g^\sim$) of the class~$\mathcal L|_{\mathcal S}$. The equivalence group~$G^\sim$ is needed since it generates a natural equivalence relation on cases of symmetry extension of the kernel and hence they should be studied up to this equivalence. The final task is to describe all inequivalent cases of symmetry extension, i.e., values of~$\theta$ for which $\mathfrak g_\theta\ne\mathfrak g^\cap$.

For the implementation of the above classification program, several special techniques have been developed. They either lead to the \emph{complete group classification} or to a \emph{preliminary group classification} of the given class.

Complete group classification is often possible for normalized classes of differential equations. For such classes, symmetry extensions of the kernel algebra can only be induced by vector fields from the corresponding equivalence algebra. This reduces the group classification problem to the algebraic problem of classifying inequivalent subalgebras of the equivalence algebra. This is why we refer to this method as the algebraic method. Results on complete group classification of various classes of differential equations can be found, e.g.\ in \cite{akha91Ay,basa01Ay,lahn06Ay,popo10Cy,popo10Ay,popo08Ay,popo10By,zhda99Ay}. 
The equations studied in these papers all possess the normalization property.

Another method leading to the complete solution of the group classification problem consists of a compatibility analysis and direct integration of the determining equations of Lie symmetries of the given class~\cite{akha91Ay,ibra94Ay,ivan10Ay,niki05Ay,niki06Ay,niki07Ay,niki01Ay,ovsi82Ay,vane09Ay}. It was indicated in the introduction that it is often only for rather simple classes that this method works.

\emph{Complete preliminary group classification} employs essentially the same techniques that are used for complete group classification within the framework of the algebraic method. The main difference is that the underlying class does not possess the normalization property. This implies the existence of extensions of the kernel algebra that are not induced by subalgebras of the equivalence algebra. In turn, for normalized classes of differential equations the results of complete preliminary group classification and complete group classification coincide~\cite{popo06Ay,popo10Ay}.

In most papers on preliminary group classification only a partial solution of the corresponding problems is achieved since usually not the whole equivalence algebra is used for an investigation of cases of symmetry extensions. This is why we refer to this method as the method of \emph{partial preliminary group classification}. It is the most incomplete and heuristic method of group classification, as there are often no obvious criteria which subalgebras of the equivalence algebra to single out for an investigations of symmetry extensions of the kernel algebra. Results on partial preliminary group classification are presented, e.g., in~\cite{akha91Ay,ibra91Ay,nadj10Ay,song09Ay}.

On the side of complete group classification, the theoretical background was already settled~\cite{lisl92Ay,ovsi82Ay} and extended~\cite{popo06Ay,popo10Ay}. It remains to detail the framework of preliminary group classification. In its essence, it rests on the following two propositions, which were first formulated without proof in~\cite{ibra91Ay} for the class of equations investigated. We present an enhanced version of these propositions for general classes of differential equations.

\begin{proposition}\label{pro:OnPreliminaryGroupClassification1}
Let $\mathfrak a$ be a subalgebra of the equivalence algebra~$\mathfrak g^\sim$ of the class $\mathcal L|_{\mathcal S}$, $\mathfrak a\subset\mathfrak g^\sim$, and let $\theta^0(x,u_{(r)})\in\mathcal S$ be a value of the tuple of arbitrary elements~$\theta$ for which the algebraic equation $\theta=\theta^0(x,u_{(r)})$ is invariant with respect to~$\mathfrak a$. Then the differential equation $\mathcal L_{\theta^0}$ is invariant with respect to the projection of $\mathfrak a$ to the space of variables $(x,u)$.
\end{proposition}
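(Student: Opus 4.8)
The plan is to unwind the definitions of the objects involved and then exploit the infinitesimal characterization of invariance on both sides. Let $Q\in\mathfrak a$ be an arbitrary element of the subalgebra, written in the natural coordinates on the space of the variables $(x,u)$ and the arbitrary elements $\theta$ as
\begin{gather*}
Q=\xi^i(x,u)\p_{x_i}+\eta^a(x,u)\p_{u^a}+\zeta^\alpha(x,u,\theta)\p_{\theta^\alpha},
\end{gather*}
where the $(x,u)$-components are free of $\theta$ because $Q$ is an element of an equivalence algebra; denote by $\bar Q=\xi^i\p_{x_i}+\eta^a\p_{u^a}$ its projection to the space of the variables $(x,u)$, and by $Q_{(r)}$ the standard prolongation of $\bar Q$ to the jet variables $u_{(r)}$ that occur as arguments of $\theta^0$. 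First I would record what ``the algebraic equation $\theta=\theta^0(x,u_{(r)})$ is invariant with respect to $\mathfrak a$'' means infinitesimally: for every $Q\in\mathfrak a$,
\begin{gather*}
\bigl(\zeta^\alpha-Q_{(r)}\theta^{0\alpha}\bigr)\Big|_{\theta=\theta^0}=0 ,
\end{gather*}
i.e.\ on the manifold cut out by $\theta=\theta^0(x,u_{(r)})$ the $\theta$-components of $Q$ agree with the action of the prolonged projection on $\theta^0$.

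Next I would bring in the structure of the equivalence algebra itself. By definition an element of $\mathfrak g^\sim$ is the infinitesimal generator of a one-parameter subgroup of $G^\sim$, and transformations in $G^\sim$ by construction map every equation $\mathcal L_\theta$ of the class to another equation $\mathcal L_{\tilde\theta}$ of the class with $\tilde\theta=\Phi\theta$. Differentiating this property along the flow of $Q$ gives the infinitesimal invariance of the class: the prolongation $Q^{(p)}$ of $Q$ to the full jet space, including prolongation in the $\theta$-directions according to their dependence on $(x,u)$, annihilates $L(x,u_{(p)},\theta(x,u_{(p)}))$ on the solution manifold of the joint system $\{L=0\}\cup\mathcal S$. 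Concretely, this yields an identity of the schematic form
\begin{gather*}
Q^{(p)} L\big|_{L=0}
=\bar Q^{(p)}_x L
+\frac{\p L}{\p\theta^\alpha}\,\zeta^\alpha_{*}
\;\equiv\;0 ,
\end{gather*}
where $\bar Q^{(p)}_x$ is the usual prolongation of $\bar Q$ acting on $x,u_{(p)}$ and $\zeta^\alpha_{*}$ is the induced transformation of $\theta^\alpha$ treated, as in the definition of $\mathcal S$, as a function of $x$ and $u_{(p)}$ — so $\zeta^\alpha_{*}$ is precisely the total-derivative prolongation of $\zeta^\alpha(x,u,\theta)$ along the substitution $\theta=\theta(x,u_{(p)})$.

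Now I would combine the two facts. Substitute $\theta=\theta^0(x,u_{(r)})$ into the identity $Q^{(p)}L\equiv 0$: the left-hand side becomes the result of applying $\bar Q^{(p)}_x+\frac{\p L}{\p\theta^\alpha}\zeta^\alpha_*$ to $L(x,u_{(p)},\theta^0(x,u_{(r)}))$, which is exactly the defining expression $\Delta=0$ of the equation $\mathcal L_{\theta^0}$. On the manifold $\theta=\theta^0$, the invariance-of-the-algebraic-equation relation lets me replace each $\zeta^\alpha$ by $Q_{(r)}\theta^{0\alpha}$, and after accounting for total differentiation (which commutes with this substitution by the chain rule) the combination $\frac{\p L}{\p\theta^\alpha}\zeta^\alpha_*$ together with $\bar Q^{(p)}_x$ reassembles into the ordinary prolongation of $\bar Q$ acting on $L(x,u_{(p)},\theta^0(x,u_{(r)}))$. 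Hence $\bar Q^{(p)}\Delta\big|_{\Delta=0}=0$ for every $Q\in\mathfrak a$, which is exactly the infinitesimal invariance criterion for $\mathcal L_{\theta^0}$ under the projection of $\mathfrak a$. Since $\mathfrak a$ is a subalgebra, the projected vector fields span a Lie algebra, so $\mathcal L_{\theta^0}$ is invariant with respect to the projection of $\mathfrak a$, as claimed.

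The step I expect to require the most care is the chain-rule bookkeeping in the last paragraph: making precise that the induced prolongation $\zeta^\alpha_*$ of the $\theta$-components, when restricted to $\theta=\theta^0(x,u_{(r)})$, together with the ``bare'' prolonged projection $\bar Q^{(p)}_x$, recombines into the genuine prolongation $\bar Q^{(p)}$ of the projected field acting on the \emph{composite} function $L(x,u_{(p)},\theta^0(x,u_{(r)}))$. This is essentially the statement that prolongation is compatible with the substitution of the arbitrary elements by a fixed $\mathfrak a$-invariant value, and it is where the hypothesis that the \emph{algebraic} equation $\theta=\theta^0$ (not merely $\theta^0$ as a point) be $\mathfrak a$-invariant is actually used. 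Everything else is a direct transcription of the definitions of equivalence algebra, of $G^\sim$ mapping the class into itself, and of the infinitesimal invariance criterion.
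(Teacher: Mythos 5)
Your proof is correct, but it follows a genuinely different route from the paper's. The paper argues at the level of finite transformations: it integrates an arbitrary $Q\in\mathfrak a$ to the one-parameter group $G_1=\{\mathcal T\}$, observes that invariance of the algebraic equation $\theta=\theta^0(x,u_{(r)})$ under $G_1$ means each $\mathcal T$ maps $\mathcal L_{\theta^0}$ to itself, hence the projection ${\rm P}\mathcal T$ is a point symmetry of $\mathcal L_{\theta^0}$, and the generator of ${\rm P}G_1$ --- the projection of $Q$ --- therefore lies in the Lie invariance algebra. You instead stay entirely infinitesimal: you write the invariance of the graph of $\theta^0$ as $\zeta^\alpha=Q_{(r)}\theta^{0\alpha}$ on $\theta=\theta^0$, invoke the prolonged invariance identity for the class satisfied by equivalence vector fields, and recombine via the chain rule into $\bar Q^{(p)}\Delta\big|_{\Delta=0}=0$ for the composite function $\Delta=L(x,u_{(p)},\theta^0(x,u_{(r)}))$. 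Both are sound. The paper's argument is shorter, needs no prolongation bookkeeping, and transparently reduces to the definition of equivalence transformations; it also avoids the (standard but nontrivial) passage from the infinitesimal criterion back to actual invariance, which your final step implicitly relies on. Your version, in exchange, makes explicit exactly where the hypothesis that the \emph{algebraic equation} (not just the function $\theta^0$) is $\mathfrak a$-invariant enters --- namely in the substitution $\zeta^\alpha\mapsto Q_{(r)}\theta^{0\alpha}$ --- and it ties the proposition directly to the determining-equation computations carried out later in the paper. Your closing remark that the projected fields form a Lie algebra is not actually needed for the stated conclusion (invariance of $\mathcal L_{\theta^0}$ under each projected generator suffices, and projection of projectable vector fields is a Lie algebra homomorphism in any case), but it is harmless.
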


\begin{proof}
Choose an arbitrary operator~$Q$ from $\mathfrak a$ and consider the one-parameter group~$G_1$ generated by this operator. As the equation $\theta=\theta^0(x,u_{(r)})$ is invariant with respect to~$G_1$, any transformation~$\mathcal T$ from~$G_1$ maps the corresponding equation $\mathcal L_{\theta^0}$ from the class $\mathcal L|_{\mathcal S}$ to itself. This means that the projection ${\rm P}\mathcal T$ of~$\mathcal T$ to the space of variables $(x,u)$ is a point symmetry of~$\mathcal L_{\theta^0}$. Therefore, the projection ${\rm P}G_1$ of~$G_1$ is a point symmetry group of ~$\mathcal L_{\theta^0}$ and its generator, which is the projection of the operator~$Q$, belongs to the Lie invariance algebra of~$\mathcal L_{\theta^0}$.
\end{proof}

\begin{proposition}\label{pro:OnPreliminaryGroupClassification2}\looseness=-1
Let $\mathcal S_i$ be the subset of~$\mathcal S$ that consists of all arbitrary elements for which the corresponding algebraic equations are invariant with respect to the same subalgebra of the equivalence algebra~$\mathfrak g^\sim$ and
let $\mathfrak a_i$ be the maximal subalgebra of~$\mathfrak g^\sim$ for which $\mathcal S_i$ satisfies this property, $i=1,2$.
Then the subalgebras~$\mathfrak a_1$ and~$\mathfrak a_2$ are equivalent with respect to the adjoint action of~$G^\sim$ if and only if the subsets~$\mathcal S_1$ and~$\mathcal S_2$ are mapped to each other by transformations from~$G^\sim$.
\end{proposition}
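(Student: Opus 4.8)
The plan is to derive the statement from a single equivariance lemma relating the action of the equivalence group $G^\sim$ on the set $\mathcal S$ of arbitrary elements to its adjoint action on $\mathfrak g^\sim$. Realize $\mathfrak g^\sim$ as a Lie algebra of vector fields on the extended space with coordinates $(x,u_{(r)},\theta)$, and let each $\Phi\in G^\sim$ act on this space by prolonging its projection $\Phi|_{(x,u)}$ to the $(x,u_{(r)})$-variables and by $\theta\mapsto\Phi\theta$ on the $\theta$-variables. Under this action $\Phi$ carries the manifold $M_{\theta^0}:=\{\theta=\theta^0(x,u_{(r)})\}$ onto $M_{\Phi\theta^0}$ --- this is in fact the coordinate-free formulation of the fact that $G^\sim$ sends the arbitrary element $\theta^0$ to $\Phi\theta^0$ --- and, since the adjoint action on $\mathfrak g^\sim$ is the push-forward of vector fields under this action, $\Phi$ conjugates the flow of $Q\in\mathfrak g^\sim$ to the flow of $\Ad(\Phi)Q$. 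The key lemma I would then prove is: for $Q\in\mathfrak g^\sim$ and $\Phi\in G^\sim$, the equation $\theta=\theta^0$ is invariant with respect to $Q$ if and only if $\theta=\Phi\theta^0$ is invariant with respect to $\Ad(\Phi)Q$. The argument is the same one-parameter-group reasoning as in the proof of Proposition~\ref{pro:OnPreliminaryGroupClassification1}: if the group $e^{sQ}$ preserves $M_{\theta^0}$, then $\Phi\circ e^{sQ}\circ\Phi^{-1}=e^{s\,\Ad(\Phi)Q}$ preserves $\Phi(M_{\theta^0})=M_{\Phi\theta^0}$, and conversely, replacing $\Phi$ by $\Phi^{-1}$. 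Intersecting over a family of arbitrary elements, and using that tangency of a vector field to a fixed manifold is preserved under the Lie bracket (so that for any $\mathcal T\subseteq\mathcal S$ the set $\mathfrak a_{\mathcal T}$ of all $Q\in\mathfrak g^\sim$ leaving every equation $\theta=\theta^0$, $\theta^0\in\mathcal T$, invariant is genuinely the maximal such subalgebra), the lemma upgrades to the equivariance identity $\Ad(\Phi)\mathfrak a_{\mathcal T}=\mathfrak a_{\Phi(\mathcal T)}$ for all $\Phi\in G^\sim$ and all $\mathcal T\subseteq\mathcal S$.

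Next I would unwind the hypotheses to see that the pairs $(\mathfrak a_i,\mathcal S_i)$ are ``closed'': by construction $\mathcal S_i$ is the set of all $\theta^0\in\mathcal S$ for which $\theta=\theta^0$ is invariant with respect to a fixed subalgebra contained in $\mathfrak a_i$, so each element of $\mathcal S_i$ yields an $\mathfrak a_i$-invariant equation; and $\mathfrak a_i$ is by definition the maximal subalgebra with that property. Hence $\mathfrak a_i=\mathfrak a_{\mathcal S_i}$, and moreover $\mathcal S_i$ coincides with the full set of arbitrary elements producing an $\mathfrak a_i$-invariant equation. With this in hand the proof is immediate. For the forward direction, assume $\mathfrak a_2=\Ad(\Phi)\mathfrak a_1$ for some $\Phi\in G^\sim$; for $\theta^0\in\mathcal S_1$ the equation $\theta=\theta^0$ is $\mathfrak a_1$-invariant, so by the lemma $\theta=\Phi\theta^0$ is $\Ad(\Phi)\mathfrak a_1=\mathfrak a_2$-invariant, whence $\Phi\theta^0\in\mathcal S_2$; thus $\Phi(\mathcal S_1)\subseteq\mathcal S_2$, and applying the same to $\Phi^{-1}$ and $\mathfrak a_1=\Ad(\Phi^{-1})\mathfrak a_2$ gives the reverse inclusion, so $\Phi(\mathcal S_1)=\mathcal S_2$. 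For the converse, assume $\Phi(\mathcal S_1)=\mathcal S_2$; then by the equivariance identity $\Ad(\Phi)\mathfrak a_1=\Ad(\Phi)\mathfrak a_{\mathcal S_1}=\mathfrak a_{\Phi(\mathcal S_1)}=\mathfrak a_{\mathcal S_2}=\mathfrak a_2$, which is exactly the claimed equivalence of $\mathfrak a_1$ and $\mathfrak a_2$ under the adjoint action.

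I expect the real work to be in making the equivariance lemma precise rather than in the bookkeeping that follows it. One has to fix the action of $G^\sim$ on the extended space of arbitrary elements, check that it sends $M_{\theta^0}$ to $M_{\Phi\theta^0}$ (the correct coordinate-free meaning of $\theta^0\mapsto\Phi\theta^0$), and justify that the adjoint action on $\mathfrak g^\sim$ is the push-forward of vector fields under that action, so that the flows really are conjugated as stated and $\mathfrak g^\sim$ is itself $\Ad(G^\sim)$-invariant. Because $G^\sim$ is in general only a Lie pseudo-group, all of this should be phrased locally, for restrictions of transformations to suitable domains, in the same spirit as the proof of Proposition~\ref{pro:OnPreliminaryGroupClassification1}; once it is set up, the equivalence of the two notions of $G^\sim$-equivalence follows with no further computation.
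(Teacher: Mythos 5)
Your proposal is correct and follows essentially the same route as the paper's proof: transport invariance of $\theta=\theta^0$ along a push-forward $\mathcal T_*Q$ to get $\mathcal T\mathcal S_1\subseteq\mathcal S_2$ and its reverse in one direction, and use the maximality of the $\mathfrak a_i$ to get the two inclusions $\mathcal T_*\mathfrak a_1\subseteq\mathfrak a_2$, $\mathcal T^{-1}_*\mathfrak a_2\subseteq\mathfrak a_1$ in the other. The only difference is presentational: you isolate the flow-conjugation equivariance lemma and the identity $\Ad(\Phi)\mathfrak a_{\mathcal T}=\mathfrak a_{\Phi(\mathcal T)}$ explicitly, which the paper uses implicitly without separate statement.
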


\begin{proof}
Assume that $\mathfrak a_2 = \mathcal T_* \mathfrak a_1$, where $\mathcal T\in G^\sim$ and $\mathcal T_*$ denotes the associated push-forward of vector fields. For $\theta^0\in\mathcal S_1$ the algebraic equation $\theta=\theta^0$ is invariant with respect to $\mathfrak a_1$. Since $\mathcal T$ is an equivalence transformation, we also have that $\mathcal T \theta^0\in \mathcal S$. By supposition, the equation $\tilde \theta=\mathcal T \theta^0$ is invariant with respect to $\mathcal T_*\mathfrak a_1=\mathfrak a_2$. This implies that $\mathcal T \theta^0\in \mathcal S_2$ from which it can be concluded that $\mathcal T \mathcal S_1 \subset \mathcal S_2$. Similarly, for $\tilde \theta^0\in\mathcal S_2$, the algebraic equation $\tilde \theta= \tilde \theta^0$ is invariant with respect to $\mathfrak a_2$ and $\mathcal T^{-1}\tilde \theta^0\in\mathcal S$. As $\mathcal T^{-1}_*\mathfrak a_2 = \mathfrak a_1$, the algebraic equation $\theta=\mathcal T^{-1} \tilde \theta^0$ is invariant with respect to $\mathfrak a_1$, which implies that $\mathcal T^{-1}\tilde \theta^0\in S_1$. From this last condition we obtain $\mathcal T \mathcal S_1\supset \mathcal S_2$. It therefore can be concluded that there exists a bijection between $\mathcal S_1$ and $\mathcal S_2$, generated by a transformation from~$G^\sim$.

Conversely, suppose that $\mathcal S_2=\mathcal T \mathcal S_1$ for $\mathcal T\in G^\sim$. If $\theta=\theta^0$ is invariant with respect to $\mathfrak a_1$ then $\tilde \theta=\mathcal T \theta^0$ is invariant with respect to $\mathcal T_*\mathfrak a_1$. As $\theta^0$ is arbitrary, this implies that $\mathcal T_* \mathfrak a_1\subset \mathfrak a_2$. In a similar manner as in the previous paragraph, we can show that $\mathcal T_*^{-1}\mathfrak a_2\subset\mathfrak a_1$ using the inverse transformation of $\mathcal T$. Then we have $\mathcal T_*\mathcal T_*^{-1}\mathfrak a_2\subset \mathcal T_*\mathfrak a_1$ and thus $\mathfrak a_2 \subset \mathcal T_*\mathfrak a_1$. This is why $\mathfrak a_2 = \mathcal T_* \mathfrak a_1$ must hold, which completes the proof of the proposition.
\end{proof}

Roughly speaking, the first proposition defines the method of how to construct cases of symmetry extensions if the equivalence algebra of the class of differential equations to be investigated is already known. The second proposition then states that the problem of finding inequivalent cases of such symmetry extensions of the kernel algebra is reduced to the algebraic problem of the classification of subalgebras of the equivalence algebra.

\begin{remark}
Within the set $\mathcal S_i$ defined in Proposition~\ref{pro:OnPreliminaryGroupClassification2}, there is an equivalence relation generated by transformations from~$G^\sim$ whose push-forwards to vector fields preserve the subalgebra~$\mathfrak a_i$ of~$\mathfrak g^\sim$. Such transformations form the normalizer of the subgroup of~$G^\sim$ associated with~$\mathfrak a_i$. This equivalence relation can be used to choose simple forms of representatives of the set~$\mathcal S_i$.
\end{remark}

This now completes the picture of the methods available for general group classification problems. It should be clear that these methods apply to different classes of differential equations. This is why it is essential to investigate properties of the given class \emph{before} choosing a particular method of group classification. This is done in the present paper. It is shown in the subsequent sections that the class~\eqref{eq:GenDiffEqs} is not normalized. Moreover, a compatibility analysis of the determining equations of Lie symmetries of this class is also an overly complicated task. This is why it cannot be expected to solve the complete group classification problem for~\eqref{eq:GenDiffEqs} in a reasonable way. Still, the given class is adequate to be investigated using the method of complete preliminary group classification.

Recall that, as mentioned in the introduction, the class~\eqref{eq:GenDiffEqs} is contained in the wider class of equations of the general form $u_t=F(t,x,u,u_x)u_{xx}+G(t,x,u,u_x)$, which is normalized and for which the group classification problem was solved in~\cite{basa01Ay}. 

The following folklore assertion is true. 

\begin{proposition}\label{pro:OnKernelGroupAsANormalSubgroup}
The kernel (common part) $G^\cap=\bigcap_{\theta\in{\mathcal S}}G_\theta$ of  
the maximal point symmetry groups~$G_\theta$, $\theta\in\mathcal S$, of systems from the class~$\mathcal L|_{\mathcal S}$ 
is naturally embedded into the (usual) equivalence group~$G^{\sim}$ of this class via trivial (identical) prolongation of
the kernel transformations to the arbitrary elements.
The associated subgroup $\hat G^\cap$ of $G^{\sim}$ is normal.
\end{proposition}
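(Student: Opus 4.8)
The plan is to verify the two claims separately, working directly from the definitions of the equivalence group and the kernel group. First I would make the embedding explicit: given $g\in G^\cap$, so $g$ is a point transformation of the $(x,u)$-space that lies in $G_\theta=\mathrm T(\theta,\theta)$ for every $\theta\in\mathcal S$, define $\hat g$ to be the transformation of the joint $(x,u,\theta)$-space that acts as $g$ on $(x,u)$ and as the identity on the arbitrary-element coordinates $\theta$. I would then check that $\hat g$ is indeed an equivalence transformation, i.e.\ that $\hat g\in G^\sim$: since $\hat g$ does not move $\theta$, we have $\hat g\,\theta=\theta$ for all $\theta\in\mathcal S$, so trivially $\hat g\,\theta\in\mathcal S$, which is exactly the defining condition $\forall\theta\in\mathcal S\colon\hat g\,\theta\in\mathcal S$ for membership in $G^\sim$. (One should also note the compatibility with the auxiliary conditions $\mathcal S$ — here it is automatic because $\theta$ is literally fixed.) The map $g\mapsto\hat g$ is clearly an injective group homomorphism, so $\hat G^\cap:=\{\hat g\mid g\in G^\cap\}$ is a subgroup of $G^\sim$ isomorphic to $G^\cap$; this settles the first assertion.

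For the normality claim I would take an arbitrary $\Phi\in G^\sim$ and an arbitrary $\hat g\in\hat G^\cap$ and show $\Phi\hat g\Phi^{-1}\in\hat G^\cap$. The key structural fact to invoke is that elements of $G^\sim$ project consistently to point transformations on $(x,u)$ in a way compatible with the action on arbitrary elements: for any $\theta\in\mathcal S$, the projection $\Phi|_{(x,u)}$ lies in $\mathrm T(\theta,\Phi\theta)$, i.e.\ $\Phi|_{(x,u)}$ maps $\mathcal L_\theta$ to $\mathcal L_{\Phi\theta}$ (this is the admissible-transformation description of the equivalence group recalled in the excerpt). Now fix $\theta\in\mathcal S$ and set $\theta'=\Phi^{-1}\theta\in\mathcal S$. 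Writing $\psi=\Phi|_{(x,u)}$ and noting $\hat g$ projects to $g\in\mathrm T(\theta',\theta')$, the composition $\psi\circ g\circ\psi^{-1}$ maps $\mathcal L_\theta\xrightarrow{\psi^{-1}}\mathcal L_{\theta'}\xrightarrow{g}\mathcal L_{\theta'}\xrightarrow{\psi}\mathcal L_\theta$, hence lies in $\mathrm T(\theta,\theta)=G_\theta$. Since $\theta$ was arbitrary in $\mathcal S$, the point transformation $\psi\circ g\circ\psi^{-1}$ belongs to $\bigcap_{\theta\in\mathcal S}G_\theta=G^\cap$. Finally I would observe that $\Phi\hat g\Phi^{-1}$ acts on the $(x,u)$-coordinates exactly as $\psi\circ g\circ\psi^{-1}$ and acts trivially on $\theta$ — the latter because $\hat g$ is trivial on $\theta$ and the $\theta$-component is untouched after composing with $\Phi$ and $\Phi^{-1}$ on the joint space, so the net action on $\theta$ is $\Phi\circ\mathrm{id}\circ\Phi^{-1}=\mathrm{id}$. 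Therefore $\Phi\hat g\Phi^{-1}$ is precisely the trivial prolongation of an element of $G^\cap$, i.e.\ $\Phi\hat g\Phi^{-1}\in\hat G^\cap$, proving normality.

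I expect the main technical obstacle to be the bookkeeping in the conjugation step: one must be careful that the action of $\Phi\in G^\sim$ on the joint space genuinely splits in the required way — that conjugating a transformation which is the identity on $\theta$ by an equivalence transformation again yields something identity on $\theta$, and that the $(x,u)$-part is exactly the expected conjugate of the projections. This requires the fact that equivalence transformations respect the fibration of the joint space over the $(x,u)$-space (the arbitrary elements $\theta$ transform in a way that depends on $(x,u)$ but the $(x,u)$-components transform among themselves), which is part of the standard setup for $G^\sim$; for the class~\eqref{eq:GenDiffEqs} this is concretely visible in Theorem~\ref{thm:EquivalenceAlgebra}. A secondary subtlety, worth a remark, is the potential distinction between the equivalence \emph{group} and the equivalence \emph{pseudo-group} when $G^\sim$ is infinite-dimensional, as stressed in the introduction: the argument above is purely algebraic and goes through verbatim once one interprets composition and inversion as composition and inversion of local diffeomorphisms on their common domains of definition, so no genuine difficulty arises there.
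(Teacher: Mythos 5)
Your proposal is correct and follows essentially the same route as the paper: the trivial prolongation is checked to lie in $G^\sim$ because its projection is a symmetry of every $\mathcal L_\theta$, and normality is obtained by verifying that a conjugate $\Phi\hat g\Phi^{-1}$ fixes every $\theta\in\mathcal S$ while its $(x,u)$-projection is the conjugate $\psi\circ g\circ\psi^{-1}\in\bigcap_\theta G_\theta$. The paper's proof tracks the action on $\theta$ first and then reads off the projection, whereas you track the chain $\mathcal L_\theta\to\mathcal L_{\theta'}\to\mathcal L_{\theta'}\to\mathcal L_\theta$ on the equation side, but these are the same argument; your explicit remark that usual equivalence transformations respect the fibration over the $(x,u)$-space is a point the paper leaves implicit.
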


\begin{proof}
Let $\mathcal T_0$ be an arbitrary element of $G^\cap$, i.e.\ $\mathcal T_0$ is a point symmetry transformation for any equation from the class~$\mathcal L|_{\mathcal S}$. Denote by~$\hat{\mathcal T}_0$ the trivial prolongation of~$\mathcal T_0$ to the arbitrary elements~$\theta$, $\hat{\mathcal T}_0\theta=\theta$. The transformation~$\hat{\mathcal T}_0$ obviously belongs to~$G^\sim$, since it maps any equation from~$\mathcal L|_{\mathcal S}$ to the same equation in the new variables and therefore saves the entire class~$\mathcal L|_{\mathcal S}$.

Taking an arbitrary transformation~$\mathcal T \in G^\sim$, consider the composition $\mathcal T^{-1}\hat{\mathcal T}_0\mathcal T$. In order to check that $\hat G^\cap$ is a normal subgroup of~$G^\sim$, we should prove that this composition belongs to~$\hat G^\cap$. We fix any~$\theta\in\mathcal S$ and denote $\mathcal T\theta$ by $\tilde\theta$. Then $\hat{\mathcal T}_0\mathcal T\theta = \tilde\theta$ and hence~$\mathcal T^{-1}\hat{\mathcal T}_0\mathcal T\theta = \theta$. This means that the projection ${\rm P}\mathcal T^{-1}\hat{\mathcal T}_0\mathcal T$ to the space of variables $(x,u)$ is a point symmetry transformation of $\mathcal L_\theta$ for any $\theta\in\mathcal S$. In other words, the transformation ${\rm P}\mathcal T^{-1}\hat{\mathcal T}_0\mathcal T$ is an element of~$G^\cap$. Therefore, $\mathcal T^{-1}\hat{\mathcal T}_0\mathcal T$, which is the trivial prolongation of ${\rm P}\mathcal T^{-1}\hat{\mathcal T}_0\mathcal T$ to the arbitrary elements, belongs to~$\hat G^\cap$.
\end{proof}

Properties of~$G^\cap$ described in Proposition~\ref{pro:OnKernelGroupAsANormalSubgroup} were first noted in different works by Ovsiannikov 
(see, e.g., \cite{ovsi75Ay} and \cite[Section~II.6.5]{ovsi82Ay}). 
Another formulation of this proposition is given in \cite[p.~52]{lisl92Ay}, Proposition~3.3.9.

\begin{corollary}\label{cor:OnKernelAlgebraAsIdeal1}
The trivial prolongation~$\hat{\mathfrak g}^\cap$ of the kernel algebra~$\mathfrak g^\cap$ to the arbitrary elements is an ideal in the equivalence algebra~$\mathfrak g^\sim$. 
\end{corollary}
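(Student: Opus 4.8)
The plan is to derive the corollary from Proposition~\ref{pro:OnKernelGroupAsANormalSubgroup} by passing from the group level to the infinitesimal level, i.e.\ by differentiating the normality condition. First I would record the standard correspondence: the kernel algebra~$\mathfrak g^\cap$ is the Lie algebra of the kernel group~$G^\cap$, and its trivial prolongation~$\hat{\mathfrak g}^\cap$ (where each vector field has zero components in the arbitrary elements~$\theta$) is the Lie algebra of the subgroup~$\hat G^\cap\subset G^\sim$; likewise~$\mathfrak g^\sim$ is the Lie algebra of~$G^\sim$. By Proposition~\ref{pro:OnKernelGroupAsANormalSubgroup}, $\hat G^\cap$ is a normal subgroup of~$G^\sim$. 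It is a general fact that the Lie algebra of a normal (Lie) subgroup is an ideal of the Lie algebra of the ambient group, so $\hat{\mathfrak g}^\cap$ is an ideal in~$\mathfrak g^\sim$. In the pseudo-group setting I would make this precise by the usual argument: for $\hat Q\in\hat{\mathfrak g}^\cap$ with flow~$\hat{\mathcal T}_0^{(s)}\subset\hat G^\cap$ and any $\mathcal T\in G^\sim$, normality gives $\mathcal T^{-1}\hat{\mathcal T}_0^{(s)}\mathcal T\in\hat G^\cap$ for all~$s$, and differentiating at $s=0$ shows that the push-forward $\mathcal T_*^{-1}\hat Q=\mathcal T^{-1}_*\hat Q$ lies in~$\hat{\mathfrak g}^\cap$; taking $\mathcal T$ in a one-parameter subgroup generated by an arbitrary $R\in\mathfrak g^\sim$ and differentiating a second time yields $[R,\hat Q]\in\hat{\mathfrak g}^\cap$, which is exactly the ideal property.

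Alternatively, and perhaps more cleanly for a self-contained write-up, I would give a direct infinitesimal proof mirroring the proof of Proposition~\ref{pro:OnKernelGroupAsANormalSubgroup}. Take $\hat Q\in\hat{\mathfrak g}^\cap$, so $Q:=\mathrm P\hat Q$ belongs to $\mathfrak g_\theta$ for every $\theta\in\mathcal S$ and $\hat Q$ has vanishing $\theta$-components. Take $R\in\mathfrak g^\sim$ arbitrary. Since $\mathfrak g^\cap=\bigcap_{\theta}\mathfrak g_\theta$ is, for each fixed~$\theta$, a subalgebra of~$\mathfrak g_\theta$, and the flow of~$R$ consists of equivalence transformations, the push-forward of~$Q$ along the flow of~$R$ stays in~$\mathfrak g_\theta$ for all~$\theta$ (this is the infinitesimal shadow of the computation in Proposition~\ref{pro:OnKernelGroupAsANormalSubgroup}); differentiating in the flow parameter shows $[R,\hat Q]$ projects into every $\mathfrak g_\theta$. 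It remains to check that $[R,\hat Q]$ again has vanishing components in the arbitrary elements, so that it is itself the trivial prolongation of an element of~$\mathfrak g^\cap$, hence lies in~$\hat{\mathfrak g}^\cap$; this follows because the $\theta$-component of $[R,\hat Q]$ is $R(\hat Q^\theta)-\hat Q(R^\theta)=-\hat Q(R^\theta)$, and $\hat Q$ acts only on the $(x,u)$-variables while $R^\theta$, by the structure of equivalence algebras for this type of class, has the appropriate dependence — more abstractly, $\hat{\mathfrak g}^\cap$ being an ideal at the group level already forces its infinitesimal counterpart to be closed under such brackets.

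The main obstacle I anticipate is not conceptual but a matter of rigor in the pseudo-group framework: one must be careful that ``the Lie algebra of a normal Lie subgroup is an ideal'' transfers correctly to Lie pseudo-groups of local equivalence diffeomorphisms, where subgroups need not be honest Lie groups and one works with formal or local one-parameter flows. The safe route is to avoid invoking the group–algebra dictionary as a black box and instead argue entirely infinitesimally as in the second approach, using only that (i)~$\mathfrak g^\cap$ sits inside each $\mathfrak g_\theta$, (ii)~the adjoint action of~$\mathfrak g^\sim$ on vector fields is compatible with restriction to each $\mathcal L_\theta$ via the equivalence-transformation property, and (iii)~trivial prolongations are preserved under bracketing with elements of~$\mathfrak g^\sim$. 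Given the detailed proof of Proposition~\ref{pro:OnKernelGroupAsANormalSubgroup} already in hand, the cleanest presentation is simply: ``differentiate the conjugation identity $\mathcal T^{-1}\hat{\mathcal T}_0\mathcal T\in\hat G^\cap$ of Proposition~\ref{pro:OnKernelGroupAsANormalSubgroup} with respect to both one-parameter subgroup parameters,'' which is short enough to include verbatim.
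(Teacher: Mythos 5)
Your proposal is correct and follows essentially the same route as the paper: both deduce the ideal property from the normality of $\hat G^\cap$ in $G^\sim$ (Proposition~\ref{pro:OnKernelGroupAsANormalSubgroup}) by passing to the infinitesimal level, the only cosmetic difference being that you differentiate the conjugation $\mathcal T^{-1}\hat{\mathcal T}_0(s)\mathcal T$ twice, whereas the paper forms the group commutator $\hat{\mathcal T}_0(-\sqrt{\ve})\mathcal T(-\sqrt{\ve})\hat{\mathcal T}_0(\sqrt{\ve})\mathcal T(\sqrt{\ve})$ and differentiates once at $\ve=0+$, citing Olver's Theorem~1.33 to identify the result with $[Q,\hat Q_0]$. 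Your caution about not invoking the group--algebra dictionary as a black box in the pseudo-group setting is well placed and is exactly why the paper writes out the flow-level computation explicitly (also noting, as you do implicitly, that $\hat Q_0\in\mathfrak g^\sim$ must be checked first).
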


By definition, any element of the algebra~$\hat{\mathfrak g}^\cap$ formally has the same form as the associated element from~$\mathfrak g^\cap$, but in fact is a vector field in the different space augmented with the arbitrary elements.

\begin{proof}
Consider arbitrary vector fields $Q_0\in\mathfrak g^\cap$ and $Q\in\mathfrak g^\sim$. Denote the trivial prolongation of~$Q_0$ to the arbitrary elements by~$\hat Q_0$, so $\hat Q_0\in\hat{\mathfrak g}^\cap$. It is necessary only to prove that $\hat Q_0\in\mathfrak g^\sim$ and $[Q,\hat Q_0]\in\hat{\mathfrak g}^\cap$. Let $\hat G_0=\{\hat{\mathcal T}_0(\ve)=\exp(\ve\hat Q_0)\}$ and $G=\{\mathcal T(\ve)=\exp(\ve Q)\}$ be local one-parameter transformation groups associated with $\hat Q_0$ and $Q$, respectively. As $\hat G_0$ is a subgroup of~$G^\sim$, the vector field $\hat Q_0$ belongs to~$\mathfrak g^\sim$.

For each sufficiently small $\ve$ define the composition $\tilde{\mathcal T}(\ve)=\hat{\mathcal T_0}(-\sqrt{\ve})\mathcal T(-\sqrt{\ve})\hat{\mathcal T_0}(\sqrt{\ve})\mathcal T(\sqrt{\ve})$
and consider the vector field
\[
    \tilde Q = \dd{}{\ve}\bigg|_{\ve=0+}\tilde{\mathcal T}(\ve),
\]
which coincides with $[Q,\hat Q_0]$, see e.g.~\cite[Theorem 1.33]{olve86Ay}. As both $\mathcal T(-\sqrt{\ve})\hat{\mathcal T_0}(\sqrt{\ve})\mathcal T(\sqrt{\ve})$ and $\hat{\mathcal T_0}(-\sqrt{\ve})$ belong to~$\hat G_0$ (cf. Proposition~\ref{pro:OnKernelGroupAsANormalSubgroup}), the transformation $\tilde{\mathcal T}(\ve)$ also is an element of~$\hat G_0$. Therefore, $\tilde Q\in \hat{\mathfrak g}^\cap$.
\end{proof}

As the kernel is included in the maximal Lie invariance algebra of any equation from the class, we should classify only subalgebras of the equivalence algebra that contain the ideal associated with the kernel.

\begin{example}
In general, the kernel~$\mathfrak g^\cap$ is not necessarily an ideal of the maximal Lie invariance algebra~$\mathfrak g_\theta$ for each $\theta\in\mathcal S$. 
Indeed, consider the class of $(1+1)$-dimensional nonlinear diffusion equations of the general form $u_t=(F(u)u_x)_x$, where $F\ne0$, cf.\ the introduction. 
The kernel of this class and the maximal Lie invariance algebra of the diffusion equation with $F=u^{-4/3}$ are 
$\mathfrak g^\cap=\langle\p_t,\,\p_x,\,2t\p_t+x\p_x\rangle$ and  
$\mathfrak g_1=\langle\p_t,\,\p_x,\,2t\p_t+x\p_x,\,4t\p_t+3u\p_u,\,x^2\p_x-3xu\p_u\rangle$
\cite{akha91Ay,ovsi59Ay,ovsi82Ay}, respectively. 
At the same time, the kernel~$\mathfrak g^\cap$ is not an ideal of~$\mathfrak g_1$, $[\mathfrak g^\cap,\mathfrak g_1]\not\subset\mathfrak g^\cap$, since 
\[
[\p_x,x^2\p_x-3xu\p_u]=2x\p_x-3xu\p_u\not\in\mathfrak g^\cap.
\]
Note that the class of diffusion equations is semi-normalized (see \cite{popo06Ay,popo10Ay} for the definition of semi-normalization) but not normalized in the usual sense.
\end{example}

\begin{corollary}
If the class $\mathcal L|_{\mathcal S}$ is normalized in the usual sense, 
the kernel algebra~$\mathfrak g^\cap$ is an ideal of the maximal Lie invariance algebra~$\mathfrak g_\theta$ for each $\theta\in\mathcal S$. 
\end{corollary}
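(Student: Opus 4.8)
The plan is to reduce the claim to the previously established fact that the kernel algebra embeds as an \emph{ideal} in the equivalence algebra (Corollary~\ref{cor:OnKernelAlgebraAsIdeal1}), and to combine this with the definition of normalization so as to identify $\mathfrak g_\theta$ with a subalgebra of $\mathfrak g^\sim$ in a way that is compatible with the embedding of $\mathfrak g^\cap$. The key observation is that for a normalized class every admissible transformation, and in particular every element of $G_\theta=\mathrm T(\theta,\theta)$, is induced by an equivalence transformation; infinitesimally, this means every vector field $Q\in\mathfrak g_\theta$ is the projection to the $(x,u)$-space of some $\hat Q\in\mathfrak g^\sim$. Denote by $\hat{\mathfrak g}_\theta$ the set of such $\hat Q$ (a subalgebra of $\mathfrak g^\sim$ projecting onto $\mathfrak g_\theta$), and recall that $\hat{\mathfrak g}^\cap\subset\hat{\mathfrak g}_\theta$ because a kernel vector field prolonged trivially to the arbitrary elements is a genuine equivalence-algebra element lying over itself.

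First I would make precise the statement that $\hat{\mathfrak g}^\cap$ is contained in $\hat{\mathfrak g}_\theta$: any $Q_0\in\mathfrak g^\cap$ generates a one-parameter group of point symmetries of $\mathcal L_\theta$; its trivial prolongation $\hat Q_0$ lies in $\mathfrak g^\sim$ (shown in the proof of Corollary~\ref{cor:OnKernelAlgebraAsIdeal1}) and projects to $Q_0$, hence $\hat Q_0\in\hat{\mathfrak g}_\theta$. Second, I would invoke Corollary~\ref{cor:OnKernelAlgebraAsIdeal1}, which asserts that $\hat{\mathfrak g}^\cap$ is an ideal of the \emph{whole} $\mathfrak g^\sim$; therefore it is a fortiori an ideal of the subalgebra $\hat{\mathfrak g}_\theta\subset\mathfrak g^\sim$, i.e.\ $[\hat{\mathfrak g}_\theta,\hat{\mathfrak g}^\cap]\subset\hat{\mathfrak g}^\cap$. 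Third, I would push this bracket relation down to the $(x,u)$-space via the projection $\pr$. Since projection to a subset of the variables is a Lie-algebra homomorphism on vector fields (it commutes with the Lie bracket), applying $\pr$ to $[\hat{\mathfrak g}_\theta,\hat{\mathfrak g}^\cap]\subset\hat{\mathfrak g}^\cap$ yields $[\pr\hat{\mathfrak g}_\theta,\pr\hat{\mathfrak g}^\cap]\subset\pr\hat{\mathfrak g}^\cap$, that is $[\mathfrak g_\theta,\mathfrak g^\cap]\subset\mathfrak g^\cap$, which is exactly the assertion that $\mathfrak g^\cap$ is an ideal of $\mathfrak g_\theta$.

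The main obstacle, and the step requiring the most care, is the passage from the group-level normalization property to the infinitesimal statement that $\mathfrak g_\theta=\pr\hat{\mathfrak g}_\theta$ for a \emph{sub}algebra $\hat{\mathfrak g}_\theta$ of $\mathfrak g^\sim$ that simultaneously contains $\hat{\mathfrak g}^\cap$. One must argue that the continuous part of $G_\theta$ lifts to a one-parameter subgroup of $G^\sim$ — this is immediate from normalization applied to each element of the local one-parameter group generated by $Q\in\mathfrak g_\theta$, followed by differentiation at the identity — and that the collection of all such lifts is closed under the bracket, which follows because the lift is uniquely determined on the $(x,u)$-variables and the trivial-prolongation part of $\hat{\mathfrak g}^\cap$ being an ideal guarantees compatibility. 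A subtlety worth flagging (and it is enough to flag it, since it does not affect the conclusion) is that the lift of a given $Q$ need not be unique if $\hat{\mathfrak g}^\cap$ is nontrivial: two lifts differ by an element of $\mathfrak g^\sim$ projecting to zero on $(x,u)$, but any such element prolonged nontrivially still has its $(x,u)$-components vanishing, and for the ideal computation above one may work with any chosen lift since $\pr$ kills precisely that ambiguity. Once these points are settled, the proof is a two-line diagram chase, exactly as indicated.

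\begin{proof}
Let $\theta\in\mathcal S$ and let $Q\in\mathfrak g_\theta$, with associated local one-parameter group $\{\mathcal T(\ve)\}$ of point symmetries of $\mathcal L_\theta$. Each $\mathcal T(\ve)\in\mathrm T(\theta,\theta)$, so by normalization in the usual sense there exists $\Phi(\ve)\in G^\sim$ with $\Phi(\ve)\theta=\theta$ and $\Phi(\ve)|_{(x,u)}=\mathcal T(\ve)$. Differentiating at $\ve=0$ gives a vector field $\hat Q\in\mathfrak g^\sim$ whose projection to the space of variables $(x,u)$ is $Q$. Denote by $\hat{\mathfrak g}_\theta$ the set of all vector fields in $\mathfrak g^\sim$ obtained in this way (together with the trivial prolongations of elements of $\mathfrak g^\cap$); it is a subalgebra of $\mathfrak g^\sim$ with $\pr\hat{\mathfrak g}_\theta=\mathfrak g_\theta$, and it contains $\hat{\mathfrak g}^\cap$ since, by the proof of Corollary~\ref{cor:OnKernelAlgebraAsIdeal1}, the trivial prolongation of any $Q_0\in\mathfrak g^\cap$ lies in $\mathfrak g^\sim$ and projects onto $Q_0$.

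By Corollary~\ref{cor:OnKernelAlgebraAsIdeal1}, $\hat{\mathfrak g}^\cap$ is an ideal of $\mathfrak g^\sim$; in particular $[\hat{\mathfrak g}_\theta,\hat{\mathfrak g}^\cap]\subset\hat{\mathfrak g}^\cap$. The projection $\pr$ to the variables $(x,u)$ is a Lie-algebra homomorphism on vector fields, so applying it to this inclusion yields
\[
[\mathfrak g_\theta,\mathfrak g^\cap]=[\pr\hat{\mathfrak g}_\theta,\pr\hat{\mathfrak g}^\cap]\subset\pr\hat{\mathfrak g}^\cap=\mathfrak g^\cap.
\]
Hence $\mathfrak g^\cap$ is an ideal of $\mathfrak g_\theta$ for each $\theta\in\mathcal S$.
\end{proof}
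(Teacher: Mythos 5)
Your proof is correct and follows essentially the same route as the paper: both lift $\mathfrak g_\theta$ to a subalgebra $\hat{\mathfrak g}_\theta$ of $\mathfrak g^\sim$ containing $\hat{\mathfrak g}^\cap$ (the paper takes the maximal stabilizer of $\theta^0$ in $\mathfrak g^\sim$, which makes closure under the bracket automatic and uses normalization only to see that its projection is all of $\mathfrak g_{\theta^0}$), invoke Corollary~\ref{cor:OnKernelAlgebraAsIdeal1}, and project. Your own observation that $\pr$ kills the ambiguity of the lift in fact shows the subalgebra structure of $\hat{\mathfrak g}_\theta$ is not even needed, so the argument goes through as written.
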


\begin{proof}
We fix an arbitrary element $\theta^0\in\mathcal S$. 
Denote by~$\hat{\mathfrak g}_{\theta^0}$ the maximal subalgebra of~$\mathfrak g^\sim$ 
such that the algebraic equation $\theta=\theta^0(x,u_{(r)})$ is invariant with respect to~it. 
This subalgebra necessarily contains the trivial prolongation~$\hat{\mathfrak g}^\cap$ of the kernel algebra~$\mathfrak g^\cap$ to the arbitrary elements. 
Thus, we have that $\hat{\mathfrak g}^\cap\subset\hat{\mathfrak g}_{\theta^0}\subset\mathfrak g^\sim$ and, in view of Corollary~\ref{cor:OnKernelAlgebraAsIdeal1}, 
$\hat{\mathfrak g}^\cap$ is an ideal in~$\mathfrak g^\sim$. 
Therefore, $\hat{\mathfrak g}^\cap$ is an ideal in~$\hat{\mathfrak g}_{\theta^0}$. 
As the class $\mathcal L|_{\mathcal S}$ is normalized in the usual sense, the projection of~$\hat{\mathfrak g}_{\theta^0}$ to the space of the variables~$(x,u)$ 
coincides with the maximal Lie invariance algebra~$\mathfrak g_{\theta^0}$ of the equation~$\mathcal L_{\theta^0}$.
By the construction, the projection of~$\hat{\mathfrak g}^\cap$ to the space of the variables~$(x,u)$ coincides with~$\mathfrak g^\cap$. 
Hence $\mathfrak g^\cap$ is an ideal in~$\mathfrak g_{\theta^0}$. 
\end{proof}

Often the equivalence algebra can be represented as a semi-direct sum of the ideal associated with the kernel and a certain subalgebra. To obtain preliminary group classification in this case, we in fact need to classify only inequivalent subalgebras of the complement of the kernel ideal. Projections of these subalgebras to the space of equation variables will give all possible inequivalent extensions of the kernel.

\begin{example}
We present a class of differential equations for which the above representation is not possible.
This is the class of $(1+1)$-dimensional linear second order homogeneous evolution equations which has the general form
\begin{equation}\label{EqGenLPE}
u_t=A(t,x)u_{xx}+B(t,x)u_x+C(t,x)u,
\end{equation}
where $A=A(t,x)$, $B=B(t,x)$ and $C=C(t,x)$ are arbitrary smooth functions, $A\ne0$.
The kernel Lie algebra of class~\eqref{EqGenLPE} is $\mathfrak g^\cap=\langle u\p_u\rangle$.
Its equivalence algebra~$\mathfrak g^\sim$ is spanned by operators of the form
{\samepage 
\begin{gather*}
\tau\p_t+\xi\p_x+\eta^1u\p_u+{} \\ (2\xi_x-\tau_t)\p_A+((\xi_x-\tau_t)B-2\eta^1_xA-\xi_t)\p_B+(\eta^1_t-A\eta^1_{xx}-B\eta^1_x-C\xi_t)\p_C,
\end{gather*}}
where $\tau=\tau(t)$, $\xi=\xi(t,x)$ and $\eta^1=\eta^1(t,x)$ are arbitrary smooth functions of their arguments.
The kernel~$\mathfrak g^\cap$ can be identified with the ideal~$\hat{\mathfrak g}^\cap$ of~$\mathfrak g^\sim$, generated by the vector field $u\p_u$, which is assumed now to act in the space of variables and arbitrary elements. Moreover, this vector field commutes with all elements of~$\mathfrak g^\sim$. At the same time we have $[\p_t,tu\p_u] = u\p_u$. Therefore the algebra~$\mathfrak g^\sim$ cannot be represented as a semi-direct sum of~$\hat{\mathfrak g}^\cap$ and a subalgebra.
\end{example}

\section{Determining equations of Lie symmetries}\label{sec:DeterminingEquationsOfLieSymmetries}

The method of computing Lie symmetries is classical and can be found in all textbooks on this subject, see, e.g.~\cite{blum89Ay,olve86Ay,ovsi82Ay}. Owing to its algorithmic nature, it was implemented in a number of symbolic computation programs~\cite{butc03Ay,carm00Ay,head93Ay,roch10Ay}. 
For an equation $\Delta=0$ from the class~\eqref{eq:GenDiffEqs}, the condition of infinitesimal invariance with respect to a vector field
\[
 Q = \tau(t,x,u)\p_t + \xi(t,x,u)\p_x + \eta(t,x,u)\p_u
\]
has the form $Q^{(2)}\Delta|_{\Delta=0}^{} = 0$, i.e., 
\begin{equation}\label{eq:InfinitesimalInvarianceCondition}
 Q^{(2)}\Delta = \eta^t - \xi f_x u_x^2 -\eta f_u u_x^2 - 2fu_x\eta^x - \xi g_xu_{xx} - \eta g_u u_{xx} - g\eta^{xx} = 0
\end{equation}
wherever $\Delta=0$. Here $Q^{(2)}$ is the second prolongation of the vector field~$Q$, 
\begin{equation}\label{eq:2ndProlongationOfQ}
Q^{(2)} = Q + \eta^t\p_{u_t} + \eta^x\p_{u_x} + \eta^{tt}\p_{u_{tt}} + \eta^{tx}\p_{u_{tx}} + \eta^{xx}\p_{u_{xx}},
\end{equation}
where the coefficients can be determined by using the general prolongation formula. In~\eqref{eq:InfinitesimalInvarianceCondition} we only need the coefficients~$\eta^t$, $\eta^x$ and $\eta^{xx}$. They read~\cite{olve86Ay,ovsi82Ay}
\begin{align}\label{eq:CoefficientsOfProlongation}
\begin{split}
 &\eta^t = \DD_t(\eta - \tau u_t - \xi u_x) + \tau u_{tt} + \xi u_{tx}, \\
 &\eta^x = \DD_x(\eta - \tau u_t - \xi u_x) + \tau u_{tx} + \xi u_{xx}, \\
 &\eta^{xx} = \DD^2_x(\eta - \tau u_t - \xi u_x) + \tau u_{txx} + \xi u_{xxx},
\end{split}
\end{align}
where $\DD_t$ and $\DD_x$ denote the operators of total differentiation with respect to $t$ and $x$, respectively, 
\[
 \DD_t = \p_t + u_t\p_u + u_{tt}\p_{u_t}+ u_{tx}\p_{u_x} + \cdots, \qquad \DD_x = \p_x + u_x\p_u + u_{tx}\p_{u_t}+ u_{xx}\p_{u_x} + \cdots.
\]

Upon plugging the coefficients~\eqref{eq:CoefficientsOfProlongation} into the infinitesimal invariance condition~\eqref{eq:InfinitesimalInvarianceCondition}, we obtain the following equation
\begin{align}\label{eq:DeterminingEquationUnsplit}
\begin{split}
 &\DD_t\eta - u_t\DD_t\tau - u_x\DD_t\xi-\xi f_xu_x^2 - \eta f_u u_x^2 - 2fu_x(\DD_x\eta -u_t\DD_x\tau - u_x\DD_x\xi)- \xi g_xu_{xx} -  {} \\
 &{} \eta g_u u_{xx} - g(\DD_x^2\eta - u_t \DD_x^2\tau - u_x \DD_x^2\xi - 2u_{tx}\DD_x\tau - 2u_{xx}\DD_x\xi) = 0.
\end{split}
\end{align}
In order to constrain this equation on the manifold of equations~\eqref{eq:GenDiffEqs}, we set $u_t = fu_x^2+gu_{xx}$. Then, splitting~\eqref{eq:DeterminingEquationUnsplit} with respect to the various derivatives of $u$ we obtain the following overdetermined system of determining equations of Lie symmetries:
\begin{align}\label{eq:DeterminingEquationSplit}
\begin{split}
 &u_xu_{xt}\colon \qquad \tau_u = 0, \\
 &u_{xt}\colon \hspace{1.18cm} \tau_x = 0, \\
 &u_xu_{xx}\colon \hspace{0.72cm} \xi_u = 0, \\
 &u_x^2\colon \hspace{1.28cm} f(\tau_t+\eta_u - 2\xi_x) + g\eta_{uu} + \xi f_x + \eta f_u = 0, \\
 &u_{xx}\colon \hspace{1.12cm} g(\tau_t - 2\xi_x) + \xi g_x + \eta g_u = 0, \\
 &u_x\colon \hspace{1.28cm} \xi_t + 2f\eta_x + g(2\eta_{xu}-\xi_{xx}) = 0, \\
 &1\colon \hspace{1.5cm} \eta_t - g\eta_{xx} = 0.
\end{split}
\end{align}
As usual for classes of differential equations, the determining equations split into a part not involving the arbitrary elements and a part explicitly involving them (the \textit{classifying part}). In the present case, the first three equations do not involve $f$ and $g$ and can therefore be integrated immediately. They give $\tau = \tau(t)$ and $\xi = \xi(t,x)$, i.e.\ the symmetry transformations are projectable and transformations of $t$ only depend on $t$.

The remaining four equations form the system of classifying equations. In the case of arbitrariness of the functions $f$ and $g$, we can further split system~\eqref{eq:DeterminingEquationSplit} with respect to derivatives of $f$ and $g$. This yields the kernel of maximal Lie invariance algebras, which gives rise to those symmetry transformations that are admitted for all elements of the class of equations~\eqref{eq:GenDiffEqs}. Splitting yields
\[
  \xi = \eta = 0, \quad \tau_t = 0,
\]
i.e.\ the \emph{kernel algebra}~$\mathfrak g^\cap$ is generated solely by the operator $\p_t$, $\mathfrak g^\cap=\langle\p_t\rangle$. That is, for arbitrary values of $f$ and $g$, the only symmetry admitted by equations of the class~\eqref{eq:GenDiffEqs} is the time translation symmetry $(t,x,u)\mapsto (t+\ve,x,u)$, $\ve\in\mathbb{R}$.

\section{The equivalence algebra}\label{sec:EquivalenceAlgebra}

In order to investigate inequivalent cases of symmetry extensions of the kernel algebra~$\mathfrak g^\cap$, the equivalence algebra (group) must be computed. Unfortunately, the equivalence algebra presented in~\cite{nadj10Ay} is not correct. It can easily be checked that their operator $b(x)\p_u$ does not generate an equivalence transformation for general values of $b$. Similarly, also their operator $a(x)\p_x + 2fa(x)\p_f+ga'(x)\p_g$ cannot generate equivalence transformations for arbitrary values of $a$. The problem indeed is that their infinitesimal invariance condition for equivalence transformations is incorrect. This is why it is necessary to re-derive the equivalence algebra for the class of equations~\eqref{eq:GenDiffEqs} here.

\begin{theorem}\label{thm:EquivalenceAlgebra}
The equivalence algebra~$\mathfrak g^\sim$ of the class of equations~\eqref{eq:GenDiffEqs} is generated by the following operators:
\begin{gather}\label{eq:EquivalenceAlgebra}
\begin{split}
 &\p_t,\quad \p_x,\quad \DDD^t= t\p_t - f\p_f-g\p_g, \quad \DDD^x=x\p_x+2f\p_f+2g\p_g,\\
 &\GG(h)=h\p_u - (h_uf+h_{uu}g)\p_f,
\end{split}
\end{gather}
where $h=h(u)$ is an arbitrary smooth function of $u$.
\end{theorem}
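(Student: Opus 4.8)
The plan is to compute the equivalence algebra directly from its infinitesimal definition, following the standard procedure for finding the equivalence algebra of a class of differential equations. I would look for vector fields in the extended space of variables and arbitrary elements of the form
\[
Q = \tau\p_t + \xi\p_x + \eta\p_u + \varphi\p_f + \psi\p_g,
\]
where, a priori, $\tau,\xi,\eta$ may depend on $(t,x,u)$ and $\varphi,\psi$ may depend on $(t,x,u,f,g)$. (Since the arbitrary elements $f,g$ are functions of $x$ and $u$ only, the auxiliary conditions $f_t=g_t=0$, etc., must be respected, and this will force the coefficients of $\p_t$, $\p_x$, $\p_u$ to be independent of the components that would violate projectability; in particular, the first three determining equations from Section~\ref{sec:DeterminingEquationsOfLieSymmetries} already tell us $\tau=\tau(t)$ and $\xi=\xi(t,x)$, $\eta=\eta(t,x,u)$ should emerge similarly.) The invariance condition is obtained by prolonging $Q$ to the jet space including the derivatives $u_t,u_x,u_{xx}$ and applying it to the equation $\Delta = u_t - f u_x^2 - g u_{xx} = 0$ \emph{together with} the auxiliary system defining $\mathcal S$, i.e.\ treating $f,g$ as further dependent variables constrained by $f_t=g_t=0$, and so on.

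First I would write the prolongation of $Q$ acting on $\Delta$, using the prolongation coefficients $\eta^t,\eta^x,\eta^{xx}$ as in~\eqref{eq:CoefficientsOfProlongation} but now with the extra contributions from $\varphi\p_f$ and $\psi\p_g$; the resulting equation must hold identically on the manifold $u_t = f u_x^2 + g u_{xx}$. Next I would split this single equation with respect to the jet variables $u_t, u_x, u_{xx}, u_{tx}$ that do not appear in the arbitrary elements, exactly as was done to obtain system~\eqref{eq:DeterminingEquationSplit}, but keeping $\varphi$ and $\psi$ as unknowns. This yields an overdetermined system for $\tau,\xi,\eta,\varphi,\psi$. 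The equations analogous to the first three of~\eqref{eq:DeterminingEquationSplit} give $\tau_u=\tau_x=0$ and $\xi_u=0$, so $\tau=\tau(t)$, $\xi=\xi(t,x)$. The equation coming from the $u_{xx}$-coefficient becomes a formula for $\psi$ in terms of $\tau,\xi,\eta$ and $g$, and the one from the $u_x^2$-coefficient becomes a formula for $\varphi$; the remaining equations ($u_x$-coefficient and the free term) together with the requirement that $\varphi,\psi$ be consistent with $f_t=g_t=0$ and $f_x,g_x$ being genuine derivatives impose the residual constraints.

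The key structural step is then to use the auxiliary conditions properly: because $f$ and $g$ do not depend on $t$, the coefficients $\varphi$ and $\psi$ must have the form that is compatible with $\DD_t f = \DD_t g = 0$ being preserved, which forces $\tau_t$ to be constant and kills the $t$-dependence of $\xi$ beyond an additive function, etc. I expect that after this splitting one finds $\tau$ affine in $t$, $\xi$ affine in $x$ (with no $t$-dependence), $\eta = h(u)$ an arbitrary function of $u$ alone, and $\varphi,\psi$ determined as $\psi = (\tau_t - 2\xi_x)\,g$-type expression and $\varphi = $ the corresponding expression involving $h_u f + h_{uu} g$; collecting the independent constants and the arbitrary function $h$ yields precisely the five generators $\p_t,\p_x,\DDD^t,\DDD^x,\GG(h)$ listed in~\eqref{eq:EquivalenceAlgebra}. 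Finally I would verify, by direct substitution into the invariance condition (or by exhibiting the finite transformations), that each of these five vector fields \emph{does} generate an equivalence transformation, and that the displayed commutation relations close, so that~\eqref{eq:EquivalenceAlgebra} is exactly $\mathfrak g^\sim$.

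The main obstacle I anticipate is bookkeeping rather than conceptual: correctly handling the role of the auxiliary conditions $f_t=g_t=0$ (and the implicit ones $f_{u_x}=0$, etc.) when splitting the prolonged invariance condition, so as to neither lose solutions nor admit spurious ones. Concretely, one must be careful that $f_x,f_u,g_x,g_u$ are \emph{not} independent jet coordinates to be split off but rather first-order derivatives of the dependent variables $f,g$, so the classifying part must be handled by requiring the identity to hold for all functions $f(x,u),g(x,u)$ subject only to $g\ne0$; this is exactly the point where the incorrect equivalence algebra of~\cite{nadj10Ay} went wrong (as noted before the theorem statement), and getting it right is what pins down the form of $\varphi$ and $\psi$ and, in particular, produces the $-(h_uf+h_{uu}g)\p_f$ term in $\GG(h)$.
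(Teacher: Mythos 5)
Your proposal follows essentially the same route as the paper's proof: the infinitesimal invariance condition in the augmented space $(t,x,u,f,g)$, prolongation to the relevant jet variables together with the auxiliary equations $f_t=g_t=0$, and splitting first with respect to the derivatives of $u$ and then with respect to the arbitrary elements and their derivatives, which yields exactly the constraints $\tau$ affine in $t$, $\xi$ affine in $x$, $\eta=h(u)$, with $\theta=(2\xi_x-\tau_t)g$ and $\varphi=(2\xi_x-\tau_t-h_u)f-h_{uu}g$. The plan is sound and correctly identifies the delicate point (treating the identity as holding for all admissible $f(x,u),g(x,u)$), so it matches the paper's argument in both structure and substance.
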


\begin{proof}
The proof is done using infinitesimal methods. We seek for operators of the form
\[
  Y = \tau\p_t + \xi\p_x + \eta\p_u + \varphi\p_f + \theta\p_g
\]
that generate continuous equivalence transformations, where $\tau$, $\xi$ and $\eta$ are functions of the variables $t$, $x$ and $u$, whereas $\varphi$ and $\theta$ are regarded as functions of $t$, $x$, $u$, $f$ and $g$. That is, we aim to determine the usual equivalence algebra rather than some generalized equivalence algebra~\cite{mele96Ay,popo10Ay}. The class of equations~\eqref{eq:GenDiffEqs} must be augmented with the auxiliary system
\begin{equation}\label{eq:AuxiliarySystem}
 S_1:= f_t = 0, \quad S_2:= g_t = 0.
\end{equation}
The complete auxiliary system should also include the conditions that the arbitrary elements~$f$ and~$g$ do not depend on nonzero order derivatives of~$u$. 
However, these conditions already are implicitly taken into account by the supposition that the coefficients of~$Y$ does not involve these derivatives. 

The joint invariance condition then reads
\begin{equation}\label{eq:InfinitesimalInvarianceConditionEquivalenceAlgebra}
 \tilde Y\Delta\big|_{\mathcal M} = 0, \quad \tilde YS_1\big|_{\mathcal M} = 0, \quad \tilde YS_2\big|_{\mathcal M} = 0,
\end{equation}
where $\mathcal M$ denotes the joint system of the equations $\Delta=0$, $S_1=0$ and $S_2=0$, 
\[
 \tilde Y = Q^{(2)} + \varphi\p_f + \theta\p_g + \varphi^t\p_{f_t} + \theta^t\p_{g_t}, 
\]
and $Q^{(2)}$ is defined by~\eqref{eq:2ndProlongationOfQ}.
The coefficients $\varphi^t$ and $\theta^t$ can be obtained by the first prolongation considering 
$(t,x,u)$ and $(f,g)$ as independent and dependent variables, respectively,  
\begin{align*}
 &\varphi^t = \tilde\DD_t(\varphi - \tau f_t - \xi f_x - \eta f_u) + \tau f_{tt} + \xi f_{tx} + \eta f_{tu}, \\
 &\theta^t = \tilde\DD_t(\varphi - \tau g_t - \xi g_x - \eta g_u) + \tau g_{tt} + \xi g_{tx} + \eta g_{tu},
\end{align*}
where $\tilde\DD_t = \p_t + f_t\p_f + g_t\p_g + \cdots$ is the corresponding operator of total differentiation with respect to~$t$. In view of the auxiliary system~\eqref{eq:AuxiliarySystem}, the total derivative operator reduces to the partial derivative, i.e.\ $\tilde\DD_t = \p_t$.

The second and the third conditions from~\eqref{eq:InfinitesimalInvarianceConditionEquivalenceAlgebra} then imply that
\[
 \varphi_t - \xi_tf_x - \eta_tf_u = 0, \quad \theta_t - \xi_tg_x - \eta_tg_u = 0.
\]
Since these equations should be satisfied for all values of the arbitrary elements $f$ and $g$, 
we can split with respect to the derivatives $f_x$, $f_u$, $g_x$ and $g_u$ to obtain that
\[
 \varphi_t = \theta_t = \xi_t = \eta_t = 0.
\]
It remains to investigate the first condition in~\eqref{eq:InfinitesimalInvarianceConditionEquivalenceAlgebra}. In detail, it reads
\[
 \eta^t - 2fu_x\eta^x - \varphi u_x^2 - g\eta^{xx} - \theta u_{xx} = 0,
\]
or, after expanding,
\begin{align*}
  &\DD_t\eta - u_t\DD_t\tau - u_x\DD_t\xi - 2fu_x(\DD_x\eta - u_t\DD_x\tau-u_x\DD_x\xi) - \varphi u_x^2 - {} \\
  & {} g(\DD_x^2\eta - u_t\DD_x^2\tau - u_x\DD_x^2\xi - 2u_{tx}\DD_x\tau - 2u_{xx}\DD_x\xi) - \theta u_{xx} = 0.
\end{align*}
We now split this equation with respect to the derivatives of $u$ similar as done in the course of deriving the determining equations of Lie symmetries. The splitting with respect to $u_{tx}$ implies that $\tau=\tau(t)$. Splitting with respect to $u_xu_{xx}$, we derive that $\xi=\xi(x)$. These conditions already simplifies the above invariance condition substantially. Collecting coefficients of the remaining monomials of derivatives leads to
\begin{align*}
 &u_{xx}\colon \qquad \theta= (2\xi_x-\tau_t)g, \\
 &u_x^2\colon \hspace{0.94cm}  \varphi  = (2\xi_x - \tau_t-\eta_u)f - \eta_{uu} g, \\
 &u_x\colon \hspace{0.94cm} 2\eta_xf + 2\eta_{xu}g - \xi_{xx}g = 0, \\
 &1\colon \hspace{1.16cm} g\eta_{xx} = 0.
\end{align*}
In view of $\varphi_t = \theta_t = \xi_t = \eta_t = 0$, the general solution of this system is
\begin{align*}
 &\tau = c_1t+c_2, \quad \xi = c_3x + c_4,\quad \eta = h(u),\\
 &\varphi = (2c_3-c_1-h_u)f-h_{uu} g,\quad \theta = (2c_3-c_1)g,
\end{align*}
where $c_1$, \dots, $c_4$ are arbitrary constants and $h$ is an arbitrary smooth function of~$u$.

This completes the proof of the theorem.
\end{proof}

The equivalence algebra $\mathfrak g^\sim$ can be represented in several ways, which are important for different purposes. The representation crucial for the present case is that $\mathfrak g^\sim=\langle\p_t\rangle\lsemioplus\langle\DDD^x,\DDD^t,\p_x,\GG(h)\rangle$, see Remark~\ref{rem:StructureOfEquivalenceAlgebra} for further details. Another natural representation is $\mathfrak g^\sim = \langle\p_t\lsemioplus\DDD^t\rangle\oplus\langle\p_x\lsemioplus\DDD^x\rangle\oplus\langle\GG(h)\rangle$. This  representation implies that $\mathfrak g^\sim$ is the direct sum of a finite-dimensional and an infinite-dimensional parts. This representation is helpful for the determination of the adjoint actions, see Section~\ref{sec:PreliminaryGroupClassification}.

\section{The equivalence group}\label{sec:EquivalenceGroup}

In the previous section we have determined the equivalence algebra of the class~\eqref{eq:GenDiffEqs} using infinitesimal techniques. In order to obtain the complete point equivalence group (including also discrete transformations), the direct method should be applied. For the sake of completeness and illustration we present the corresponding computations here.

\begin{theorem}\label{thm:EquivGroupOfGenDiffEqs}
The equivalence group~$G^\sim$ of the class of equations~\eqref{eq:GenDiffEqs} is formed by the transformations
\begin{align*}
 &\tilde t = A_1t+A_0, \quad \tilde x = B_1x+B_0, \quad \tilde u = U(u), \\
 &\tilde f = \frac{B_1^2}{A_1U_u}\left(f-\frac{U_{uu}}{U_u}g\right), \quad \tilde g = \frac{B_1^2}{A_1}g,
\end{align*}
where $A_0, A_1, B_0, B_1\in\mathbb{R}$, $U$ is an arbitrary smooth function of $u$ and $A_1B_1U_u\ne0$.
\end{theorem}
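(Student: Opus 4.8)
The plan is to apply the direct method for computing point equivalence groups. I would let $\mathcal T\colon (\tilde t,\tilde x,\tilde u)=(T,X,U)$ be an arbitrary point transformation, with $T,X,U$ functions of $(t,x,u)$, together with the induced transformation of the arbitrary elements $(\tilde f,\tilde g)=(F,G)$, where $F$ and $G$ are sought as functions of $(t,x,u,f,g)$. The requirement is that $\mathcal T$ map every equation $u_t=f u_x^2+gu_{xx}$ of the class to an equation $\tilde u_{\tilde t}=\tilde f\tilde u_{\tilde x}^2+\tilde g\tilde u_{\tilde x\tilde x}$ of the same form. First I would substitute the chain-rule expressions for $\tilde u_{\tilde t}$, $\tilde u_{\tilde x}$, $\tilde u_{\tilde x\tilde x}$ in terms of $t,x,u$-derivatives of $u$ into the transformed equation, then eliminate $u_t$ using the original equation $u_t=fu_x^2+gu_{xx}$, and finally split the resulting identity with respect to the parametric derivatives $u_x$, $u_{xx}$ (and, in an intermediate step, $u_{tx}$, which forces $T$ to be independent of $x$ and $u$, and the coefficient of a mixed monomial, which forces $X$ to be independent of $u$, exactly mirroring the infinitesimal computation in Theorem~\ref{thm:EquivalenceAlgebra}). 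Since $f$ and $g$ range over all admissible values independently, the coefficients of the split equations must hold identically in $f,g$ as well.

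After the reductions $T=T(t)$, $X=X(t,x)$, $U=U(t,x,u)$, the coefficient of $u_{xx}$ yields $G=G(t,x,u,g)$ proportional to $g$ with a factor built from $X_x/T_t$ and $U_u$; the coefficient of $u_x^2$ relates $F$ to $f$, $g$, and derivatives of $U$; the coefficient of $u_x^1$ gives a linear constraint forcing $X_t=0$ and $U$ essentially independent of $x$; and the remaining $u_x^0$ term forces $U_t=0$ and $U$ harmonic-type constraints that collapse to $U=U(u)$. Tracking these, one obtains $T=A_1 t+A_0$, $X=B_1 x+B_0$, $U=U(u)$ with $A_1 B_1 U_u\neq 0$ (the nondegeneracy of $\mathcal T$), and then the explicit formulas
\[
  \tilde g=\frac{B_1^2}{A_1}\,g,\qquad
  \tilde f=\frac{B_1^2}{A_1 U_u}\left(f-\frac{U_{uu}}{U_u}\,g\right)
\]
drop out directly from the $u_{xx}$- and $u_x^2$-coefficients. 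One should also check that $g\neq0$ is preserved, which is immediate since $B_1^2/A_1\neq0$; and conversely that every such transformation genuinely maps the class into itself, which is a short verification by substitution.

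I do not expect a serious obstacle here: the computation is a standard, if somewhat lengthy, application of the direct method, and the answer is already pinned down up to discrete pieces by the infinitesimal result of Theorem~\ref{thm:EquivalenceAlgebra} (the group in the statement is exactly the connected component generated by \eqref{eq:EquivalenceAlgebra} together with the reflections $t\mapsto -t$, $x\mapsto -x$, which are the only available discrete transformations and are already subsumed by allowing $A_1,B_1$ to be arbitrary nonzero reals rather than positive). The one place demanding care is the bookkeeping of the split system and making sure no admissible branch is lost when dividing by Jacobian factors such as $T_t$, $X_x$, or $U_u$ — each of these is nonzero precisely because $\mathcal T$ is a local diffeomorphism, so no generality is sacrificed, but this must be stated explicitly. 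I would present the chain-rule substitution and the splitting, then simply record the resulting overdetermined system and its general solution, omitting the purely mechanical intermediate algebra.
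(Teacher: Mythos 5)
Your proposal follows essentially the same route as the paper: the direct method, substitution of the transformed derivatives, splitting with respect to the parametric derivatives of $u$, and then using the arbitrariness of $f$ and $g$ (equivalently, splitting with respect to $\tilde f$ and $\tilde g$ in the resulting system) to pass from admissible transformations to equivalence transformations. Two points deserve comment. First, a difference of route rather than a gap: the paper does not derive $T=T(t)$ and $X=X(t,x)$ by splitting, but imports them from known results on form-preserving point transformations of evolution equations and from Lemma~1 of~\cite{ivan10Ay}; your plan to obtain them by splitting with respect to $u_{tx}$ and mixed monomials is workable, but the finite computation is noticeably heavier than the infinitesimal one you say it ``exactly mirrors,'' because for a general (non-projectable) transformation the expressions for $\tilde u_{\tilde t}$, $\tilde u_{\tilde x}$, $\tilde u_{\tilde x\tilde x}$ are rational in the derivatives of $u$, so you must be explicit about what you are splitting with respect to.

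Second, a genuine (if small) gap: the constraint $T_{tt}=0$ does not follow from the coefficients of $u_{xx}$, $u_x^2$, $u_x$, $1$, nor from the arbitrariness of $f$ and $g$. Those steps only give $X_t=X_{xx}=U_x=U_t=0$ and the transformation rules $\tilde g=(X_x^{\,2}/T_t)\,g$, $\tilde f=(X_x^{\,2}/T_tU_u)(f-(U_{uu}/U_u)g)$, and they would be perfectly consistent with an arbitrary invertible $T(t)$. What forces $T$ to be affine is the preservation of the auxiliary system: $\tilde g$ must be a function of $(\tilde x,\tilde u)$ alone, i.e.\ $\tilde g_{\tilde t}=0$, and since $g_t=X_t=U_t=0$ the relation $\tilde g=(X_x^{\,2}/T_t)g$ is compatible with this only if $T_t$ is constant (this is the paper's step of differentiating the $u_{xx}$-equation with respect to $t$). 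As written, your tracking of constraints omits this, and without it the stated form $\tilde t=A_1t+A_0$ is not justified. The fix is one line, but it must be included; with it, your argument is complete and coincides with the paper's.
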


\begin{proof}
We begin with a preliminary description of admissible transformations of the class~\eqref{eq:GenDiffEqs}.
In other words, we derive determining equations for point transformations that map a fixed equation from the class~\eqref{eq:GenDiffEqs} to an equation from the same class. As~\eqref{eq:GenDiffEqs} defines a subclass of $(1+1)$-dimensional evolution equations, we at once know that the transformation component of~$t$ depends only on~$t$, see e.g.~\cite{king98Ay,maga93Ay}. Moreover, each equation from the class~\eqref{eq:GenDiffEqs} belongs to 
the class of second-order quasi-linear evolution equations having the form $u_t=F(t,x,u)u_{xx}+G(t,x,u,u_x)$. 
Hence in view of Lemma~1 of~\cite{ivan10Ay} the transformation component of~$x$ depends only on~$t$ and~$x$.
That is, the transformations of variables will be of the form $\tilde t = T(t)$, $\tilde x=X(t,x)$, $\tilde u = U(t,x,u)$, where $T_tX_xU_u\ne0$. The transformed derivatives then read
\[
 \tilde u_{\tilde t} = \frac{1}{T_t}\left(\DD_tU - \frac{X_t}{X_x}\DD_xU\right), \quad \tilde u_{\tilde x} = \frac1{X_x}\DD_xU, \quad \tilde u_{\tilde x\tilde x} = \left(\frac1{X_x}\DD_x\right)^2U.
\]
Substituting these derivatives into the transformed form of~\eqref{eq:GenDiffEqs} and taking into account the initial form~\eqref{eq:GenDiffEqs}, we obtain 
\begin{equation}\label{eq:DeterminingEquationsFiniteEquivalenceGroup}
\frac{1}{T_t}\left(U_t - \frac{X_t}{X_x}\DD_xU\right) + \frac{U_u}{T_t}(fu_x^2+gu_{xx})= 
\tilde f\left(\frac{\DD_xU}{X_x}\right)^2 + \tilde g \left(\frac1{X_x}\DD_x\right)^2U,
\end{equation}
where $\tilde f=\tilde f(X,U)$ and $\tilde g=\tilde g(X,U)$.
Splitting equation~\eqref{eq:DeterminingEquationsFiniteEquivalenceGroup} with respect to $u_{xx}$ and $u_x$ yields
\begin{subequations}\label{eq:DeterminingEquationsAdmissibleTransformations}
\begin{gather}
  \label{eq:DeterminingEquationsAdmissibleTransformationsA}
 \hbox to 13mm {$u_{xx}\colon$\hfil} \tilde g = \frac{X_x{}^2}{T_t}g,
\\\label{eq:DeterminingEquationsAdmissibleTransformationsB}
 \hbox to 13mm {$u_x^2\colon$\hfil}  f\frac{U_u}{T_t} = \tilde f\frac{U_u^2}{X_x^2}+\tilde g\frac{U_{uu}}{X_x^2}, 
\\\label{eq:DeterminingEquationsAdmissibleTransformationsC}
 \hbox to 12mm {$u_x\colon$\hfil}    -\frac{X_tU_u}{T_tX_x}=\frac{\tilde g}{X_x^2}\left(2U_{ux}-\frac{X_{xx}}{X_x}U_u\right)+2\tilde f\frac{U_xU_u}{X_x^2}, 
\\\label{eq:DeterminingEquationsAdmissibleTransformationsD}
 \hbox to 13mm {$1\colon$\hfil}      \frac1{T_t}\left(U_t-\frac{X_t}{X_x}U_x\right)=\tilde f\left(\frac{U_x}{X_x}\right)^2+\frac{\tilde g}{X_x^2}\left( U_{xx}-\frac{X_{xx}}{X_x}U_x\right).
\end{gather}
\end{subequations}
Equation~\eqref{eq:DeterminingEquationsAdmissibleTransformationsA} defines the transformation rule for the arbitrary element~$g$.
Substituting equation~\eqref{eq:DeterminingEquationsAdmissibleTransformationsA} into~\eqref{eq:DeterminingEquationsAdmissibleTransformationsB} leads to the transformation rule for the arbitrary element~$f$,
\[
 \tilde f = \frac{X_x^2}{T_tU_u}f - \frac{X_x^2U_{uu}}{T_tU_u^2}g.
\]
In general, system~\eqref{eq:DeterminingEquationsAdmissibleTransformations} forms the determining equations for admissible transformations of the class~\eqref{eq:GenDiffEqs}. Unfortunately, this system is too difficult to be integrated since there are a lot of different cases of its solution depending on specific values of the arbitrary elements. However, this system allows to easily determine the equivalence group. For this aim, we can split 
equations~\eqref{eq:DeterminingEquationsAdmissibleTransformationsC} and~\eqref{eq:DeterminingEquationsAdmissibleTransformationsD}
with respect to~$\tilde f$ and~$\tilde g$. This gives at once $X_t=X_{xx}=U_x=U_t=0$ since $U_u\ne0$. Furthermore, differentiating the first equation of system~\eqref{eq:DeterminingEquationsAdmissibleTransformations} with respect to $t$ leads to the final restriction $T_{tt}=0$. Solving these determining equations for equivalence transformations completes the proof of the theorem.
\end{proof}

\begin{corollary}
A complete set of discrete equivalence transformations in the group~$G^\sim$, 
which are independent up to their composition and composition with continuous transformations 
are exhausted by the three transformations of alternating signs
\[\arraycolsep=0ex
\begin{array}{ll}
I_t\colon\qquad&(t,x,u,f,g)\mapsto(-t,x,u,-f,-g),\\
I_x\colon\qquad&(t,x,u,f,g)\mapsto(t,-x,u,f,g),\\
I_u\colon\qquad&(t,x,u,f,g)\mapsto(t,x,-u,-f,g).
\end{array}
\]
\end{corollary}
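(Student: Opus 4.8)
The plan is to identify the discrete transformations by exploiting the already-derived determining equations for admissible transformations, restricted to the equivalence group. From the proof of Theorem~\ref{thm:EquivGroupOfGenDiffEqs} we know that every element of $G^\sim$ has the form $\tilde t=A_1t+A_0$, $\tilde x=B_1x+B_0$, $\tilde u=U(u)$ with $A_1B_1U_u\ne0$, together with the induced action on $f$ and $g$. The continuous component is generated by the vector fields in~\eqref{eq:EquivalenceAlgebra}: the flows of $\p_t$, $\p_x$, $\DDD^t$, $\DDD^x$, $\GG(h)$ realize arbitrary shifts $A_0,B_0$, arbitrary \emph{positive} scalings $A_1>0$, $B_1>0$, and — via $\GG(h)$ — an arbitrary orientation-preserving reparametrization $\tilde u=U(u)$ with $U_u>0$. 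Hence modulo the identity component the only freedom left lies in the signs of $A_1$, $B_1$ and the sign of $U_u$ (equivalently, whether $U$ reverses orientation); this gives a group isomorphic to $(\mathbb Z_2)^3$, with the three generators corresponding precisely to $I_t$, $I_x$, $I_u$.

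The key steps are then: (i) observe that any $\mathcal T\in G^\sim$ can be written as $\mathcal T=\mathcal T_{\mathrm c}\circ\mathcal T_{\mathrm d}$, where $\mathcal T_{\mathrm c}$ lies in the identity component and $\mathcal T_{\mathrm d}$ is one of the eight sign combinations — concretely, factor out a positive scaling in $t$ to normalize $|A_1|\to 1$ up to sign, similarly for $x$, and compose with the flow of a suitable $\GG(h)$ to reduce $U$ to either $u\mapsto u$ or $u\mapsto -u$; (ii) check that the three maps $I_t,I_x,I_u$ as written indeed belong to $G^\sim$, i.e.\ that the listed transformation rules for $f,g$ are consistent with the formulas in Theorem~\ref{thm:EquivGroupOfGenDiffEqs}: for $I_t$ one has $A_1=-1$, so $\tilde f=-f$, $\tilde g=-g$; for $I_x$, $B_1=-1$ gives $B_1^2=1$ and hence $\tilde f=f$, $\tilde g=g$; for $I_u$, $U(u)=-u$ gives $U_u=-1$, $U_{uu}=0$, whence $\tilde f=-f$, $\tilde g=g$; (iii) note that these three transformations commute and each squares to the identity, so they generate a subgroup $(\mathbb Z_2)^3$, and no nontrivial element of it lies in the identity component (each changes the sign of some $A_1$, $B_1$, or $U_u$, a property invariant under the connected flows); (iv) conclude that $\{I_t,I_x,I_u\}$ is a complete and independent set of generators of $G^\sim/G^\sim_{\mathrm{id}}$, where "independent up to composition and composition with continuous transformations'' is exactly the statement that their images form a basis of $(\mathbb Z_2)^3$.

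The main obstacle — really a point requiring care rather than a genuine difficulty — is step~(i): justifying that every orientation-preserving diffeomorphism $U$ of an interval with $U_u>0$ can be reached from the identity by the flow of some $\GG(h)$, so that the residual $u$-freedom is genuinely just the single sign $\pm1$. This is a standard fact about one-parameter subgroups of $\mathrm{Diff}$ generated by non-vanishing vector fields on $\mathbb R$ (locally; globally one may need a composition of several such flows, which is harmless since we work up to the identity component), but it is the place where the argument is least "purely algebraic''. Everything else is bookkeeping: substituting specific parameter values into the explicit formulas of Theorem~\ref{thm:EquivGroupOfGenDiffEqs} and verifying the group-theoretic relations $I_t^2=I_x^2=I_u^2=\mathrm{id}$ and mutual commutativity.
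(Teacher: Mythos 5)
Your argument is correct and follows exactly the reasoning the paper leaves implicit: the corollary is stated without proof as an immediate consequence of Theorem~\ref{thm:EquivGroupOfGenDiffEqs}, the point being that the parameterization by $A_0,A_1,B_0,B_1,U$ with $A_1B_1U_u\ne0$ has identity component $\{A_1>0,\,B_1>0,\,U_u>0\}$ and quotient $(\mathbb Z_2)^3$ generated by the three sign changes, which the paper later realizes as $I_t=\mathscr D^t(-1)$, $I_x=\mathscr D^x(-1)$, $I_u=\mathscr G(-u)$. Your verification of the induced actions on $f,g$ and your caveat about reaching a general orientation-preserving $U$ by flows of $\GG(h)$ are both consistent with the paper's own remarks.
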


The equation~\eqref{eq:GenDiffEqs} with the specific value $\theta_0=(f,g)=(-4/3u^{-7/3},u^{-4/3})$ admits the Lie symmetry operator $x^2\p_x-3xu\p_u$. The transformations from the corresponding one-parameter transformation group belong to~$\mathrm T(\theta_0,\theta_0)$. As the associated admissible transformations are not induced by elements of the equivalence group~$G^\sim$ of the class~\eqref{eq:GenDiffEqs}, this class is not normalized. 
Similar assertions are true for the potential Burgers equations ($f=g=1$), linear equations from the class~\eqref{eq:GenDiffEqs} ($f=0$, $g_u=0$), etc.
As system~\eqref{eq:DeterminingEquationsAdmissibleTransformations} is too complicated and the equivalence group~$G^\sim$ is quite narrow in comparison with the class~\eqref{eq:GenDiffEqs} (the transformations from~$G^\sim$ are parameterized by four constants and only a singe function of one argument and, at the same time, the tuple of arbitrary elements consists of two functions of two arguments), this justifies why preliminary group classification is well suited for the class of equations~\eqref{eq:GenDiffEqs}.

\begin{remark}\label{rem:OnParticularSiplificationOfGenDiffEqs}
It is not possible to simplify the general equation from the class~\eqref{eq:GenDiffEqs} by equivalence transformations.
The interesting particular case of simplification by equivalence transformations is given by equations of the form~\eqref{eq:GenDiffEqs} with $f$ proportional to~$g$.
If $f=cg$, where $c$ is a nonzero constant, then the corresponding equation of the form~\eqref{eq:GenDiffEqs} is mapped by the transformation 
\begin{equation}\label{eq:TransForGenDiffEqsWithFEqualG}
\tilde t =t,\ \tilde x=x,\ \tilde u=e^{cu}
\end{equation}
to the equation of the same form with $\tilde f=0$ and $\tilde g=g(\tilde x,c^{-1}\ln\tilde u)$.
\end{remark}

\section{Classification of subalgebras}\label{sec:ClassificationOfSubalgebras}

In order to carry out preliminary group classification, it is necessary to derive an optimal list of inequivalent subalgebras. In the existing literature on the subject, usually only subalgebras of a certain finite-dimensional subalgebra of the equivalence algebra are classified up to inner automorphisms of this subalgebra. This restriction is, however, not necessary in the present case, although this is done in~\cite{nadj10Ay}. Furthermore, it should be noted that in~\cite{nadj10Ay} an algebra was chosen for preliminary group classification, which is not related to the corresponding equivalence algebra that was derived.

To classify subalgebras of a Lie algebra of vector fields, it is necessary to know the adjoint action of the corresponding transformation (pseudo)\-group on this algebra. There exist two different methods for the computation of the adjoint action. The first method employs information on the structure of the Lie algebra and is more suitable in the finite-dimensional case although it also works for certain infinite-dimensional algebras~\cite{bihl09Ay,bihl10Dy,fush94Ay,popo10Cy}.
The adjoint action of a one-parameter Lie group generated by an element~$\vv$ of the Lie algebra on this algebra can be determined either from the Lie series
\[
 \mathbf{w}(\ve) = \Ad(e^{\ve\vv})\mathbf{w}_0 := \sum_{n=0}^{\infty}\frac{\ve^n}{n!}\{\vv^n,\ww_0 \},
\]
where $\{\vv^0,\ww_0\} := \ww_0, \{\vv^n,\ww_0\} := (-1)^n[\vv,\{\vv^{n-1},\ww_0\}]$, or, by solving the Cauchy problem
\[
 \dd{\mathbf{w}}{\ve} = [\mathbf{w},\vv], \quad \mathbf{w}(0) = \mathbf{w}_0,
\]
see~\cite{olve86Ay} for more details.

Following the second method, it is necessary only to calculate the actions of push-forwards of the transformations from the (pseudo)group on generating vector fields of the algebra. While the first method involves only the abstract structure of Lie algebras and therefore at once gives results for the whole class of isomorphic algebras, the second method relies on the specific realization of the Lie algebra by vector fields. At the same time, the second method works more properly in the infinite-dimensional case.

We now derive the optimal lists of one- and two-dimensional subalgebras for the entire equivalence algebra~$\mathfrak g^\sim$. 

The nonzero commutation relations of generating elements~\eqref{eq:EquivalenceAlgebra} of~$\mathfrak g^\sim$ are
\[
 [\p_x,\DDD^x]=\p_x,\quad [\p_t,\DDD^t] = \p_t,\quad [\GG(h^1),\GG(h^2)] = \GG(h^1h^2_u-h^2h^1_u).
\]
The nonidentical adjoint actions related to generating elements of~$\mathfrak g^\sim$ and computed using the first method are 
\[\arraycolsep=0ex
\begin{array}{lll}
\Ad(e^{\ve\p_x})\DDD^x = \DDD^x - \ve\p_x, \qquad& \Ad(e^{\ve\DDD^x})\p_x = e^{\ve}\p_x,\qquad&\Ad({e^{\ve\GG(h^1)}})\GG(h^2) = \GG(\tilde h^2), \\
\Ad(e^{\ve\p_t})\DDD^t = \DDD^t - \ve\p_t, \qquad& \Ad(e^{\ve\DDD^t})\p_t = e^{\ve}\p_t,\\
 
\end{array}
\]
where $\tilde h^2(u,\ve)=h^2(H^1(u,-\ve))/H^1_u(u,-\ve)$ and 
$\{\tilde u = H^1(u,\ve)\}$ is the one-parameter transformation group generated by the projection of the operator $\GG(h^1)$ to the space of the variable~$u$, i.e.~$H^1_\ve = h^1(H^1)$ and $H^1(u,0)=u$. 
Although the four adjoint actions related to the finite-dimensional part of~$\mathfrak g^\sim$ are suitable to be applied to the classification, 
there arises an inconvenience with the adjoint action $\Ad({e^{\ve\GG(h^1)}})$ owing to problems with proving the existence of the required function $h^1$.  

This is why, in what follows we use the adjoint action of the entire equivalence group~$G^\sim$ on the equivalence algebra~$\mathfrak g^\sim$, calculated by the second method. 
Any transformation $\mathcal T$ from~$G^\sim$ can be represented, for convenience,  as a composition 
\[
\mathcal T=\mathscr T^t(A_0)\mathscr T^x(B_0)\mathscr D^t(A_1)\mathscr D^x(B_1)\mathscr G(U),
\] 
cf.~Theorem~\ref{thm:EquivGroupOfGenDiffEqs}, where
\begin{equation}\label{eq:EquivGroupGeneratingTrans}
\arraycolsep=0ex
\begin{array}{llllll}
\mathscr T^t(A_0)\colon\quad& \tilde t=t+A_0,\quad& \tilde x=x,    \quad& \tilde u=u,   \quad& \tilde g=g,        \quad& \tilde f=f,                       \\
\mathscr T^x(B_0)\colon\quad& \tilde t=t,    \quad& \tilde x=x+B_0,\quad& \tilde u=u,   \quad& \tilde g=g,        \quad& \tilde f=f,                       \\
\mathscr D^t(A_1)\colon\quad& \tilde t=A_1t, \quad& \tilde x=x,    \quad& \tilde u=u,   \quad& \tilde g=A_1^{-1}g,\quad& \tilde f=A_1^{-1}f,               \\
\mathscr D^x(B_1)\colon\quad& \tilde t=t,    \quad& \tilde x=B_1x, \quad& \tilde u=u,   \quad& \tilde g=B_1^2g,   \quad& \tilde f=B_1^2f,                  \\
\mathscr G(U)    \colon\quad& \tilde t=t,    \quad& \tilde x=x,    \quad& \tilde u=U(u),\quad& \tilde g=g,        \quad& \tilde f=f/U_u-gU_{uu}/U_u^2
\end{array}
\end{equation}
are translations with respect to~$t$ and~$x$, scalings with respect to~$t$ and~$x$ and an arbitrary transformation of~$u$, respectively, and $A_1B_1U_u\ne0$.
Transformations of each of the above kinds form a subgroup of~$G^\sim$. 
The last three subgroups contain the discrete transformations~$I_t$, $I_x$ and~$I_u$, respectively. 
Namely, $I_t=\mathscr D^t(-1)$, $I_x=\mathscr D^x(-1)$ and $I_u=\mathscr G(-u)$. 
As a result, additionally to avoiding the above problems with the existence of required values of functional parameters, 
in this way we at once include discrete equivalence transformations in the classification procedure. 

The nonidentical actions of push-forwards of transformations~\eqref{eq:EquivGroupGeneratingTrans} on generating elements of $\mathfrak g^\sim$ 
are exhausted by the followings:
\[\arraycolsep=0ex
\begin{array}{lll}
\mathscr T^x_*(B_0)\DDD^x = \DDD^x-B_0\p_x, \qquad& \mathscr D^x_*(B_1)\p_x = B_1\p_x,  \qquad& \mathscr G_*(U)\GG(h) = \GG(h(\tilde U)/\tilde U_u),\\
\mathscr T^t_*(A_0)\DDD^t = \DDD^t-A_0\p_t, \qquad& \mathscr D^t_*(A_1)\p_t = A_1\p_t,
\end{array}
\]
where the function $\tilde U=\tilde U(u)$ is the inverse of~$U$.

\begin{remark}\label{rem:StructureOfEquivalenceAlgebra}
 The kernel algebra generated by $\p_t$ is an ideal in the equivalence algebra~$\mathfrak g^\sim$, which has the structure $\mathfrak g^\sim=\langle\p_t\rangle\lsemioplus \langle\DDD^x,\DDD^t,\p_x,\GG(h)\rangle$. Hence the classification of subalgebras of~$\mathfrak g^\sim$ can be reduced to the classification of subalgebras of the algebra $\mathfrak g^\sim_{\rm ess}=\langle\DDD^x,\DDD^t,\p_x,\GG(h)\rangle$, which is the ``essential'' part of~$\mathfrak g^\sim$. This will yield the possible Lie invariance algebra extensions of the kernel algebra obtainable by preliminary group classification. Moreover, the push-forwards of translations and scalings with respect to~$t$ should not be applied under the classification of subalgebras.
\end{remark}

\begin{theorem}\label{thm:InequivalentOneDimensionalSubalgebras}
 An optimal list of one-dimensional subalgebras of the algebra $\mathfrak g^\sim_{\rm ess}$ is exhausted by the algebras
\begin{equation}\label{eq:OptimalListOneDimension}
 \langle\DDD^x + a\DDD^t-\GG(\delta)\rangle, \quad \langle\DDD^t+\tilde\delta\p_x-\GG(\delta)\rangle, \quad \langle\p_x-\GG(\delta)\rangle,\quad \langle\GG(1)\rangle,
\end{equation}
where $a\in\mathbb R$ and $\delta,\tilde\delta\in\{0,1\}$.
\end{theorem}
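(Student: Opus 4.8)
The plan is to classify one-dimensional subalgebras of $\mathfrak g^\sim_{\rm ess}=\langle\DDD^x,\DDD^t,\p_x,\GG(h)\rangle$ by exhaustively running through the possible "leading terms" of a generic element and then simplifying it with the push-forwards listed before Remark~\ref{rem:StructureOfEquivalenceAlgebra}. A general element of $\mathfrak g^\sim_{\rm ess}$ has the form $Q=a_1\DDD^x+a_2\DDD^t+a_3\p_x+\GG(h)$ with $a_1,a_2,a_3\in\R$ and $h=h(u)$ an arbitrary smooth function; a one-dimensional subalgebra is spanned by such a $Q\ne0$, considered up to nonzero scalar multiples. First I would organize the classification by which of $a_1$, $a_2$ are nonzero, since $\DDD^x$ and $\DDD^t$ are the only operators whose adjoint action "moves" $\p_x$ and $\p_t$: precisely, $\mathscr D^x_*(B_1)\p_x=B_1\p_x$, $\mathscr T^x_*(B_0)\DDD^x=\DDD^x-B_0\p_x$, and the analogous relations on the $t$-side (the latter, however, are \emph{not} to be applied, per Remark~\ref{rem:StructureOfEquivalenceAlgebra}), together with $\mathscr G_*(U)\GG(h)=\GG(h(\tilde U)/\tilde U_u)$.

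The case split goes as follows. \emph{(i) $a_1\ne0$}: rescale so $a_1=1$; use $\mathscr T^x(B_0)$ with $B_0=a_3$ to kill the $\p_x$-term (this is legitimate since it is not a $t$-translation); one is left with $\DDD^x+a_2\DDD^t+\GG(h)$. Then apply $\mathscr G(U)$: the action on $\GG(h)$ is $\GG(h)\mapsto\GG(h(\tilde U)/\tilde U_u)$, i.e.\ it is the natural action of diffeomorphisms of the $u$-line on vector fields $h(u)\p_u$. Every nonzero smooth vector field on the line is locally rectifiable to $\p_u$, while the zero field stays zero; hence $h$ can be normalized to $\delta\in\{0,1\}$, and one gets $\DDD^x+a\DDD^t-\GG(\delta)$ with $a=a_2\in\R$ (the sign is cosmetic, absorbed into the choice of $U$). \emph{(ii) $a_1=0$, $a_2\ne0$}: rescale $a_2=1$; now the $\p_x$-term \emph{cannot} be scaled away by $\mathscr D^x$ in combination with anything that fixes $\DDD^t$, but one can use $\mathscr D^x(B_1)$ to rescale $a_3\p_x$ to $\tilde\delta\p_x$ with $\tilde\delta\in\{0,1\}$ (killing it if $a_3=0$, normalizing its magnitude to $1$ otherwise; its sign via $I_x=\mathscr D^x(-1)$), and $\mathscr G(U)$ again to normalize $h$ to $\delta\in\{0,1\}$; this yields $\DDD^t+\tilde\delta\p_x-\GG(\delta)$. \emph{(iii) $a_1=a_2=0$, $a_3\ne0$}: then $Q=a_3\p_x+\GG(h)$; rescale $a_3=1$ (sign via $I_x$) and normalize $h$ to $\delta$, giving $\p_x-\GG(\delta)$. \emph{(iv) $a_1=a_2=a_3=0$}: then $Q=\GG(h)$ with $h\ne0$; rectify $h$ to $1$, giving $\GG(1)$.

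The remaining work is to check that the listed algebras are genuinely pairwise inequivalent, i.e.\ that no push-forward from $G^\sim$ (minus the $t$-translations/scalings) identifies two of them, and that the parameters $a$, $\delta$, $\tilde\delta$ are themselves essential. Inequivalence across the four families is clear from the structure of the adjoint action: the coefficient pattern of $\DDD^x$ and $\DDD^t$ is an invariant because the only push-forwards touching those operators add multiples of $\p_x$ or $\p_t$ (and the latter is forbidden), never mixing $\DDD^x$ with $\DDD^t$. Within the first family, $a$ is invariant because $\mathscr D^x$ and $\mathscr T^x$ scale/shift only $\p_x$ and so fix the ratio of the $\DDD^t$- to $\DDD^x$-coefficient, and the $\mathscr G$-action does not touch the finite-dimensional part at all. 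For the $\delta$ (and $\tilde\delta$) dichotomies one must observe that $\mathscr G(U)$ sends $\GG(0)$ to $\GG(0)$ and never produces it from a $\GG(h)$ with $h\ne0$, while $\mathscr D^x$ preserves whether the $\p_x$-coefficient vanishes.

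The main obstacle I anticipate is bookkeeping rather than conceptual: one must be careful that in case~(i) the transformation $\mathscr T^x(B_0)$ used to remove $\p_x$ does not reintroduce a $\DDD^t$- or $\p_t$-component and is not a forbidden $t$-transformation --- it is not, since $\mathscr T^x$ acts only in $x$ --- and, more delicately, that when normalizing $h$ via $\mathscr G(U)$ in case~(i) or~(ii) the \emph{same} $U$ does not undo the earlier normalization of the finite-dimensional coefficients. This is guaranteed because $\mathscr G_*(U)$ fixes $\DDD^x$, $\DDD^t$ and $\p_x$ entirely, so the two normalization steps decouple. The one genuinely substantive point is the claim that an arbitrary nonzero smooth $h(u)$ can be rectified to the constant $1$ by a change of variable $\tilde u=U(u)$; this is exactly the elementary flow-box statement for vector fields on $\R$ in one dimension (solve $U_u=1/h(u)$ locally where $h\ne0$), and it is the analogue, in the pseudo-group setting, of the "existence of $h^1$" difficulty flagged in the discussion preceding the theorem --- except that here it goes the easy direction, from arbitrary $h$ to the constant $1$, so no existence obstruction arises.
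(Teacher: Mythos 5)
Your proposal is correct and follows essentially the same route as the paper's proof: the same case split on the coefficients of $\DDD^x$, $\DDD^t$, $\p_x$, the same use of $\mathscr T^x_*$, $\mathscr D^x_*$ and scaling to normalize the finite-dimensional part, and the same local rectification of $h$ via $\mathscr G_*(U)$ to reduce it to $\delta\in\{0,1\}$. The only difference is that you additionally spell out the pairwise-inequivalence and parameter-essentiality checks, which the paper leaves implicit.
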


\begin{proof}
We use the approach for the classification of subalgebras that is outlined in~\cite{olve86Ay}. 
We~start with the most general form of an element of the algebra~$\mathfrak g^\sim_{\rm ess}$,
\[
 \vv^1= a^1_1\DDD^x+a^1_2\DDD^t+a^1_3\p_x + \GG(h^1),
\]
where the constants $a^1_1$, $a^1_2$, $a^1_3$ and the function $h^1=h^1(u)$ are arbitrary but fixed, 
and simplify it as much as possible by means of push-forwards of transformations from the equivalence group~$G^\sim$. 
In the case $h^1\ne0$ the function-parameter~$h^1$ can be set to~$-1$ by usage of $\mathscr G_*(U)$ 
with the inverse~$U$ to a solution $\tilde U=\tilde U(u)$ of the equation $\tilde U_u=-h^1(\tilde U)$.
In other words, up to $G^\sim$-equivalence we can always assume that $-h^1=\delta\in\{0,1\}$.

If $a^1_1\ne0$, we scale~$\vv^1$ to set $a^1_1=1$ and use the push-forward of a $\mathscr T^x(B_0)$ to set $a^1_3=0$. 
The notation $a=a^1_2$ leads to the first subalgebra in the list~\eqref{eq:OptimalListOneDimension}. 

If $a^1_1=0$ and $a^1_2\ne0$, we set $a^1_2=1$ by scaling~$\vv^1$ and use $\mathscr D^x_*(B_1)$ with certain~$B_1$ 
to set $a^1_3=-\tilde\delta$, where $\tilde\delta\in\{0,1\}$. 
This gives the second listed subalgebra. 

In the remaining case $a^1_1=a^1_2=0$ we obtain the two last subalgebras from the list~\eqref{eq:OptimalListOneDimension} 
under the assumptions $a^1_3\ne0$ and $a^1_3=0$, respectively, since  
the nonvanishing value of~$a^1_3$ is set to be equal to~1 by scaling~$\vv^1$ and  
the condition $a^1_3=0$ necessarily implies that $h^1\ne0$ and hence, up to $G^\sim$-equivalence, $h^1=1$.
\end{proof}

\begin{theorem}\label{thm:InequivalentTwoDimensionalSubalgebras}
 An optimal list of two-dimensional subalgebras of the algebra $\mathfrak g^\sim_{\rm ess}$ reads
\begin{align}\label{eq:OptimalListTwoDimension}
\begin{split}
&\langle\DDD^x-\GG(\hat\delta), \DDD^t-\GG(\delta)\rangle,\quad 
 \langle\DDD^x+a\DDD^t+\GG(u), \p_x-\GG(1)\rangle,\quad 
 \langle\DDD^x+a\DDD^t-\GG(\delta), \p_x\rangle \\
&\langle\DDD^t-\GG(\delta),\p_x-\GG(\tilde\delta)\rangle,\quad 
 \langle\DDD^x+a\DDD^t+b\GG(u),\GG(1)\rangle,\quad 
 \langle\DDD^t-\delta\p_x+b\GG(u),\GG(1)\rangle,\\ 
&\langle\p_x-\delta\GG(u),\GG(1)\rangle,\quad
 \langle\GG(1),\GG(u)\rangle,
\end{split}
\end{align}
where $a$, $b$, $\delta$, $\tilde\delta$ and $\hat\delta$ are constants,
and we can assume that 
$\delta,\tilde\delta\in\{0,1\}$, $\hat\delta\in\mathbb R$ if $\delta=1$ and $\hat\delta\in\{0,1\}$ if $\delta=0$.
\end{theorem}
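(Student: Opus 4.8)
\textbf{Proof proposal for Theorem~\ref{thm:InequivalentTwoDimensionalSubalgebras}.}

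The plan is to classify two-dimensional subalgebras $\mathfrak b=\langle\vv^1,\vv^2\rangle$ of $\mathfrak g^\sim_{\rm ess}=\langle\DDD^x,\DDD^t,\p_x,\GG(h)\rangle$ in the same spirit as the one-dimensional case in Theorem~\ref{thm:InequivalentOneDimensionalSubalgebras}, but now keeping careful track of the closure condition $[\vv^1,\vv^2]\in\langle\vv^1,\vv^2\rangle$. First I would organize the work according to the projection $\pi$ of $\mathfrak g^\sim_{\rm ess}$ onto the ``skeleton'' quotient $\langle\DDD^x,\DDD^t,\p_x\rangle$ obtained by killing all $\GG(h)$; equivalently, I would look at how the two generators sit relative to the infinite-dimensional ideal $\mathfrak i=\langle\GG(h)\rangle$ (this is an ideal because $[\DDD^x,\GG(h)]=2\GG(h)-\dots$, $[\DDD^t,\GG(h)]=0$, $[\p_x,\GG(h)]=0$, and $[\GG(h^1),\GG(h^2)]\in\mathfrak i$; I should state the exact brackets $[\DDD^x,\GG(h)]=\GG(uh_u-2h)$ or whatever the normalization gives, reading them off the commutation relations already listed). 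The dimension of $\pi(\mathfrak b)$ is $0$, $1$, or $2$, and these three cases correspond roughly to: both generators lie in $\mathfrak i$ (giving the last family $\langle\GG(1),\GG(u)\rangle$), exactly one independent skeleton direction is present, or two independent skeleton directions are present (the families headed by $\langle\DDD^x-\GG(\hat\delta),\DDD^t-\GG(\delta)\rangle$).

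For the case $\dim\pi(\mathfrak b)=2$, I would take a basis $\vv^1=\DDD^x+a\DDD^t+c\,\p_x+\GG(h^1)$, $\vv^2=a'\DDD^t+c'\p_x+\GG(h^2)$ (after a change of basis within $\mathfrak b$ arranging a triangular form in the skeleton), further sub-casing on whether the $\DDD^t$-component of $\vv^2$ survives; then I would compute $[\vv^1,\vv^2]$ explicitly and impose that it be a linear combination of $\vv^1,\vv^2$, which forces strong constraints on $h^1,h^2$ (typically $h^1$ must be $\GG$-conjugate to $0$, $1$, or $u$, since $[\DDD^x,\GG(h)]$ and $[\GG(h^1),\GG(h^2)]$ have to land back inside the two-dimensional span). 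After that I would use the push-forwards $\mathscr T^x_*,\mathscr D^x_*,\mathscr G_*$ (but not $\mathscr T^t_*,\mathscr D^t_*$, per Remark~\ref{rem:StructureOfEquivalenceAlgebra}) to normalize: $\mathscr G_*$ to reduce $h^1$ (or $h^2$) to one of $0,1,u$, $\mathscr D^x_*$ and $\mathscr T^x_*$ to kill or scale the $\p_x$-components, and the residual normalizer to fix the remaining discrete/sign freedom in $\delta,\tilde\delta,\hat\delta$. For $\dim\pi(\mathfrak b)=1$ I would write $\vv^1$ as a normalized one-dimensional representative from~\eqref{eq:OptimalListOneDimension} (i.e.\ $\DDD^x+a\DDD^t-\GG(\delta)$, or $\DDD^t+\tilde\delta\p_x-\GG(\delta)$, or $\p_x-\GG(\delta)$) and $\vv^2=\GG(h^2)\in\mathfrak i$; closure $[\vv^1,\GG(h^2)]\in\langle\vv^1,\GG(h^2)\rangle$ becomes $[\vv^1,\GG(h^2)]=\lambda\GG(h^2)$ for some scalar, an eigenvalue-type ODE for $h^2$ whose solutions (up to the stabilizer of $\vv^1$ in $G^\sim$) are again $h^2\in\{1,u\}$ essentially, producing the middle rows of~\eqref{eq:OptimalListTwoDimension}. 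The case $\dim\pi(\mathfrak b)=0$ reduces to classifying two-dimensional subalgebras of $\mathfrak i\cong\{\text{vector fields }h(u)\p_u\}$, where the bracket is $[\GG(h^1),\GG(h^2)]=\GG(h^1h^2_u-h^2h^1_u)$; a two-dimensional closed subalgebra here is, up to $\mathscr G_*$-conjugacy, $\langle\GG(1),\GG(u)\rangle$ (the affine algebra in one variable), which is a small standard fact.

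The main obstacle I anticipate is \emph{not} the bracket bookkeeping but the same existence-of-functional-parameter issue flagged in the text just before Theorem~\ref{thm:InequivalentOneDimensionalSubalgebras}: when I invoke $\mathscr G_*(U)$ to normalize $h^1$ to $0$, $1$, or $u$, I must argue that the relevant ODE for $U$ (or its inverse $\tilde U$, with $\tilde U_u=h^1(\tilde U)$ in the $h^1\to1$ reduction, and the corresponding linear equation in the $h^1\to u$ reduction) has a \emph{local} solution with $U_u\ne0$ — and, crucially, that after this normalization the \emph{second} generator's $\GG$-component is still only reducible to one of finitely many forms, which is why the coefficients $a,b$ in~\eqref{eq:OptimalListTwoDimension} genuinely remain as continuous parameters rather than being normalizable away. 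A secondary subtlety is verifying that the listed families are pairwise inequivalent and that no further identifications among them are possible (e.g.\ showing $\langle\DDD^x+a\DDD^t+\GG(u),\p_x-\GG(1)\rangle$ is not $G^\sim$-conjugate to a member of $\langle\DDD^x+a\DDD^t-\GG(\delta),\p_x\rangle$); this I would handle by pointing to invariants of the pair under the adjoint action — the structure constants of $\mathfrak b$, the rank of $\pi|_{\mathfrak b}$, and whether $\mathfrak b\cap\mathfrak i$ contains a nonconstant $\GG(h)$ — together with the specific normalization choices already made, exactly as is standard in Olver-style subalgebra classification.
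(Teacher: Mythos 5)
Your proposal follows essentially the same route as the paper's proof: split according to the rank of the $(\DDD^x,\DDD^t,\p_x)$-components of the two generators (the paper encodes this via the coefficient matrices $A_{\mu_1\cdots\mu_n}$ and the dichotomy $\rank A_{123}=2$ versus $\rank A_{123}\leqslant1$), triangularize by a change of basis, impose closure to force the $\GG$-components into $\langle 1,u\rangle$ up to conjugation, and normalize with the push-forwards $\mathscr T^x_*$, $\mathscr D^x_*$, $\mathscr G_*$ while avoiding those in~$t$. One small correction: $\GG(h)$ in fact commutes with $\DDD^x$, $\DDD^t$ and $\p_x$ (only $[\GG(h^1),\GG(h^2)]$ is nonzero among brackets involving~$\GG$), so the ideal structure you invoke is even simpler than your guessed bracket $[\DDD^x,\GG(h)]$ suggests, and this does not affect the validity of your plan.
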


\begin{proof}
The proof of the above theorem is similar to those in the one-dimensional case, see a detailed explanation and other examples in~\cite[Chapter~7]{bihl10Ay}. We start with two linearly independent copies of the most general element of~$\mathfrak g^\sim$,
\begin{align*}
 &\vv^1=a^1_1\DDD^x+a^1_2\DDD^t+a^1_3\p_x+\GG(h^1), \\
 &\vv^2=a^2_1\DDD^x+a^2_2\DDD^t+a^2_3\p_x+\GG(h^2),
\end{align*}
and simplify them as much as possible by means of adjoint actions and nondegenerate linear combining. The additional complication concerns taking into account that the elements~$\vv^1$ and~$\vv^2$ should form a basis of a Lie algebra, i.e., their commutator should lie in their span, $[\vv^1,\vv^2]\in\langle\vv^1,\vv^2\rangle$. Usually this places further restrictions on the admitted form of the elements.

To simply describe the conditions defining the different cases of the classification of two-dimensional subalgebras, we introduce the matrix notation 
\[
    A_{\mu_1\cdots\mu_n} := \left(\begin{array}{ccc} a^1_{\mu_1} & \cdots & a^1_{\mu_n} \\ a^2_{\mu_1} & \cdots & a^2_{\mu_n} \end{array}\right),
\]
where $\mu_i\in\{1,2,3\}$ and $n\leqslant3$. In what follows, the right hand side of a matrix equation $A_{\mu_1\cdots\mu_n}=0$ or a matrix inequality $A_{\mu_1\cdots\mu_n}\ne0$ is the zero matrix of the appropriate dimension. 

In the course of classification, we should investigate two principal cases.

1. $\mathrm{rank}(A_{123})=2$. 
This is the first cases which is partitioned into the three subcases 
\[
\mbox{(a)}\ \det A_{12}\ne0;\quad 
\mbox{(b)}\ \det A_{12}=0,\ \det A_{13}\ne0;\quad 
\mbox{(c)}\ \det A_{12}=0,\ \det A_{13}=0.
\]
In the last subcase we necessarily have $\det A_{23}\ne0$. 
By means of a change of the basis we at first set $A_{12}=E$, $A_{13}=E$ and $A_{23}=E$, respectively. 
Here $E$ is the $2\times 2$ identity matrix. 
If the new $h^2$ is nonvanishing, we set $h^2=-1$ using $\mathscr G_*(U)$ 
with the inverse~$U$ to a solution of the equation $\tilde U_u=-h^2(\tilde U)$.
In other words, up to $G^\sim$-equivalence we can always assume that $-h^2=\delta\in\{0,1\}$. 
We also set $h^1\in\{-1,0\}$ in a similar way if~$h^2=0$. 
Specifically, in subcase~(a) we further use the push-forward of $\mathscr T^x(a^1_3)$ to set $a^1_3=0$. 
As the resulting operators should commute, we derive that $a^2_3=0$ and $h^1_u=0$. 
This case hence leads to the first subalgebra from the list~\eqref{eq:OptimalListTwoDimension}.
In subcase~(b) we re-denote~$a^1_2$ by~$a$. Under the assumptions made, the commutator $[\vv^1,\vv^2]$ equals $-\vv_2$. 
Therefore, the condition $h_2=-1$ implies that $h^1_u=1$, i.e. we can set $h^1=u$ using a change of the basis and the push-forward of $\mathscr T^x(B_0)$
with certain~$B_0$.
This gives the second subalgebra from the list~\eqref{eq:OptimalListTwoDimension}.
If $h_2=0$, we obtain the third subalgebra. 
In subcase~(c), the corresponding subalgebra is commutative and hence $h^1_u=0$. 
Applying a scaling of~$\vv_2$ and the push-forward of~$\mathscr D^x(B_1)$ with certain~$B_1$, 
we simultaneously set $h^1,h^2\in\{-1,0\}$ and hence construct the fourth listed subalgebra. 

2. $\mathrm{rank}(A_{123})\leqslant1$. 
Up to a change of the basis, we can assume that $a^2_1=a^2_2=a^2_3=0$ and hence $h^2\ne0$, 
i.e., analogously to the previous case we can set $h^2=1$ by some $\mathscr G_*(U)$. 
Then up to a linear combining of~$\vv^1$ and~$\vv^2$ the commutation condition $[\vv^1,\vv^2]\in\langle\vv^1,\vv^2\rangle$ implies 
that $h^1_u=b=\const$ and, therefore, we can set $h^1=bu$. 
The four last algebras from the list~\eqref{eq:OptimalListTwoDimension} represent the subcases
\[
\mbox{(a)}\ a^1_1\ne0;\quad 
\mbox{(b)}\ a^1_1=0, a^1_2\ne0;\quad 
\mbox{(c)}\ a^1_1=0, a^1_2=0, a^1_3\ne0;\quad
\mbox{(d)}\ a^1_1=a^1_2=a^1_3=0,
\]
in which by a scaling of~$\vv_1$ we can set $a^1_1=1$, $a^1_2=1$, $a^1_3=1$ and $b=1$, respectively, 
For the basis elements to have the appropriate canonical form, we should additionally set
$a^1_3=0$ by some~$\mathscr T^x_*(B_0)$ and re-denote $a^1_2$ by~$a$ in subcase~(a) 
and also set $a^1_3\in\{-1,0\}$ by some~$\mathscr D^x_*(B_1)$ in subcase~(b)
and $b\in\{-1,0\}$ by some~$\mathscr D^x_*(B_1)$ and a scaling of~$\vv_1$ in subcase~(c).

This completes the proof of the theorem.
\end{proof}

Note that all except the last subalgebras from the lists~\eqref{eq:OptimalListOneDimension} and~\eqref{eq:OptimalListTwoDimension} represent parameterized classes of subalgebras rather than single subalgebras.

\section{Preliminary group classification}\label{sec:PreliminaryGroupClassification}

Based on Proposition~\ref{pro:OnPreliminaryGroupClassification1} and the above classification of subalgebras, 
we can obtain the extensions of the kernel algebra $\langle\p_t\rangle$ within the class~\eqref{eq:GenDiffEqs} 
by projections of inequivalent one- and two-dimensional subalgebras of the equivalence algebra~$\mathfrak g^\sim$ 
to the space of variables $(t,x,u)$. 
As a first step, for each of the subalgebras we solve the associated invariant surface condition for~$(f,g)$, namely, 
the system of equations $\xi f_x+\eta f_u=\varphi$, $\xi g_x+\eta g_u=\theta$, where the operator 
$\tau\p_t+\xi\p_x+\eta\p_u+\varphi\p_f+\theta\p_g$ runs through a basis of the subalgebra.

In Tables~\ref{tab:1DExtensions} and~\ref{tab:2DExtensions} we collect the general solutions of the invariant surface condition for~$(f,g)$ 
(or, in other words, the entire subclass of the corresponding invariant equations), 
which is associated with the one- and two-dimensional subalgebras of~$\mathfrak g^\sim$ listed in~\eqref{eq:OptimalListOneDimension} and~\eqref{eq:OptimalListTwoDimension}, respectively. 
In these tables, $\tilde f$ and $\tilde g$ are arbitrary functions of single arguments and  $c_1$ and $c_2$ are arbitrary constants such that 
$\tilde g\ne0$ and $c_2\ne0$.

\begin{table}[htb]\renewcommand{\arraystretch}{1.3}\belowcaptionskip=.7ex\abovecaptionskip=.0ex
\centering
\caption{One-dimensional Lie symmetry extensions for class~\eqref{eq:GenDiffEqs} related to~$\mathfrak g^\sim$.\label{tab:1DExtensions}}
\begin{tabular}{|c|l|l|l|}
 \hline
 N &\hfil $f$ &\hfil $g$ &\hfil Additional operator \\
 \hline
 1&$\tilde f(u+\delta\ln|x|)x^{2-a}$ & $\tilde g(u+\delta \ln|x|)x^{2-a}$ & $at\p_t+x\p_x-\delta\p_u$ \\
 2a&$\tilde f(u+\delta x)e^{-x} $ & $ \tilde g(u+\delta x)e^{-x} $ & $ t\p_t+\p_x-\delta\p_u$ \\
 2b&$\tilde f(x)e^u $ & $ \tilde g(x)e^u $ & $ t\p_t-\p_u$ \\
 3&$\tilde f(u+\delta x)$ & $\tilde g(u+\delta x)$ & $\p_x-\delta\p_u$ \\
 4&$\tilde f(x) $ & $\tilde g(x)$ &  $ \p_u $ \\
 \hline
\end{tabular}
\end{table}

\begin{table}[htb]\renewcommand{\arraystretch}{1.3}\belowcaptionskip=.7ex\abovecaptionskip=.0ex
\centering
\caption{Two-dimensional Lie symmetry extensions for class~\eqref{eq:GenDiffEqs} related to~$\mathfrak g^\sim$.\label{tab:2DExtensions}}
\begin{tabular}{|c|l|l|l|}
 \hline
 N &\hfil $f$ &\hfil $g$ &\hfil Additional operators \\
 \hline
 1&$ c_1e^ux^{2+\tilde\delta}$ & $ c_2e^ux^{2+\tilde\delta}$ & $x\p_x-\tilde\delta\p_u,\, t\p_t-\p_u$ \\
 2&$ c_1|u+x|^{1-a} $ & $c_2|u+x|^{2-a}$ & $ at\p_t+x\p_x+u\p_u,\,\p_x-\p_u $ \\
3a&$c_1e^{(2-a)u}$ & $c_2e^{(2-a)u}$ & $at\p_t+x\p_x-\p_u,\,\p_x$ \\
3b&$\tilde f(u)$ & $\tilde g(u)$ & $2t\p_t+x\p_x,\,\p_x$ \\
 4&$ c_1e^{u+\tilde\delta x}$ & $c_2e^{u+\tilde\delta x}$ & $t\p_t-\p_u,\, \p_x-\tilde\delta\p_u$ \\
 5&$c_1|x|^{2-a-b}$ & $c_2|x|^{2-a}$ & $at\p_t+x\p_x+bu\p_u,\, \p_u$ \\
 6&$c_1e^{(1+b)x}$ & $c_2e^x$ & $t\p_t-\p_x+bu\p_u,\, \p_u$ \\
 7&$c_1e^{\delta x}$ & $c_2$ & $\p_x-\delta u\p_u,\,\p_u$ \\
 8&$0 $ & $\tilde g(x) $ & $\p_u,\, u\p_u$ \\
 \hline
\end{tabular}
\end{table}

The second algebra from the list of one-dimensional subalgebras~\eqref{eq:OptimalListOneDimension} is associated with a symmetry extension of an equation from the class~\eqref{eq:GenDiffEqs} if and only if at least one of its parameters~$\delta$ and~$\tilde\delta$ does not vanish. In addition, to find the corresponding ansatzes for~$f$ and~$g$ it is necessary to consider different cases of values of the parameters. This is why this subalgebra leads to two cases (2a and 2b) of Table~\ref{tab:1DExtensions}. 
Analogously, equations from the class~\eqref{eq:GenDiffEqs} are invariant with respect to the projections of the first, fourth or sixth algebras from the list of two-dimensional subalgebras~\eqref{eq:OptimalListTwoDimension} if  and only if $\delta\ne0$, i.e., we can assume that $\delta=1$. 
For the third algebra we should have either $\delta\ne0$ (then we can again assume that $\delta=1$) or $(\delta,a)=(0,2)$ that gives Cases~3a and~3b, respectively. 

There are several reasons why Tables~\ref{tab:1DExtensions} and~\ref{tab:2DExtensions} do not give a proper classification result. 
We present these reasons in the form of the following series of remarks. 

\begin{remark}\label{rem:OnEqiuvalenceOfDiffCasesIn2DExtensionsForGenDiffEqs}
As the whole consideration is done up to $G^\sim$-equivalence, we should additionally factorize 
the general solutions of the invariant surface conditions for~$f$ and~$g$ with respect to this equivalence. 
Using transformations from~$G^\sim$, in Table~\ref{tab:2DExtensions} we can set $c_2=1$ (by scaling of~$t$ and alternating its sign) 
and, in Cases~$5_{b=0}$, $6_{b=0}$ and~$7_{\delta=0}$, $c_1=0$ 
(by the transformation~\eqref{eq:TransForGenDiffEqsWithFEqualG} with $c=c_1/c_2$, cf.\ Remark~\ref{rem:OnParticularSiplificationOfGenDiffEqs}).
For the other values of the parameters~$b$ and~$\delta$ in these cases, the constant~$c_1$ can be assumed, up to $G^\sim$-equivalence, to belong to~$\{0,1\}$. If~$a\ne2$ in Case~3, we can scale the value~$2-a$ to~1.
\end{remark}

\begin{remark}\label{rem:OnNonmaximalityOf2DExtensionsForGenDiffEqs1}
Extensions presented in Tables~\ref{tab:1DExtensions} and~\ref{tab:2DExtensions} are not necessarily maximal even for the general values of the parameter-functions~$\tilde f$ and~$\tilde g$ or the constant parameters~$c_1$ and~$c_2$. It lies in the nature of preliminary group classification that equations can admit operators which are not projections of operators of the equivalence algebra. For example, in the last case of Table~\ref{tab:2DExtensions} any corresponding equation is linear and therefore admits an infinite-dimensional Lie invariance algebra including also the operators of the form $\varphi(t,x)\p_u$, where $\varphi$ runs through the set of solutions of the equation under consideration. (Of course, for certain values of~$g$ this equation possesses an even wider Lie invariance algebra, cf.~\cite{lie81Ay,ovsi82Ay}.) 
A similar remark is true for Case~$5_{b=0}$, (resp.\ Case~$6_{b=0}$, resp.\ Case~$7_{\delta=0}$) of Table~\ref{tab:2DExtensions} since each of the equations corresponding to this case is reduced by an equivalence transformation to the linear equation with $f=0$ and $g=|x|^{2-a}$ (resp.\ $g=e^x$, resp.\ $g=1$), cf.\ Remark~\ref{rem:OnEqiuvalenceOfDiffCasesIn2DExtensionsForGenDiffEqs}.
\end{remark}

\begin{remark}\label{rem:OnNonmaximalityOf2DExtensionsForGenDiffEqs2}
What is more essential is that presented extensions are not maximal even among extensions related to subalgebras of~$\mathfrak g^\sim$. 
In particular, Case~$3_{\delta=0}$ of Table~\ref{tab:1DExtensions} coincides by the arbitrary elements with Case~3b of Table~\ref{tab:2DExtensions} and hence should be excluded from the extension list.
Within Table~\ref{tab:2DExtensions}, if $a=2$ the arbitrary elements in Cases~3a and~$5_{b=0}$ coincide with those of Case~$7_{\delta=0}$. Hence in Case~$7_{\delta=0}$ we have the additional operator~$2t\p_t+x\p_x$ induced by the operator $\DDD^x+2\DDD^t$. The algebra presented in Case~$3_{a\ne2}$ is also not maximal, cf.~Case~1 of Table~\ref{tab:34DExtensions}. Cases~4,~6 and~7 admit additional extensions by the operator~$u\p_u$ if $c_1=0$ or $e^{-c_1u/c_2}\p_u$ if $c_1\ne0$ and $b=0$ (resp.\ $\delta=0$), 
owing to the connection of these cases with Case~8 via the transformation~\eqref{eq:TransForGenDiffEqsWithFEqualG}. 
\end{remark}

\begin{remark}
An effect of the lack of maximality of extensions is that under the simplification of the form of arbitrary elements by equivalence transformations the corresponding invariance algebra may be replaced a similar one. Thus, under setting~$c_1=0$ in Cases~$5_{b=0}$, $6_{b=0}$ and~$7_{\delta=0}$ the basis element~$\p_u$ is replaced by $u\p_u$.
\end{remark}

\looseness=-1
In order to complete the preliminary group classification of the class~\eqref{eq:GenDiffEqs}, 
we should at first construct the exhaustive list of $G^\sim$-inequivalent subalgebras of~$\mathfrak g^\sim$ 
whose projections to the space of the variables~$(t,x,u)$ are 
Lie invariance algebras of equations from the class~\eqref{eq:GenDiffEqs}. 
For convenience such subalgebras will be called appropriate. 
Then we should study the problem whether these subalgebras are maximal among 
the subalgebras with the same property for a certain subclass of  the class~\eqref{eq:GenDiffEqs}. 
The majority of one- and two-dimensional subalgebras of~$\mathfrak g^\sim$ are appropriate. 
This is why for subalgebra dimensions one and two it is not too important whether all or only appropriate subalgebras are classified 
but this is not the case for greater dimensions. 
As the arbitrary elements~$f$ and~$g$ depend on two arguments, the condition that the associated projection is a Lie invariance algebra of an equation from the class~\eqref{eq:GenDiffEqs} is a strong restriction for subalgebras of~$\mathfrak g^\sim$ of dimension greater than two and even leads to the boundedness of dimension of such subalgebras.

Let $\mathfrak g^\sim_1=\langle\mathcal \DDD^t,\GG(h)\rangle$, where $h$ runs through the set of smooth functions of~$u$.
For a subalgebra~$\mathfrak s$ of~$\mathfrak g^\sim$, we denote $\dim\mathfrak s\cap\mathfrak g^\sim_1$ by~$m_{\mathfrak s}$.

\begin{lemma}\label{lem:OnAppropriateAlgebras}
$\DDD^t\not\in\mathfrak s$ and $m_{\mathfrak s}\leqslant2$ for any appropriate subalgebra~$\mathfrak s$ of~$\mathfrak g^\sim$.
\end{lemma}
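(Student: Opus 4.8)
The statement has two parts: first that $\DDD^t\notin\mathfrak s$ for any appropriate subalgebra~$\mathfrak s$, and second that $m_{\mathfrak s}=\dim(\mathfrak s\cap\mathfrak g^\sim_1)\leqslant2$. Both should follow from examining what the invariant surface condition for~$(f,g)$ imposes, using the fact that $f$ and~$g$ are functions of only two arguments~$(x,u)$. The key observation is Proposition~\ref{pro:OnPreliminaryGroupClassification1}: if $\mathfrak s$ is appropriate, then there is some $\theta^0=(f^0,g^0)\in\mathcal S$ with $\theta=\theta^0$ invariant under~$\mathfrak s$, i.e.\ the system $\xi f_x+\eta f_u=\varphi$, $\xi g_x+\eta g_u=\theta$ (for the arbitrary-element components $\varphi,\theta$ dictated by~\eqref{eq:EquivalenceAlgebra}) has a solution with $g^0\ne0$ on an open set of the $(x,u)$-plane.

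\textbf{Step 1: $\DDD^t\notin\mathfrak s$.} I would argue by contradiction. Since $\DDD^t=t\p_t-f\p_f-g\p_g$ has vanishing $(x,u)$-components but nonzero arbitrary-element components, the invariant surface condition for~$\DDD^t$ reads simply $0=-f$ and $0=-g$, i.e.\ $f^0=g^0=0$ on the relevant domain. But $g^0\ne0$ is required for membership in~$\mathcal S$ (the nonvanishing condition $\Sigma=g$), a contradiction. Hence no appropriate subalgebra can contain~$\DDD^t$. (One must be slightly careful: an element of $\mathfrak s$ of the form $\DDD^t+\text{(something with trivial arbitrary-element components)}$ is not excluded by this argument, but the claim is only about $\DDD^t$ itself, which is what Table~\ref{tab:1DExtensions} and the subalgebra lists make relevant — indeed in the optimal lists $\DDD^t$ always appears paired with a nonzero~$\GG(\delta)$ or a~$\p_x$.)

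\textbf{Step 2: $m_{\mathfrak s}\leqslant2$.} The subalgebra $\mathfrak s\cap\mathfrak g^\sim_1$ consists of elements of~$\mathfrak g^\sim_1=\langle\DDD^t,\GG(h)\rangle$ lying in~$\mathfrak s$. By Step~1 any such element has zero $\DDD^t$-component (if the $\DDD^t$-coefficient were nonzero, one could rescale and then, since $[\DDD^t,\GG(h)]=0$... actually more carefully: such an element would be $\DDD^t+\GG(h)$ for some~$h$, and its invariant surface condition is $h f_u = -(f+h_uf+h_{uu}g)$, $hg_u=-g$; the second equation gives $g^0=g^0(x)e^{-\sigma(u)}$ with $\sigma_u=1/h$, which is fine, so one cannot immediately exclude it — I should instead note that after the $G^\sim$-reductions in the subalgebra classification such an element is not of the canonical form, or argue directly). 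Setting that subtlety aside, $\mathfrak s\cap\mathfrak g^\sim_1$ reduces to a subalgebra of $\{\GG(h)\}\cong$ the Lie algebra of vector fields $h(u)\p_u$ on the line. A set $\{\GG(h^1),\dots,\GG(h^r)\}$ lies in an appropriate subalgebra only if the joint invariant surface conditions $h^i f_u=-(h^i_u f+h^i_{uu}g)$, $h^i g_u=-g$ for $i=1,\dots,r$ admit a common solution with $g^0\ne0$. The equations $h^i g^0_u=-g^0$ force $h^i/h^j=\text{const}$ whenever both are nonvanishing on a common open interval; and if some $h^i\equiv0$ then $\GG(h^i)=0$. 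Hence at most one of the $h^i$ can be nonzero up to scalar multiples — wait, this would give $m_{\mathfrak s}\leqslant1$, which is too strong. The resolution: $\GG(0)=0$ is not allowed as a basis element, so any nontrivial $\GG(h)\in\mathfrak s$ has $h\not\equiv0$; but two such, $\GG(h^1),\GG(h^2)$, with $h^1,h^2$ proportional give the same operator. So it seems $m_{\mathfrak s}\leqslant1$ within the $\GG$-part alone. The bound $m_{\mathfrak s}\leqslant2$ must then come from allowing one genuine $\DDD^t$-carrying element after all; so the correct accounting is that $\mathfrak s\cap\mathfrak g^\sim_1$ is spanned by at most one $\DDD^t$-type element plus... no. I would instead carefully redo this: $\mathfrak g^\sim_1$ has the structure of a semidirect-type object, and the constraint from $g^0\ne0$ being a function of only $x$ and $u$ restricts $\mathfrak s\cap\mathfrak g^\sim_1$ to dimension at most~$2$ — concretely, writing a general element as $c\,\DDD^t+\GG(h)$, the pair of invariant surface equations for two independent such elements, combined with $[\cdot,\cdot]$-closure of~$\mathfrak s$, forces the span of the $h$'s (as functions) together with the $c$'s to be at most two-dimensional, because $g^0$ can depend on at most the one variable~$u$ nontrivially.

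\textbf{Main obstacle.} The delicate point is Step~2: correctly pinning down why $g^0$ being a function of $(x,u)$ alone — rather than of $t$ — limits $\dim(\mathfrak s\cap\mathfrak g^\sim_1)$ to exactly~$2$ and not fewer or more. I expect the proof organizes this by noting that $\GG(h)$ and $\DDD^t$ both act trivially on~$x$, so the invariant surface condition for a subalgebra of~$\mathfrak g^\sim_1$ becomes a system of ODEs in~$u$ (with $x$ a parameter) for~$g^0$, and an $r$-dimensional solution space of such operators corresponds to $g^0$ satisfying $r$ compatible first-order linear ODEs, which over a function of a single variable forces $r\leqslant2$ (two linearly independent $h$'s can annihilate... ). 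I would look up the analogous argument in~\cite{basa01Ay} (referenced as using a similar method) to get the counting exactly right, and I anticipate the author's proof is short, reducing to: $\DDD^t\notin\mathfrak s$ by the nonvanishing of~$g$, and then $m_{\mathfrak s}\leqslant2$ because the stabilizer in $\langle\DDD^t,\GG(h)\rangle$ of a nonzero function $g^0(x,u)$ under the natural action has dimension at most~$2$.
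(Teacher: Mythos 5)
Your Step 1 is correct and is exactly the paper's argument: the invariant surface condition for~$g$ associated with $\DDD^t$ is $0=-g$, contradicting $g\ne0$. (The "subtlety" you flag about elements of the form $\DDD^t+\dots$ is not a defect of Step~1 — the lemma only asserts $\DDD^t\notin\mathfrak s$ — but it does matter for Step~2, where you must allow general elements $\GG(h)+a\,\DDD^t$ of $\mathfrak s\cap\mathfrak g^\sim_1$.)

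Step 2 has a genuine gap, and also an error that derails it. First the error: for a pure $\GG(h)=h\p_u-(h_uf+h_{uu}g)\p_f$ the $\p_g$-component is zero, so the invariant surface condition for~$g$ is $h\,g_u=0$, not $h\,g_u=-g$; the latter is the condition for $\GG(h)+\DDD^t$. Your deduction that two independent $\GG(h^i)$ force $h^1/h^2=\const$ (hence $m_{\mathfrak s}\leqslant1$) rests on this wrong equation and is false — e.g.\ $\langle\GG(1),\GG(u)\rangle$ is appropriate, with $f=0$, $g=\tilde g(x)$. Having reached a bound you knew was wrong, you never recover a proof of $m_{\mathfrak s}\leqslant2$; you end by conjecturing the shape of the argument and deferring to the literature. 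The missing idea is a two-stage linear-algebra argument on the invariant surface conditions themselves. Take three linearly independent $\vv^i=\GG(h^i)+a^i\DDD^t\in\mathfrak s\cap\mathfrak g^\sim_1$. The $g$-conditions $h^ig_u+a^ig=0$ form a homogeneous linear system in the pair $(g_u,g)$, which must have a nonzero solution since $g\ne0$; hence all $2\times2$ minors $h^ia^j-h^ja^i$ vanish, which by linear independence of the $\vv^i$ forces every $a^i=0$ and then $g_u=0$. Now the $f$-conditions $h^if_u+h^i_uf+h^i_{uu}g=0$ form a homogeneous linear system in the triple $(f_u,f,g)$, which again must have a nonzero solution since $g\ne0$; so the determinant of its coefficient matrix — the Wronskian of $h^1,h^2,h^3$ — vanishes, contradicting their linear independence. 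It is precisely this second step (the $3\times3$ Wronskian in the variables $(f_u,f,g)$) that produces the bound $2$ rather than $1$, and it is absent from your proposal.
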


\begin{proof}
Let~$\mathfrak s$ be an appropriate subalgebra of~$\mathfrak g^\sim$. 
Then the system of invariant surface conditions associated with elements of~$\mathfrak s$ should have a solution~$(f^0,g^0)$ with $g^0\ne0$. 
The invariant surface condition for~$g$ associated with the operator~$\DDD^t$ is $g=0$ that contradicts the auxiliary inequality $g\ne0$. 
Hence $\DDD^t\not\in\mathfrak s$. 

In what follows, the indices~$i$ and~$j$ run from~1 to~3.
Suppose that the subalgebra~$\mathfrak s$ contains at least three linearly independent elements from~$\mathfrak g^\sim_1$, 
$\vv^i=\GG(h^i)+a^i\DDD^t$. The corresponding invariant surface conditions for~$g$ form the system $h^ig_u+a^ig=0$. 
We consider it as a homogenous system of linear algebraic equations with respect to $(g_u,g)$. This system should have a nonzero solution since $g\ne0$. 
Therefore $h^ia^j-h^ja^i=0$. 
In view of the linear independence of~$\vv^1$, $\vv^2$ and~$\vv^3$, this implies that all $a^i=0$ and thus $g_u=0$. 
Now we interpret the system of invariant surface conditions $h^if_u+h^i_uf+h^i_{uu}g=0$ for~$f$ 
as a homogenous system of linear algebraic equations with respect to $(f_u,f,g)$. 
As $g\ne0$, this system should possess a nonzero solution and hence the determinant of its matrix vanishes. 
At the same time, the determinant coincides with the Wronskian of the linearly independent functions~$h^1$, $h^2$ and~$h^3$, which is not equal to zero. 
The contradiction obtained implies that $m_{\mathfrak s}\leqslant2$.
\end{proof}

\begin{corollary}\label{cor:OnAppropriateAlgebras1}
Any appropriate subalgebra~$\mathfrak s$ of~$\mathfrak g^\sim$ is of dimension not greater than four. 
\end{corollary}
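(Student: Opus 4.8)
The plan is to bound the dimension of an appropriate subalgebra $\mathfrak s$ of $\mathfrak g^\sim$ by combining Lemma~\ref{lem:OnAppropriateAlgebras} with the structure $\mathfrak g^\sim=\langle\p_t\rangle\lsemioplus\langle\DDD^x,\DDD^t,\p_x,\GG(h)\rangle$ noted in Remark~\ref{rem:StructureOfEquivalenceAlgebra}. Since $\p_t$ is an ideal in $\mathfrak g^\sim$ and projects to the kernel $\langle\p_t\rangle$ which is contained in every $\mathfrak g_\theta$, it suffices to work inside the essential subalgebra $\mathfrak g^\sim_{\rm ess}=\langle\DDD^x,\DDD^t,\p_x,\GG(h)\rangle$; an appropriate subalgebra of $\mathfrak g^\sim$ has its essential part of dimension at most one larger after removing a possible $\p_t$ component. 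The key structural fact is that $\mathfrak g^\sim_{\rm ess}$ is spanned by the two ``geometric'' generators $\DDD^x,\p_x$ acting on $x$, and the infinite-dimensional piece $\langle\GG(h)\rangle$ together with $\DDD^t$; and that $\mathfrak g^\sim_1=\langle\DDD^t,\GG(h)\rangle$ is itself a subalgebra (one checks $[\DDD^t,\GG(h)]=\GG(h)$ from the commutation relations, so $\mathfrak g^\sim_1$ is a subalgebra).

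First I would observe that for any appropriate $\mathfrak s$, the intersection $\mathfrak s\cap\mathfrak g^\sim_1$ has dimension $m_{\mathfrak s}\leqslant 2$ by Lemma~\ref{lem:OnAppropriateAlgebras}, and $\DDD^t\notin\mathfrak s$. Next I would project $\mathfrak s$ modulo $\mathfrak g^\sim_1$ (using that $\mathfrak g^\sim_1$ is, in fact, an ideal of $\mathfrak g^\sim_{\rm ess}$, which one verifies from the three commutation relations: $\DDD^x$ commutes with $\DDD^t$ and with all $\GG(h)$, and $[\p_x,\DDD^t]=0$, $[\p_x,\GG(h)]=0$, $[\DDD^x,\DDD^t]=0$). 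Hence $\mathfrak g^\sim_{\rm ess}/\mathfrak g^\sim_1\cong\langle\DDD^x,\p_x\rangle$ is two-dimensional, so the image of $\mathfrak s$ under the quotient map has dimension at most $2$. Therefore
\[
\dim\mathfrak s \;\leqslant\; \dim(\mathfrak s\cap\mathfrak g^\sim_1) + \dim\big(\mathfrak s/(\mathfrak s\cap\mathfrak g^\sim_1)\big)\;\leqslant\; m_{\mathfrak s} + 2 \;\leqslant\; 2+2 \;=\;4,
\]
where the middle inequality uses that $\mathfrak s/(\mathfrak s\cap\mathfrak g^\sim_1)$ embeds into $\mathfrak g^\sim_{\rm ess}/\mathfrak g^\sim_1$. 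This gives the claimed bound $\dim\mathfrak s\leqslant 4$, provided $\mathfrak s\subset\mathfrak g^\sim_{\rm ess}$; for general appropriate $\mathfrak s\subset\mathfrak g^\sim$ one notes that projecting away the $\p_t$-component loses at most nothing relevant — more precisely, since $\langle\p_t\rangle$ is an ideal and $\DDD^t\notin\mathfrak s$, one can argue that $\mathfrak s$ decomposes compatibly, so the $4$-dimensional bound still holds. (Alternatively: since $\p_t$ projects into the kernel which is already accounted for, and $\DDD^t\notin\mathfrak s$ forces the $t$-direction to be exhausted by $\p_t$ alone, the essential part of $\mathfrak s$ still has dimension $\leqslant 4$, hence $\dim\mathfrak s\leqslant 4$.)

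The main obstacle I anticipate is the bookkeeping around the $\p_t$-component: one must be careful that appending $\p_t$ to an essential subalgebra does not push the dimension to $5$. The resolution is that $\DDD^t\notin\mathfrak s$ (from Lemma~\ref{lem:OnAppropriateAlgebras}) together with the fact that $\p_t$ and $\DDD^t$ are the only generators with a $\p_t$-component means the ``$t$-scaling'' slot is unavailable; combined with $m_{\mathfrak s}\leqslant 2$ — which already counts the one admissible $\GG$-direction plus at most... wait, $\mathfrak g^\sim_1$ contains $\DDD^t$ which is excluded, so effectively $m_{\mathfrak s}\leqslant 2$ bounds the $\GG$-part alone — and the at most two directions from $\langle\DDD^x,\p_x\rangle$ and the one direction $\p_t$, a slightly more careful count gives $\dim\mathfrak s\leqslant m_{\mathfrak s}+2+0$, since $\p_t\in\mathfrak s$ would have to be compensated. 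A clean way to finish is: every appropriate $\mathfrak s$ contains $\langle\p_t\rangle=\hat{\mathfrak g}^\cap$ (as it must extend the kernel), so write $\mathfrak s=\langle\p_t\rangle\lsemioplus\mathfrak s_{\rm ess}$ with $\mathfrak s_{\rm ess}\subset\mathfrak g^\sim_{\rm ess}$ appropriate in the same sense; the displayed inequality then gives $\dim\mathfrak s_{\rm ess}\leqslant m_{\mathfrak s_{\rm ess}}+2\leqslant 4$, but one still needs $\dim\mathfrak s\leqslant 4$ not $5$. This forces a sharper analysis showing $\DDD^t\notin\mathfrak s$ actually improves the quotient bound, or that $m_{\mathfrak s}\leqslant 2$ must be spent partly on the $\DDD^x$-$\p_x$ plane — this interplay is the real content and I would expect the author's proof to make precisely this refinement explicit.
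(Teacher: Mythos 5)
Your core count is exactly the paper's proof: every element of $\mathfrak s\setminus\mathfrak g^\sim_1$ has nonzero projection onto $\langle\DDD^x,\p_x\rangle$, so $\dim\mathfrak s\leqslant\dim\langle\DDD^x,\p_x\rangle+m_{\mathfrak s}\leqslant2+2=4$ by Lemma~\ref{lem:OnAppropriateAlgebras}. The long detour about a possible extra $\p_t$-slot at the end is a red herring rather than a genuine gap: by Remark~\ref{rem:StructureOfEquivalenceAlgebra} the kernel ideal $\langle\p_t\rangle$ is split off once and for all, and the classification of appropriate subalgebras is carried out inside $\mathfrak g^\sim_{\rm ess}=\langle\DDD^x,\DDD^t,\p_x,\GG(h)\rangle$; this is the convention under which Corollary~\ref{cor:OnAppropriateAlgebras1} and Theorem~\ref{thm:InequivalentThreeFourDimensionalSubalgebras} are stated (none of the listed three- and four-dimensional subalgebras contains $\p_t$), and the paper's own one-line proof makes the same count with no $\p_t$ term. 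So no ``sharper analysis'' is needed --- just take $\mathfrak s\subset\mathfrak g^\sim_{\rm ess}$ from the outset. Two minor corrections: you do not need $\mathfrak g^\sim_1$ to be an ideal, since a pure vector-space dimension count via the linear projection onto $\langle\DDD^x,\p_x\rangle$ (whose restriction to $\mathfrak s$ has kernel $\mathfrak s\cap\mathfrak g^\sim_1$) already suffices, which is what the paper does; and $[\DDD^t,\GG(h)]=0$ rather than $\GG(h)$ --- the paper's list of nonzero commutators contains only $[\p_x,\DDD^x]=\p_x$, $[\p_t,\DDD^t]=\p_t$ and $[\GG(h^1),\GG(h^2)]=\GG(h^1h^2_u-h^2h^1_u)$ --- though this does not affect your (correct) conclusion that $\mathfrak g^\sim_1$ is a subalgebra.
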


\begin{proof}
The projection of any element from $\mathfrak s\setminus\mathfrak g^\sim_1$ to $\langle\DDD^x,\p_x\rangle$ should be nonzero. 
Therefore, $\dim\mathfrak s\leqslant\dim\langle\DDD^x,\p_x\rangle+m_{\mathfrak s}=4$.
\end{proof}

\begin{corollary}\label{cor:OnAppropriateAlgebras2}
$\mathfrak s\cap\mathfrak g^\sim_1=\mathfrak s\cap\langle\GG(h)\rangle$ for any appropriate subalgebra~$\mathfrak s$ of~$\mathfrak g^\sim$ with~$m_{\mathfrak s}=2$, where $h$ runs through the set of smooth functions of~$u$. 
\end{corollary}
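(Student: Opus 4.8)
The statement asserts an equality of two subspaces, and only one inclusion is nontrivial: since $\langle\GG(h)\rangle$ is a subalgebra of $\mathfrak g^\sim_1$, we trivially have $\mathfrak s\cap\langle\GG(h)\rangle\subseteq\mathfrak s\cap\mathfrak g^\sim_1$. To prove the reverse inclusion it suffices to exhibit a basis of the two-dimensional space $\mathfrak s\cap\mathfrak g^\sim_1$ all of whose elements are of the form $\GG(h)$; equivalently, one must show that the $\DDD^t$-component of every element of $\mathfrak s\cap\mathfrak g^\sim_1$ vanishes.

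So I would fix a basis $\vv^1,\vv^2$ of $\mathfrak s\cap\mathfrak g^\sim_1$ and write $\vv^i=\GG(h^i)+a^i\DDD^t$, where $h^i=h^i(u)$ and $a^i\in\R$. Because $\mathfrak s$ is appropriate, the system of invariant surface conditions associated with a basis of $\mathfrak s$ admits a solution $(f^0,g^0)$ with $g^0\ne0$; in particular, reading off the invariant surface conditions for $g$ determined by $\vv^1$ and $\vv^2$ exactly as in the proof of Lemma~\ref{lem:OnAppropriateAlgebras}, one gets $h^1g^0_u+a^1g^0=0$ and $h^2g^0_u+a^2g^0=0$.

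Next I would view these two relations as a homogeneous system of linear equations in the unknowns $(g^0_u,g^0)$. Since the nonvanishing condition $g\ne0$ is imposed pointwise on the whole domain, the vector $(g^0_u,g^0)$ is nowhere zero, hence the determinant of the coefficient matrix must vanish identically, $a^2h^1-a^1h^2=0$ as a function of $u$. By linearity of $\GG$ in its argument, the combination $a^2\vv^1-a^1\vv^2$ then equals $\GG(a^2h^1-a^1h^2)=0$, and the linear independence of $\vv^1$ and $\vv^2$ forces $a^1=a^2=0$. Therefore $\vv^1=\GG(h^1)$ and $\vv^2=\GG(h^2)$ both lie in $\langle\GG(h)\rangle$, whence $\mathfrak s\cap\mathfrak g^\sim_1=\langle\vv^1,\vv^2\rangle\subseteq\mathfrak s\cap\langle\GG(h)\rangle$, which is the inclusion we needed.

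The argument is short and is essentially the two-dimensional analogue of the computation in Lemma~\ref{lem:OnAppropriateAlgebras}, so no real obstacle is expected. The one step deserving a word of justification is the passage from $g^0\ne0$ to the identity $a^2h^1-a^1h^2\equiv0$: it relies on reading $g\ne0$ as nonvanishing everywhere on the domain, together with the observation that this functional identity is precisely the statement that the vector field $a^2\vv^1-a^1\vv^2$ is zero.
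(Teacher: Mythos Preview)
Your argument is correct and matches the paper's proof essentially line for line: both take a basis $\vv^i=\GG(h^i)+a^i\DDD^t$ of $\mathfrak s\cap\mathfrak g^\sim_1$, read off the invariant surface conditions $h^ig_u+a^ig=0$, use $g\ne0$ to force the determinant $h^1a^2-h^2a^1$ to vanish identically, and then invoke linear independence of $\vv^1,\vv^2$ to conclude $a^1=a^2=0$. Your extra sentence making the combination $a^2\vv^1-a^1\vv^2=\GG(a^2h^1-a^1h^2)=0$ explicit is a nice clarification of the last step, but the approach is the same.
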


\begin{proof}
As $m_{\mathfrak s}=2$, the subalgebra~$\mathfrak s$ contains two linearly independent elements from~$\mathfrak g^\sim_1$, 
$\vv^i=\GG(h^i)+a^i\DDD^t$, $i=1,2$. 
Analogously to the proof of Lemma~\ref{lem:OnAppropriateAlgebras}, we consider the system of the invariant surface conditions $h^ig_u+a^ig=0$ for~$g$ associated with~$\vv^i$ as a homogenous system of linear algebraic equations with respect to $(g_u,g)$, which has a nonzero solution since $g\ne0$. Therefore, the determinant of its matrix equal zero, $h^1a^2-h^2a^1=0$. 
In view of the linear independence of~$\vv^1$ and~$\vv^2$, this implies that $a^1=a^2=0$. 
\end{proof}

As we have classified all one- and two-dimensional subalgebras of~$\mathfrak g^\sim$, it is enough to describe appropriate subalgebras only of dimensions greater than~2.

\begin{theorem}\label{thm:InequivalentThreeFourDimensionalSubalgebras}
A complete list of $G^\sim$-inequivalent appropriate subalgebras of~$\mathfrak g^\sim$ is exhausted by the following subalgebras:
\begin{gather*}
    \langle\DDD^x+\GG(2),\p_x,\DDD^t-\GG(1)\rangle, \quad
    \langle\DDD^x+2\DDD^t+b\GG(u),\p_x,\GG(1)\rangle, \\
    \langle\DDD^x+a\DDD^t,\GG(1),\GG(u)\rangle, \quad 
    \langle\p_x-\delta\DDD^t,\GG(1),\GG(u)\rangle, \quad 
    \langle\DDD^x+2\DDD^t,\p_x,\GG(1),\GG(u)\rangle,
\end{gather*}
where $a$, $b$ and $\delta$ are constants and we can assume that $\delta\in\{0,1\}$.
\end{theorem}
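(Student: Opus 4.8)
The plan is to combine the dimension bound and the structural facts already proved with an explicit, case-by-case lifting argument. By Corollary~\ref{cor:OnAppropriateAlgebras1} every appropriate subalgebra~$\mathfrak s$ satisfies $\dim\mathfrak s\le4$, and since one- and two-dimensional subalgebras are exhausted by Theorems~\ref{thm:InequivalentOneDimensionalSubalgebras} and~\ref{thm:InequivalentTwoDimensionalSubalgebras}, it remains to treat $\dim\mathfrak s\in\{3,4\}$. As the kernel $\langle\p_t\rangle$ is an ideal, I would work, following Remark~\ref{rem:StructureOfEquivalenceAlgebra}, inside $\mathfrak g^\sim_{\rm ess}=\mathfrak l\lsemioplus\langle\GG(h)\rangle$ with $\mathfrak l:=\langle\DDD^x,\DDD^t,\p_x\rangle$, using only those push-forwards from~$G^\sim$ not involving $\mathscr T^t$ or $\mathscr D^t$. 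Since the only nonvanishing brackets among the generators are $[\p_x,\DDD^x]=\p_x$, $[\p_t,\DDD^t]=\p_t$ and $[\GG(h^1),\GG(h^2)]=\GG(h^1h^2_u-h^2h^1_u)$, the subspace $\langle\GG(h)\rangle$ is an ideal commuting with~$\mathfrak l$, so the projection $\pi\colon\mathfrak g^\sim_{\rm ess}\to\mathfrak l$ along~$\langle\GG(h)\rangle$ is a Lie-algebra epimorphism and $\mathscr G_*$ acts trivially on~$\mathfrak l$.

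For an appropriate~$\mathfrak s$ put $\mathfrak h:=\mathfrak s\cap\langle\GG(h)\rangle$, $\bar{\mathfrak s}:=\pi(\mathfrak s)$ and keep the notation $m_{\mathfrak s}=\dim(\mathfrak s\cap\mathfrak g^\sim_1)$. Lemma~\ref{lem:OnAppropriateAlgebras} gives $\DDD^t\notin\mathfrak s$ and $m_{\mathfrak s}\le2$, the argument in the proof of Corollary~\ref{cor:OnAppropriateAlgebras1} gives $\dim\mathfrak s\le2+m_{\mathfrak s}$, and Corollary~\ref{cor:OnAppropriateAlgebras2} shows that $m_{\mathfrak s}=2$ forces $\mathfrak s\cap\mathfrak g^\sim_1=\mathfrak h$ and hence $\dim\mathfrak h=2$. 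Thus the only triples $(\dim\mathfrak h,m_{\mathfrak s})$ compatible with $\dim\mathfrak s\ge3$ are $(0,1)$, $(1,1)$ and $(2,2)$, with $\dim\mathfrak s=3$ in the first two and $\dim\mathfrak s\in\{3,4\}$ in the last. Next I would normalize~$\mathfrak h$ by $\mathscr G_*$-equivalence: because $\langle\GG(h)\rangle$ is isomorphic to the Lie algebra of vector fields $h(u)\p_u$ on the line, which has no two-dimensional abelian subalgebra (a vanishing Wronskian $h^1h^2_u-h^2h^1_u\equiv0$ forces $h^1,h^2$ linearly dependent), the classical classification of its low-dimensional subalgebras reduces $\mathfrak h$ to $\langle\GG(1)\rangle$ if $\dim\mathfrak h=1$ and to $\langle\GG(1),\GG(u)\rangle$ if $\dim\mathfrak h=2$.

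The enumeration then goes through the three cases. For each one I would pick a basis of~$\mathfrak s$ by lifting a normalized basis of the subalgebra $\bar{\mathfrak s}\subseteq\mathfrak l$ — up to $\mathscr T^x_*$- and $\mathscr D^x_*$-equivalence the nonzero subalgebras of the solvable algebra~$\mathfrak l$ are $\langle\p_x\rangle$, $\langle\DDD^x+a\DDD^t\rangle$, $\langle\DDD^t\rangle$, $\langle\p_x,\DDD^x+a\DDD^t\rangle$, $\langle\p_x,\DDD^t\rangle$, $\langle\DDD^x,\DDD^t\rangle$ and~$\mathfrak l$ — and append a basis of~$\mathfrak h$. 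Since $\DDD^x,\DDD^t,\p_x$ commute with every~$\GG(h)$, the commutator of two lifts equals $[\DDD^x,\p_x]=\p_x$ (when it occurs) plus a $\GG$-term, and the closure requirement $[\vv^i,\vv^j]\in\mathfrak s$ constrains each such $\GG$-term, forcing the function parameters to belong to $\langle1,u\rangle$ when $\dim\mathfrak h=2$ and to be constant or proportional to a single fixed function when $\dim\mathfrak h\le1$; in particular it shows that any~$\bar{\mathfrak s}$ containing~$\DDD^t$ would force $\DDD^t\in\mathfrak s$, which is excluded. A change of basis removes the superfluous $\DDD^t$- and $\GG$-components, the residual $\mathscr T^x_*$, $\mathscr D^x_*$ and affine $\mathscr G_*$ transformations normalize the surviving $\p_x$-components and coefficients, and finally I would impose appropriateness by solving the invariant-surface system $\xi f_x+\eta f_u=\varphi$, $\xi g_x+\eta g_u=\theta$ (running over a basis) for $(f,g)$ with $g\ne0$: whenever $\langle\GG(1)\rangle$ or $\langle\GG(1),\GG(u)\rangle$ lies in~$\mathfrak s$ this yields $f=0$ and $g$ independent of~$u$, and compatibility with the scaling operator $\DDD^x+a\DDD^t$ then forces $a=2$. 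Case~$(0,1)$ gives the first listed algebra, case~$(1,1)$ the second, case~$(2,2)$ with $\dim\bar{\mathfrak s}=1$ the third and the fourth (according as $\bar{\mathfrak s}$ is generated by $\DDD^x+a\DDD^t$ or by $\p_x-\delta\DDD^t$), and case~$(2,2)$ with $\dim\bar{\mathfrak s}=2$ the fifth. A short verification that the resulting algebras are pairwise $G^\sim$-inequivalent — using $\dim\mathfrak s$, $m_{\mathfrak s}$, the $\DDD^t$-content modulo $\langle\GG(h)\rangle$, and the scaling exponent read off from the induced~$g$ — completes the proof.

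The main obstacle is organizational rather than conceptual: one has to keep precise track of which equivalence transformations remain available at each stage, in particular that $\mathscr T^t$ and $\mathscr D^t$ are forbidden, so that the parameters~$a$ in the $\DDD^x+a\DDD^t$ families and~$b$ in the $\GG(u)$ family are genuinely essential and are not spuriously normalized away, while at the same time confirming, in each of the finitely many branches, that the first-order linear system for $(f,g)$ really does admit a solution with $g\ne0$. The algebraic closure computations and the normalizations themselves are routine once this bookkeeping is in place.
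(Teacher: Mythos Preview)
Your overall strategy coincides with the paper's: both proofs split on the value of $m_{\mathfrak s}$, normalize the $\GG$-part first, then pin down the remaining $\langle\DDD^x,\DDD^t,\p_x\rangle$-part and finally use the invariant-surface conditions to fix the leftover constants (e.g., forcing $h^2=2$ in the first algebra and $a=2$ in the second and fifth). Your use of the projection $\pi\colon\mathfrak g^\sim_{\rm ess}\to\mathfrak l$ together with the pair $(\dim\mathfrak h,m_{\mathfrak s})$ is a clean repackaging of the same case analysis.

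There is, however, an internal inconsistency that you should repair. You assert that ``any~$\bar{\mathfrak s}$ containing~$\DDD^t$ would force $\DDD^t\in\mathfrak s$, which is excluded,'' and you use this to prune the list of candidate $\bar{\mathfrak s}$. But in your own case $(0,1)$ one has $\dim\bar{\mathfrak s}=\dim\mathfrak s-\dim\mathfrak h=3$, hence $\bar{\mathfrak s}=\mathfrak l$ and $\DDD^t\in\bar{\mathfrak s}$ necessarily. Here $\DDD^t$ lifts to $\DDD^t-\GG(1)\in\mathfrak s$, and precisely because $\mathfrak h=0$ there is nothing to subtract to produce $\DDD^t$ itself; the first listed subalgebra lives exactly here. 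Your exclusion principle is valid only when $\dim\mathfrak h\ge1$: in case $(1,1)$ closure with $\GG(1)$ forces the $\GG$-part of any lift of $\DDD^t$ to be affine, and then the $m_{\mathfrak s}=1$ bound (not merely closure) rules it out; in case $(2,2)$ closure with $\GG(1),\GG(u)$ forces the $\GG$-part into $\langle1,u\rangle$, and subtraction yields $\DDD^t\in\mathfrak s$. So the statement should be restricted to $m_{\mathfrak s}=\dim\mathfrak h$, and case $(0,1)$ handled separately with $\bar{\mathfrak s}=\mathfrak l$. A minor related point: your enumerated list of subalgebras of~$\mathfrak l$ omits $\langle\p_x-\delta\DDD^t\rangle$ with $\delta\ne0$, even though you correctly use it later for the fourth algebra; include it to keep the enumeration honest.
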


\begin{proof}
Let~$\mathfrak s$ be an appropriate subalgebra of~$\mathfrak g^\sim$ and $\dim\mathfrak s\geqslant3$. Then, $m_{\mathfrak s}>0$. Hence we should consider only the cases $m_{\mathfrak s}=1$ and $m_{\mathfrak s}=2$. 

The condition $m_{\mathfrak s}=1$ means that the subalgebra $\mathfrak s$ contains exactly one operator of the form $\vv^1=\GG(h^1)+a^1_2\DDD^t$, where $h^1\ne0$, in view of Lemma~\ref{lem:OnAppropriateAlgebras}. By scaling of~$\vv^1$ we can set $a^1_2=-1$ if $a^1_2\ne0$. Moreover, as the function-parameter~$h^1$ does not vanish it can be set to~$1$ upon using $\mathscr G_*(U)$ with the inverse~$U$ to a solution $\tilde U=\tilde U(u)$ of the equation $\tilde U_u=h^1(\tilde U)$. As a result we have two $G^\sim$-inequivalent forms for $\vv^1$: (i) $\vv^1=\GG(1)-\DDD^t$, (ii) $\vv^1=\GG(1)$. The conditions $\dim\mathfrak s\geqslant3$ and $m_{\mathfrak s}=1$ simultaneously imply that $\dim\mathfrak s=3$. This is why we should have two more linearly independent operators of the form $\vv^i=a^i_1\DDD^x+a^i_2\DDD^t+a^i_3\p_x + \GG(h^i)$, $i=2,3,$ from $\mathfrak s\setminus\mathfrak g^\sim_1$ for which $\rank(a^i_1,a^i_3)_{i=2,3}=2$, cf.\ the proof of Corollary~\ref{cor:OnAppropriateAlgebras1}. By linear combining of~$\vv^2$ and~$\vv^3$ we set $a^2_1=a^3_3=1$ and $a^2_3=a^3_1=0$. 

In subcase~(i) we additionally subtract $a^i_2\vv^1$ from~$\vv^i$ to obtain $a^i_2=0$ in the new operator~$\vv^i$, $i=2,3$. The simplified form of $\vv^2$ and $\vv^3$ is $\vv^2=\DDD^x+\GG(h^2)$ and $\vv^3=\DDD^t+\GG(h^3)$, respectively. As $\mathfrak s$ is a Lie algebra and $\{\vv^1,\vv^2,\vv^3\}$ is a basis of $\mathfrak s$, any commutator of $\vv$'s should lie in their linear span. This in particular implies that the operators $\vv^2$ and $\vv^3$ should commute with~$\vv^1$, which is equivalent to the conditions $h^2_u=0$ and $h^3_u=0$. Then, the commutator $[\vv^2,\vv^3]$ equals $\p_x$, which should belong to $\mathfrak s$. Therefore, $h^3=0$. The complete system of invariant surface conditions associated with $\{\vv^1,\vv^2,\vv^3\}$ has a solution with nonvanishing $g$ if and only if $h^2=2$. As a result we obtain the first listed subalgebra. 

Analogously, in subcase~(ii) we have $[\vv^1,\vv^i]=\GG(h^i_u)=b^i\vv^1$, where $b^i=\const$, $i=2,3$, i.e., up to linear combining of $\vv^i$ with $\vv^1$, $h^i=b^iu$. The condition $[\vv^2,\vv^3]=\p_x\in\mathfrak s$ yields that $a^3_2=b^3=0$. In order to provide the requested compatibility of the entire system of the associated invariant surface conditions with the inequality $g\ne0$, we necessarily have $a^2_2=2$. Re-denoting $b^2=b$, we recover the second subalgebra from the above~list. 

If $m_{\mathfrak s}=2$, the subalgebra~$\mathfrak s$ contains two linearly independent operators $\vv^i=\GG(h^i)$, $i=1,2$. Similarly to case 2d of the proof of Theorem~\ref{thm:InequivalentTwoDimensionalSubalgebras}, we can assume up to $G^\sim$-equivalence and a change of the basis in $\langle\vv^1,\vv^2\rangle$ that $h^1=1$ and $h^2=u$. Consider any $\vv=a_1\DDD^x+a_2\DDD^t+a_3\p_x+\GG(h)$ from the complement to $\langle\GG(1),\GG(u)\rangle$ in~$\mathfrak s$. Lemma~\ref{lem:OnAppropriateAlgebras} implies that $(a_1,a_3)\ne(0,0)$. Therefore, $[\vv^i,\vv]\in\langle\vv^1,\vv^2\rangle$. The last condition is equivalent to $h_u,uh_u{-}h\in\langle1,u\rangle$. Consequently, we obtain $h\in\langle1,u\rangle$. Hence, up to linear combining with elements from $\langle\GG(1),\GG(u)\rangle$, we can always assume that $h=0$. In other words, the subalgebra $\mathfrak s$ can be represented as a direct sum of the algebra $\langle\GG(1),\GG(u)\rangle$ and a subalgebra of $\mathfrak g^\sim_2=\langle\DDD_x,\DDD_t,\p_x\rangle$. $G^\sim$-inequivalent subalgebras of~$\mathfrak g^\sim_2$ that do not contain the operator~$\DDD^t$ are exhausted by the algebras $\langle\DDD^x+a\DDD^t\rangle$, $\langle\p_x-\delta\DDD^t\rangle$ and $\langle\DDD^x+a\DDD^t,\p_x\rangle$, cf.\ the proofs of Theorems~\ref{thm:InequivalentOneDimensionalSubalgebras} and~\ref{thm:InequivalentTwoDimensionalSubalgebras}. In the last subalgebra, owing to the required compatibility of the system of invariant surface conditions associated with~$\mathfrak s$ we have~\mbox{$a=2$}. 

This completes the proof of the theorem.
\end{proof}

\looseness=-1
The symmetry extensions induced by subalgebras from Theorem~\ref{thm:InequivalentThreeFourDimensionalSubalgebras} are collected in Table~\ref{tab:34DExtensions}, where $a$ is an arbitrary constant, $a\ne2$. 
Note that the extension induced by the second subalgebra is not maximal among extensions related to~$\mathfrak g^\sim$. This is why we do not include it into Table~\ref{tab:34DExtensions}. The general solution of the associated system of invariant surface conditions is $f=c_1$ and $g=c_2$, where~$c_1$ and~$c_2$ are arbitrary constants, $c_2\ne0$. Such values of arbitrary elements correspond to the potential Burgers equation or the linear heat equation if $c_1\ne0$ or $c_1=0$, respectively. 
The linear heat equation is given by Case~4 of Table~\ref{tab:34DExtensions} and the potential Burgers equation, which additionally possesses the Lie symmetry operator $e^{-c_1u/c_2}\p_u$ induced by $\GG(e^{-c_1u/c_2})$, is reduced to the same case by a transformation similar to~\eqref{eq:TransForGenDiffEqsWithFEqualG}, 
cf.\ Remark~\ref{rem:OnNonmaximalityOf2DExtensionsForGenDiffEqs2}. 
Analogously, we should choose $\delta=1$ in the fourth subalgebra for the associated extension to be maximal.  

\begin{table}[htb]\renewcommand{\arraystretch}{1.4}\abovecaptionskip=0ex\belowcaptionskip=.7ex
\centering
\caption{Lie symmetry extensions for class~\eqref{eq:GenDiffEqs} related to~$\mathfrak g^\sim$ of dimension greater than two.\label{tab:34DExtensions}}
\begin{tabular}{|c|l|l|l|}
 \hline
 N &\hfil $f$ &\hfil $g$ &\hfil Additional operators \\
 \hline
 1&$ce^u$ & $e^u$ & $x\p_x+2\p_u,\,\p_x,\,t\p_t-\p_u$ \\
 2&$0$ & $|x|^{2-a}$ & $at\p_t+x\p_x,\,\p_u,\,u\p_u$ \\
 3&$0$ & $e^x$ & $t\p_t-\p_x,\,\p_u,\,u\p_u$ \\
 4&$0$ & $1$ & $2t\p_t+x\p_x,\,\p_x,\,\p_u,\,u\p_u$ \\
 \hline
\end{tabular}
\end{table}

Summing up the whole consideration of the present paper, we prove the following theorem:

\begin{theorem}\looseness=-1
The complete preliminary group classification of class~\eqref{eq:GenDiffEqs} is split into Tables~\ref{tab:1DExtensions}--\ref{tab:34DExtensions}, 
where $\delta\ne0$ in Case~3 of Table~\ref{tab:1DExtensions} and 
in Table~\ref{tab:2DExtensions} we should globally set $c_2=1$, exclude Case~3a and assume that $\tilde\delta\ne-2$ in Case~1, $\tilde\delta\ne0$ in Case~4, $b\ne0$ in Cases~5 and~6, and $\delta=1$ in Case~7.
\end{theorem}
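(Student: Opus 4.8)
The plan is to assemble the final theorem purely by synthesizing the results already established in Sections~\ref{sec:DeterminingEquationsOfLieSymmetries}--\ref{sec:PreliminaryGroupClassification}, with no new computation beyond consistency checks. First I would recall that the kernel algebra is $\mathfrak g^\cap=\langle\p_t\rangle$ (Section~\ref{sec:DeterminingEquationsOfLieSymmetries}) and that, by Remark~\ref{rem:StructureOfEquivalenceAlgebra}, classifying appropriate subalgebras of~$\mathfrak g^\sim$ reduces to classifying subalgebras of $\mathfrak g^\sim_{\rm ess}=\langle\DDD^x,\DDD^t,\p_x,\GG(h)\rangle$. By Proposition~\ref{pro:OnPreliminaryGroupClassification1} every such subalgebra whose associated invariant surface condition for $(f,g)$ admits a solution with $g\ne0$ yields a Lie symmetry extension of $\langle\p_t\rangle$, and by Proposition~\ref{pro:OnPreliminaryGroupClassification2} the $G^\sim$-inequivalent extensions correspond exactly to the $G^\sim$-inequivalent appropriate subalgebras. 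Thus the content of the final theorem is: the union of Tables~\ref{tab:1DExtensions}--\ref{tab:34DExtensions}, after removing redundancies, is a complete and irredundant list.

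The main body of the argument is therefore the pruning of Tables~\ref{tab:1DExtensions} and~\ref{tab:2DExtensions} dictated by Remarks~\ref{rem:OnEqiuvalenceOfDiffCasesIn2DExtensionsForGenDiffEqs}--\ref{rem:OnNonmaximalityOf2DExtensionsForGenDiffEqs2} together with Corollary~\ref{cor:OnAppropriateAlgebras1} and Theorem~\ref{thm:InequivalentThreeFourDimensionalSubalgebras}. Concretely, I would proceed case by case: (i) in Table~\ref{tab:1DExtensions}, Case~$3_{\delta=0}$ is subsumed by Case~3b of Table~\ref{tab:2DExtensions}, so $\delta\ne0$ must be imposed in Case~3; (ii) in Table~\ref{tab:2DExtensions}, the normalization freedom from $G^\sim$ (scaling of $t$, and transformation~\eqref{eq:TransForGenDiffEqsWithFEqualG}) lets us fix $c_2=1$ globally and shows that Case~3a with $a=2$, Case~$5_{b=0}$, and Case~$7_{\delta=0}$ all reduce to linear equations already present elsewhere; Case~3b coincides with a genuine larger extension, so Case~3a should be excluded and the parameter restrictions $\tilde\delta\ne-2$ in Case~1, $\tilde\delta\ne0$ in Case~4, $b\ne0$ in Cases~5,~6, $\delta=1$ in Case~7 are forced by maximality. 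For the extensions of dimension greater than two one simply reads off Table~\ref{tab:34DExtensions}, noting that $a\ne2$ avoids overlap with Case~4 and that the second subalgebra of Theorem~\ref{thm:InequivalentThreeFourDimensionalSubalgebras} gives only the already-listed potential Burgers / linear heat equation, hence is omitted.

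Finally I would verify irredundancy: distinct cases correspond to $G^\sim$-inequivalent appropriate subalgebras by Theorems~\ref{thm:InequivalentOneDimensionalSubalgebras}, \ref{thm:InequivalentTwoDimensionalSubalgebras} and~\ref{thm:InequivalentThreeFourDimensionalSubalgebras}, so by Proposition~\ref{pro:OnPreliminaryGroupClassification2} the corresponding subclasses of arbitrary elements are not $G^\sim$-equivalent, except precisely for the coincidences flagged in Remarks~\ref{rem:OnNonmaximalityOf2DExtensionsForGenDiffEqs1}--\ref{rem:OnNonmaximalityOf2DExtensionsForGenDiffEqs2}, which is what the imposed parameter exclusions remove. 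Completeness follows because Corollary~\ref{cor:OnAppropriateAlgebras1} bounds the dimension of any appropriate subalgebra by four, the one- and two-dimensional cases are exhausted by Theorems~\ref{thm:InequivalentOneDimensionalSubalgebras}--\ref{thm:InequivalentTwoDimensionalSubalgebras}, and Theorem~\ref{thm:InequivalentThreeFourDimensionalSubalgebras} handles dimensions three and four.

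The hard part is not any single deduction but the bookkeeping: ensuring that every coincidence among the parameterized families has been detected — in particular the collapses caused by~\eqref{eq:TransForGenDiffEqsWithFEqualG} at special parameter values and the hidden enlargements of the invariance algebra when an equation becomes linear — so that the final list is simultaneously exhaustive and free of duplicates. I would double-check this by confirming that each table entry's ``additional operators'' column indeed spans the full projection of the corresponding appropriate subalgebra and that no entry's arbitrary elements lie in the $G^\sim$-orbit of another entry's after the stated normalizations.
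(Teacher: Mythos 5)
Your proposal is correct and follows essentially the same route as the paper, which itself proves this theorem simply by ``summing up'' the subalgebra classifications (Theorems~\ref{thm:InequivalentOneDimensionalSubalgebras}, \ref{thm:InequivalentTwoDimensionalSubalgebras}, \ref{thm:InequivalentThreeFourDimensionalSubalgebras}), the dimension bound of Corollary~\ref{cor:OnAppropriateAlgebras1}, and the maximality/redundancy analysis of Remarks~\ref{rem:OnEqiuvalenceOfDiffCasesIn2DExtensionsForGenDiffEqs}--\ref{rem:OnNonmaximalityOf2DExtensionsForGenDiffEqs2}. One small imprecision: Case~3a with $a\ne2$ is excluded not because it reduces to a linear equation but because it is absorbed (after rescaling $2-a$ to $1$) by the larger nonlinear extension of Case~1 of Table~\ref{tab:34DExtensions}; only the $a=2$ subcase collapses to the constant-coefficient (linearizable) situation.
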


\begin{remark}\label{rem:OnNadj10Ay}
Table~3 from~\cite{nadj10Ay}, summing up the partial preliminary group classification of the class~\eqref{eq:GenDiffEqs} therein, is incorrect. Neither are all of the equations listed really invariant under the operators presented in the table, nor are these operators proper additional operators in view of the kernel $\langle\p_t\rangle$. The main problem is that the basis element~$\p_t$ of the kernel is involved by linear combining to these additional operators which, moreover, are not linearly independent. 
\end{remark}

The number of inequivalent cases to be investigated under the usage of the entire infinite-dimensional equivalence algebra~$\mathfrak g^\sim$ is rather small. This is due to the greater effectiveness of the adjoint action of the whole equivalence group, which allows for stronger simplifications under classification of inequivalent subalgebras. By using only a finite-dimensional subalgebra of~$\mathfrak g^\sim$ as usually done, the number of cases of extensions to be treated is generally greater. This is one more justification why it is favorable to use complete preliminary group classification rather than partial preliminary group classification.

\section{Conclusion}\label{sec:Conclusion}

The main aim of this paper is a careful explanation of the technique of preliminary group classification, its status in the picture of group classification, its benefits and its limitations. These points are those we consider to be mainly lacking so far. While preliminary group classification is generally attractive due to the relative simplicity of its algorithm, various of the results obtained by now using this approach have only little practical relevance, since they are presented without a detailed analysis of the class of differential equations. Moreover, in various papers only partial preliminary group classification was carried out, without indicating a sound physical justification for the chosen subalgebras of the respective equivalence algebras. Indeed, in some instances this choice might been motivated for the sake of pure mathematical convenience, which counteracts the initial aim of group classification of differential equations.

\looseness=-1
In the present paper we substantially enhance the existing framework of preliminary group classification. 
We show that it is possible and convenient to treat subalgebras of the entire equivalence algebra even in the case if this algebra is infinite dimensional. This is the principal difference compared to existing works on the subject of preliminary group classification, in which the problem is only partially solved by involving classification of subalgebras of a fixed finite-dimensional subalgebra of the equivalence algebra with respect to restricted adjoint actions. Furthermore, it is emphasized that only appropriate subalgebras satisfying certain properties should be classified. 

The algorithm of \emph{complete preliminary group classification} can be summed up as follows:
\begin{itemize}\itemsep=0ex
\item 
Find the equivalence algebra~$\mathfrak g^\sim$ and the equivalence group~$G^\sim$ of the class~$\mathcal L|_{\mathcal S}$ under consideration. 
\item
Classify appropriate subalgebras of~$\mathfrak g^\sim$ up to~$G^\sim$-equivalence, each of which satisfies the properties below:
\begin{itemize}
\item
It contains the kernel algebra~$\mathfrak g^\cap$ of~$\mathcal L|_{\mathcal S}$.
\item
The associated system of invariant surface conditions with respect to the arbitrary elements is compatible.
\item
It is the maximal subalgebra among all subalgebras of~$\mathfrak g^\sim$ that have the same set of solutions for the associated systems of invariant surface conditions.
\end{itemize}
\item
For each of the listed subalgebras, find the general solution of the associated system of invariant surface conditions with respect to the arbitrary elements.
\item
Simplify these solutions using transformations from~$G^\sim$ whose push-forwards to vector fields preserve the corresponding subalgebras of~$\mathfrak g^\sim$, i.e., these transformations lie in the normalizers of the corresponding subgroups of~$G^\sim$.
\end{itemize}

The systematic approach of complete preliminary group classification is exemplified with the class of generalized diffusion equation~\eqref{eq:GenDiffEqs} that was recently attempted to be investigated in~\cite{nadj10Ay} using symmetry tools. Owing to the number of inconveniences of~\cite{nadj10Ay}, we regard the class~\eqref{eq:GenDiffEqs} as well-suited to explain the methodology of preliminary group classification. We use both the framework of the infinitesimal and the direct methods to derive the equivalence algebra and the equivalence group of the class~\eqref{eq:GenDiffEqs}. In addition, the direct method also allows us to obtain the classifying equations of admissible transformations. Similarly to the determining equations of Lie symmetries, these classifying equations are too difficult to be solved directly, which at once limits the chance to obtain a complete group classification of the class~\eqref{eq:GenDiffEqs}.

It is important to indicate once more that the extensions of the kernel algebra constructed in this paper by using preliminary group classification are not necessarily maximal. That is, there are various equations in the class~\eqref{eq:GenDiffEqs} which have the maximal Lie invariance algebras wider than the associated subalgebras of the equivalence algebra. This observation is another way of proving that the class~\eqref{eq:GenDiffEqs} is not normalized.

\section*{Acknowledgements}

We are thankful to the referees for helpful suggestions that have led to improvements of the paper.
The research of EDSCB and ROP was supported by the Austrian Science Fund (FWF), project P20632. AB is a recipient of a DOC-fellowship of the Austrian Academy of Sciences.

{\small\setlength{\itemsep}{0ex}

}


\begin{thebibliography}{10}

\bibitem{akha91Ay}
I.~S. Akhatov, R.~K. Gazizov, and N.~K. Ibragimov.
\newblock {Nonlocal symmetries. Heuristic approach}.
\newblock {\em J. Math. Sci.}, 55(1):1401--1450, 1991.

\bibitem{ibra94Ay}
W.~F. Ames, R.~L. Anderson, V.~A. Dorodnitsyn, E.~V. Ferapontov, R.~K. Gazizov,
  N.~H. Ibragimov, and S.~R. Svirshchevskii.
\newblock {\em {CRC handbook of Lie group analysis of differential equations.
  Vol. 1. Symmetries, exact solutions and conservation laws. Edited by N. H.
  Ibragimov}}.
\newblock CRC Press, Boca Raton, 1994.

\bibitem{basa01Ay}
P.~Basarab-Horwath, V.~Lahno, and R.~Zhdanov.
\newblock {The structure of Lie algebras and the classification problem for
  partial differential equations}.
\newblock {\em Acta Appl. Math.}, 69(1):43--94, 2001.

\bibitem{bihl10Ay}
A.~Bihlo.
\newblock {\em Symmetry methods in the atmospheric sciences}.
\newblock PhD thesis, University of Vienna, 2010.

\bibitem{bihl09Ay}
A.~Bihlo and R.~O. Popovych.
\newblock Lie symmetries and exact solutions of the barotropic vorticity
  equation.
\newblock {\em J. Math. Phys.}, 50:123102 (12 pages), 2009.
\newblock arXiv:0902.4099.

\bibitem{blum89Ay}
G.W. Bluman and S.~Kumei.
\newblock {\em {Symmetries and differential equations}}.
\newblock Springer, New York, 1989.

\bibitem{butc03Ay}
J.~Butcher, J.~Carminati, and K.~T. Vu.
\newblock {A comparative study of some computer algebra packages which
  determine the Lie point symmetries of differential equations}.
\newblock {\em Comput. Phys. Comm.}, 155(2):92--114, 2003.

\bibitem{carm00Ay}
J.~Carminati and K.~Vu.
\newblock Symbolic computation and differential equations: {L}ie symmetries.
\newblock {\em J. Symb. Comput.}, 29(1):95--116, 2000.

\bibitem{fush94Ay}
W.~I. Fushchych and R.~O. Popovych.
\newblock Symmetry reduction and exact solutions of the {N}avier--{S}tokes
  equations.
\newblock {\em J. Nonl. Math. Phys.}, 1(1--2):75--113,156--188, 1994.
\newblock arXiv:math-ph/0207016.

\bibitem{head93Ay}
A.~K. Head.
\newblock {LIE}, a {PC} program for {L}ie analysis of differential equations.
\newblock {\em Comput. Phys. Comm.}, 77(2):241--248, 1993.
\newblock (See also http://www.cmst.csiro.au/LIE/LIE.htm).

\bibitem{ibra91Ay}
N.~H. Ibragimov, M.~Torrisi, and A.~Valenti.
\newblock {Preliminary group classification of equations
  $v_{tt}=f(x,v_x)v_{xx}+g(x,v_x)$}.
\newblock {\em J. Math. Phys.}, 32(11):2988--2995, 1991.

\bibitem{ivan10Ay}
N.~M. Ivanova, R.~O. Popovych, and C.~Sophocleous.
\newblock {Group analysis of variable coefficient diffusion-convection
  equations. I. Enhanced group classification}.
\newblock {\em Lobachevskii J. Math.}, 31(2):100--122, 2010.

\bibitem{king98Ay}
J.~G. Kingston and C.~Sophocleous.
\newblock {On form-preserving point transformations of partial differential
  equations}.
\newblock {\em J. Phys. A}, 31(6):1597--1619, 1998.

\bibitem{lahn06Ay}
V.~Lahno, R.~Zhdanov, and O.~Magda.
\newblock {Group classification and exact solutions of nonlinear wave
  equations}.
\newblock {\em Acta Appl. Math.}, 91(3):253--313, 2006.

\bibitem{lie81Ay}
S.~Lie.
\newblock {\"Uber die Integration durch bestimmte Integrale von einer Klasse
  linearer partieller Differentialgleichungen}.
\newblock {\em Arch. for Math.}, 6(3):328--368, 1881.
\newblock (Translation by N.H. Ibragimov: S. Lie, On Integration of a Class of
  Linear Partial Differential Equations by Means of Definite Integrals, {\it
  CRC Handbook of Lie Group Analysis of Differential Equations}, 2:473--508,
  1994).

\bibitem{lie91Ay}
S.~Lie.
\newblock {\em {Vorlesungen \"uber Differentialgleichungen mit bekannten
  infinitesimalen Transformationen}}.
\newblock B.G. Teubner, Leipzig, 1891.

\bibitem{lisl92Ay}
I.~G. Lisle.
\newblock {\em {Equivalence transformations for classes of differential
  equations}}.
\newblock PhD thesis, University of British Columbia, 1992.

\bibitem{maga93Ay}
B.~A. Magadeev.
\newblock {Group classification of nonlinear evolution equations}.
\newblock {\em Algebra i Analiz}, 5:141--156, 1993.
\newblock (in Russian); English translation in {\it St. Petersburg Math. J.}
  5:345--359, 1994.

\bibitem{mele96Ay}
S.~V. Meleshko.
\newblock {Generalization of the equivalence transformations}.
\newblock {\em J. Nonlin. Math. Phys.}, 3(1--2):170--174, 1996.

\bibitem{nadj10Ay}
M.~Nadjafikhah and R.~Bakhshandeh-Chamazkoti.
\newblock {Symmetry group classification for general Burgers' equation}.
\newblock {\em Commun. Nonlinear Sci. Numer. Simul.}, 15(9):2303--2310, 2010.

\bibitem{niki05Ay}
A.~G. Nikitin.
\newblock {Group Classification of Systems of Nonlinear Reaction-Diffusion
  Equations}.
\newblock {\em Ukr. Math. Bull.}, 2:153--204, 2005.

\bibitem{niki06Ay}
A.~G. Nikitin.
\newblock {Group classification of systems of non-linear reaction-diffusion
  equations with general diffusion matrix. I. Generalized Ginzburg-Landau
  equations}.
\newblock {\em J. Math. Anal. Appl.}, 324(1):615--628, 2006.

\bibitem{niki07Ay}
A.~G. Nikitin.
\newblock {Group classification of systems of non-linear reaction-diffusion
  equations with general diffusion matrix. II. Generalized Turing systems}.
\newblock {\em J. Math. Anal. Appl.}, 332(1):666--690, 2007.

\bibitem{niki01Ay}
A.~G. Nikitin and R.~O. Popovych.
\newblock {Group classification of nonlinear Schr{\"o}dinger equations}.
\newblock {\em Ukrainian Math. J.}, 53(8):1255--1265, 2001.

\bibitem{olve86Ay}
P.~J. Olver.
\newblock {\em {Application of Lie groups to differential equations}}.
\newblock Springer, New York, 2000.

\bibitem{olve09By}
P.~J. Olver, J.~Pohjanpelto, and F.~Valiquette.
\newblock {On the structure of Lie pseudo-groups}.
\newblock {\em SIGMA}, 5:paper 077, 14 pp., 2009.

\bibitem{ovsi59Ay}
L.~V. Ovsiannikov.
\newblock {Group properties of nonlinear heat equation}.
\newblock {\em Dokl. AN SSSR}, 125:492--495, 1959.
\newblock in Russian.

\bibitem{ovsi82Ay}
L.~V. Ovsiannikov.
\newblock {\em Group analysis of differential equations}.
\newblock Acad. Press, New York, 1982.

\bibitem{ovsi75Ay}
L.~V. Ovsjannikov and N.~H. Ibragimov.
\newblock {Group analysis of the differential equations of mechanics}.
\newblock In {\em General mechanics}, volume~2, pages 5--52. Moscow, 1975.
\newblock in Russian.

\bibitem{popo06Ay}
R.~O. Popovych.
\newblock {Classification of admissible transformations of differential
  equations}.
\newblock In {\em Collection of Works of Institute of Mathematics}, volume~3,
  pages 239--254. Kyiv, 2006.

\bibitem{popo10Cy}
R.~O. Popovych and A.~Bihlo.
\newblock {Symmetry preserving parameterization schemes}.
\newblock arXiv: 1010.3010v2, 36 pp., 2010.

\bibitem{popo10Ay}
R.~O. Popovych, M.~Kunzinger, and H.~Eshraghi.
\newblock {Admissible transformations and normalized classes of nonlinear
  Schr\"{o}dinger equations}.
\newblock {\em Acta Appl. Math.}, 109(2):315--359, 2010.

\bibitem{popo08Ay}
R.~O. Popovych, M.~Kunzinger, and N.~M. Ivanova.
\newblock Conservation laws and potential symmetries of linear parabolic
  equations.
\newblock {\em Acta Appl. Math.}, 100(2):113--185, 2008.

\bibitem{popo10By}
R.~O. Popovych and O.~O. Vaneeva.
\newblock {More common errors in finding exact solutions of nonlinear
  differential equations: Part I}.
\newblock {\em Commun. Nonlinear Sci. Numer. Simul.}, 15(12):3887--3899, 2010.
\newblock arXiv:0911.1848v2.

\bibitem{roch10Ay}
T.~M. Rocha~Filho and A.~Figueiredo.
\newblock {[SADE] A {M}aple package for the symmetry analysis of differential
  equations}.
\newblock arXiv:1004.3339v2, 2010.

\bibitem{song09Ay}
L.~Song and H.~Zhang.
\newblock {Preliminary group classification for the nonlinear wave equation
  $u_{tt} = f(x,u)u_{xx}+g(x,u)$}.
\newblock {\em Nonlinear Anal.}, 70(10):3512--3521, 2009.

\bibitem{vane09Ay}
O.~O. Vaneeva, R.~O. Popovych, and C.~Sophocleous.
\newblock {Enhanced group analysis and exact solutions of variable coefficient
  semilinear diffusion equations with a power source}.
\newblock {\em Acta Appl. Math.}, 106(1):1--46, 2009.

\bibitem{zhda99Ay}
R.~Z. Zhdanov and V.~I. Lahno.
\newblock {Group classification of heat conductivity equations with a nonlinear
  source}.
\newblock {\em J. Phys. A}, 32:7405--7418, 1999.

\end{thebibliography}
\end{document}